\newcommand{\pdiam}{{player~$\Diamond$}\xspace}
\newcommand{\pbox}{{player~$\Box$}\xspace}
\tikzset{
   n/.style= {circle,fill,inner sep=1.5pt,node distance=2cm}
  ,acc/.style={circle,draw,inner sep=3pt,node distance=2cm}
  ,phantom/.style={circle},
  ,arr/.style={->, >=stealth, semithick, shorten <= 3pt, shorten >= 3pt}
}
\spnewtheorem{defn}[theorem]{Definition}{\bfseries}{\rmfamily}
\renewcommand{\Box}{\square}
\renewcommand{\Diamond}{\lozenge}
\newcommand{\pow}{\mathcal{P}}
\newcommand{\sem}[1]{\llbracket #1 \rrbracket}
\newcommand{\FL}{\mathsf{FL}}
\newcommand{\At}{\mathsf{At}}
\newcommand{\Var}{\mathsf{Var}}
\newcommand{\takeout}[1]{\empty}
\newcommand\ExpTime{$\textsc{ExpTime}$\xspace}
\newcommand{\paritytime}[2]{$\textsc{parity}({#1},{#2})\textsc{-Time}$\xspace}
\newcommand\buchi{B\"uchi~}
\newcommand\cbuchi{Co-B\"uchi~}
\begin{document}

\title{A Survey on Satisfiability Checking
for the $\mu$-Calculus through Tree Automata\thanks{This work is
supported by the ERC Consolidator grant D-SynMA (No. 772459).}}
\titlerunning{Surveying $\mu$-Calculus Satisfiability through
	Tree Automata} 
\author{%
  Daniel Hausmann and Nir Piterman
}
\authorrunning{D.~Hausmann and N.~Piterman}

\institute{Gothenburg University, Gothenburg, Sweden
}

\maketitle
\begin{abstract}
	Algorithms for model checking and satisfiability of the
	modal $\mu$-calculus start by converting formulas to
	alternating parity tree automata.
	Thus, model checking is reduced to checking acceptance by
	tree automata and satisfiability to checking their emptiness.
	The first reduces directly to the solution of parity games
	but the second is more complicated.

	We review the non-emptiness checking of alternating tree
	automata by a reduction to
	solving parity games of a certain structure, so-called
	\emph{emptiness games}.
	Since the emptiness problem for alternating tree automata is
	\ExpTime-complete, the size of these games is exponential in
	the number of states of the input automaton.
	We show how the construction of the emptiness games combines
	a (fixed) structural part with \mbox{(history-)}determinization of
	parity word automata.
	For tree automata with certain syntactic structures,
	simpler methods may be used to handle the treatment of the
	word automata, which then may be asymptotically smaller than
	in the general case.

	These results have direct consequences in satisfiability and
	validity checking
	for (various fragments of) the modal $\mu$-calculus.
\end{abstract}
%

\section{Introduction}
\label{section:introduction}

The modal $\mu$-calculus extends modal logic with least and greatest
fixpoint operators \cite{Kozen83}.
The $\mu$-calculus is expressive enough to express many temporal
logics, in particular it can capture CTL$^*$ and its fragments LTL and
CTL \cite{EmersonL86}.
At the same time, the $\mu$-calculus has interesting algorithmic
and algebraic properties.
For example, the $\mu$-calculus model-checking problem is equivalent to
the solution of parity games, a well known (still) open problem
attracting much research.
This combination led to high interest in the $\mu$-calculus, studying
many aspects of the logic.

Here, we are interested in the question of satisfiability of the
$\mu$-calculus.
The problem is \ExpTime-complete and the first
algorithms of this complexity were automata based \cite{EmersonJutla99}.
Much like for other temporal logics, initial treatment of the logic was
done through nondeterministic automata \cite{StreettE89}.
However, later, the richer structure of alternating automata enabled
translations that are more natural and direct \cite{MullerSS88}.
This translation defers the complicated handling of the satisfiability
problem to standard automata constructions.

With this approach,
the satisfiability problem for the modal $\mu$-calculus reduces to the
non-emptiness problem of alternating parity tree automata
\cite{MullerSS88}.
The latter problem is solved either by constructing equivalent
nondeterministic parity tree automata \cite{MullerS87} or by a direct
reduction to two-player perfect information parity games \cite{Wilke01}.

We revisit the reduction and present it in a way that separates the
tree acceptance and the parity acceptance
aspects of a parity tree automaton $A$.
The method creates an arena $G_A$ (\emph{strategy arena}), which
captures all the decisions made at the same location by $A$ and a
nondeterministic parity word automaton $T_A$ (\emph{tracking
automaton}) that ``accepts'' bad branches
in run-trees of the original automaton.
The original automaton then is non-empty if and only
if the combination of $G_A$ with $T_A$ as losing condition is
won by the existential player.
By using a history deterministic (or fully deterministic) word automaton $H_A$ that accepts the same language as
$T_A$, we construct
a parity game $G^*_A$.
\begin{center}
\begin{tikzcd}
&& A \text{ (APT)}
\arrow[ddl, "\text{powerset-like
construction}"']\arrow[dr,"\text{structure preservation}"]\\
& & &
T_A\text{ (NPW)}
\arrow[d,"\text{history determinization / determinization}"']{a}\\
&G_A \text{ (game arena)} \arrow[dr] & & H_A \text{ (HD-PW)}
\arrow[dl]\\
&& G^*_A \text{ (parity game)}
\end{tikzcd}
\end{center}
This approach reduces the algorithmic content of non-emptiness checking
for alternating automata (and for satisfiability checking in the modal
$\mu$-calculus) to a fixed construction of a game arena and
(history-)determinization of word automata that depend on the exact
structure of the original automaton ($\mu$-calculus formula).

We then show that as the structure of $T_A$ strongly depends on the
structure of $A$, specialized history determinization and
determinization constructions lead to complexity results that match
bespoke algorithms for different fragments of the $\mu$-calculus.
These results are summarized in the following table, where LL stands
for limit linear, LD for limit deterministic, N for nondeterministic,
HD for history deterministic, D for deterministic, W for weak, B for B\"uchi, C for
co-B\"uchi, and P for parity.
For example, LL-CW is a limit linear co-B\"uchi word automaton
and LD-WT is a limit deterministic weak tree automaton.

\begin{center}
\begin{tabular}{|c |c |r| l| r| c |}

\hline
\multicolumn{2}{|c|}{}
& type of $T_A$ & method & type of $H_A$ & size of $H_A$\\
\hline
\parbox[t]{2mm}{\multirow{4}{*}{\rotatebox[origin=c]{90}{\quad\,\,\scalebox{0.8}{\cbuchi}}}}
   & \multicolumn{1}{c|}{\multirow{2}{*}{det.}}
& LL-CW & circle method & DCW & $n^2\cdot 2^n$ \\ \cline{3-6}
&& NCW & Miyano-Hayashi & DCW & $3^n$ \\ \cline{2-6}
& {history-det.} & LD-CW & focus method & HD-CW & $n\cdot 2^n$ \\
\hline

\parbox[t]{2mm}{\multirow{4}{*}{\rotatebox[origin=c]{90}{\quad\,\,\scalebox{0.8}{\buchi}}}}   & \multicolumn{1}{c|}{\multirow{2}{*}{det.}}
& LD-BW & permutation method & DPW & $\mathcal{O}(n!)$  \\ \cline{3-6}
&& NBW & Safra-Piterman & DPW & $\mathcal{O}((n!)^2)$ \\ \cline{2-6}
& history-det. & NBW & Henzinger-Piterman & HD-PW & $\mathcal{O}(3^{n^2})$ \\
\hline

\end{tabular}
\end{center}

Going back to the $\mu$-calculus, the following table depicts
relations between various syntactic properties of $\mu$-calculus
formulas (formally defined later) and automata.
Thus, the separated treatment of the arena and the acceptance, and the
different constructions for word automata summarize in one framework
complexity results relating to various different fragments of the
$\mu$-calculus.

\begin{center}
\begin{tabular}{|c |c |c |}

\hline
property of $\varphi$ & type of $A$ & type of $T_A$\\
\hline
limit-linear & LL-WT & LL-CW\\
alternation-free & AWT & NCW\\
aconjunctive alternation-free & LD-WT & LD-CW\\
aconjunctive & LD-PT & LD-BW\\
unrestricted & APT & NBW\\
\hline

\end{tabular}
\end{center}

\section{The Modal $\mu$-Calculus}
\label{section:mu-calculus}

We are concerned with satisfiability checking for different syntactical fragments of
the branching-time $\mu$-calculus, introduced by Kozen~\cite{Kozen83}.

\paragraph{Syntax.} Formulas of the $\mu$-calculus are generated by the following grammar,
where $\mathsf{At}$ and $\mathsf{Var}$ are countable sets of
atoms and fixpoint variables, respectively:
\begin{align*}
\varphi,\psi:= p\mid \neg p \mid \varphi\wedge\psi\mid \varphi\vee\psi \mid \Diamond \varphi\mid\Box\varphi
\mid X\mid \xi X.\,\varphi \quad {p\in\At, X\in\Var,\xi\in\{\mu,\nu\}}
\end{align*}
Fixpoint operators $\mu X.$ and $\nu X.$ \emph{bind} their variable $X$,
giving rise to standard notions of \emph{bound} and \emph{free variables};
for $\mu X.\,\psi$, free occurrences of $X$ in $\psi$ are \emph{least fixpoint variables} and
for $\nu X.\,\psi$, free occurrences of $X$ in $\psi$ are \emph{greatest fixpoint variables}.

The \emph{Fischer-Ladner closure}~\cite{Kozen83} $\FL(\varphi)$ (or just \emph{closure}) of a
closed formula $\varphi$
is the least set of formulas that contains $\varphi$ and is closed under taking subformulas for non-fixpoint
operators and under unfolding for fixpoint operators; e.g.
$\mu X.\,\psi\in\FL(\varphi)$ implies
$\psi[X\mapsto\mu X.\,\psi]\in\FL(\varphi)$, where $\psi[X\mapsto\mu
X.\,\psi]$
is the formula that
is obtained from $\psi$ by replacing every free occurrence of the variable $X$ in $\psi$ by
the formula $\mu X.\,\psi$. We have $\FL(\varphi)\leq |\varphi|$ where $|\varphi|$ is the number of operators
that are required to write $\varphi$ (that is, the number of nodes in the syntax tree of $\varphi$).

A formula $\varphi$ is \emph{clean}, if all fixpoint variables
are bound at most once in it. Then we denote \emph{the} fixpoint formula $\xi X.\,\psi$
that binds a variable $X$ in $\varphi$ by $\theta_\varphi(X)$.
While transforming an arbitrary formula to a clean formula
(by renaming bound variables accordingly) can increase the closure size, a
translation to tree automata that does not rely on cleanness has recently been given in~\cite{KupkeMartiVenema22}. For brevity of presentation we assume throughout
that target formulas are clean but remark that this does not affect the stated complexity
results since the more involved translation from~\cite{KupkeMartiVenema22}
can be used to obtain tree automata (of suitable size and rank) from
arbitrary formulas.

\begin{remark}
Another common constraint on the syntactic structure of formulas is
\emph{guardedness}, which requires that there is always at least one
modal operator between a fixpoint operator and occurrences of the
fixpoint variable that it binds. It is currently an open question whether
there is a guardedness-transformation with polynomial blow-up of the
closure size, and it has been shown that such a polynomial transformation
would yield a polynomial algorithm for parity game
solving~\cite{KupkeMartiVenema22}. Throughout this work, we assume
that formulas are guarded.
\end{remark}

\paragraph{Alternation-depth.}
Given a clean formula $\varphi$, and two subformulas $\xi X.\,\psi$ and
$\xi' Y.\,\chi$ of $\varphi$,
$\xi' Y.\,\chi$ \emph{depends} on $\xi X.\,\psi$ if $X$ has a free occurrence in
$\chi$. We define the \emph{dependent nesting order} $\succeq_\varphi$
to be the partial order on fixpoint subformulas of $\varphi$ obtained by taking the
reflexive-transitive closure of the dependency ordering.
The \emph{alternation-depth} $\mathsf{ad}(\varphi)$ of $\varphi$ then is defined
to be the maximal length of an alternating $\succ_\varphi$-path, where
a $\succ_\varphi$-path is alternating if all its transitions switch the fixpoint type.
Given a fixpoint subformula $\chi=\eta X.\,\psi$ of $\varphi$, let $d$ be
the maximal length of an alternating $\succeq_\varphi$-path
that starts at $\chi$.
We define the \emph{alternation-level} $\mathsf{al}(\chi)$
of $\chi$ to be $2\lceil d/2\rceil-1$ if
$\eta=\mu$ and
$2\lfloor d/2\rfloor$ if $\eta=\nu$.
Hence least fixpoint formulas have odd alternation-level while greatest fixpoint
formulas have even alternation-level; furthermore, we always have
$\mathsf{al}(\chi)\leq \mathsf{ad}(\varphi)$.

\paragraph{Semantics.}
Formulas of the $\mu$-calculus are evaluated over pointed
Kripke structures.
A pointed Kripke structure is $K=(W,w_0,R,L)$, where $W$ is a set
of worlds, $w_0\in W$ is an initial world, $R\subseteq W\times
W$ is a transition relation, and $L:W \rightarrow
\pow(\mathsf{At})$ is a labeling function.
We denote by $R(w)=\{w' \in W \mid (w,w')\in R\}$ the set of
worlds connected by $R$ to $w$.
We restrict attention to structures where for every $w\in W$ we
have $R(w)\neq \emptyset$.

Given a $\mu$-calculus formula $\varphi$ and a pointed Kripke
structure $K$, the semantics of the formula is defined based on a
valuation function $\eta$ assigning each variable appearing in
$\varphi$ to a set of worlds of $K$.
Given such a function $\eta$ we denote by $\eta[X\leftarrow
S]$ the function $\eta'$ where $\eta'(X)=S$ and
$\eta'(Y)=\eta(Y)$ for every $Y\neq X$.
The semantics of the $\mu$-calculus is included in Figure~\ref{fig:mu
calculus semantics}.
\begin{figure}[bt]
\begin{align*}
	\sem{p}_\eta &= \{w ~|~ p\in L(w)\}\\
	\sem{\neg p}_\eta &= \{w ~|~ p\notin L(w)\}\\
	\sem{\varphi\vee\psi}_\eta &= \sem{\varphi}_\eta \cup
\sem{\psi}_\eta\\
	\sem{\varphi\wedge\psi}_\eta &= \sem{\varphi}_\eta \cap
\sem{\psi}_\eta\\
 \sem{\Diamond \varphi}_\eta &= \{ w ~|~ R(w) \cap
 \sem{\varphi}_\eta \neq \emptyset\}\\
 \sem{\Box \varphi}_\eta &= \{ w~|~ R(w) \subseteq
 \sem{\varphi}_\eta\}\\
 \sem{X}_\eta &= \eta(X) \\
 \sem{\mu X.\varphi}_\eta &= \bigcap \{T\subseteq W ~|~
 \sem{\varphi}_{\eta[X\leftarrow T]} \subseteq T\}\\
 \sem{\nu X.\varphi}_\eta &= \bigcup \{T\subseteq W ~|~ T
 \subseteq \sem{\varphi}_{\eta[X\leftarrow T]}\}
\end{align*}
\vspace*{-20pt}
\caption{Semantics of the $\mu$-calculus}
\label{fig:mu calculus semantics}
\vspace*{-20pt}
\end{figure}
It is simple to see that in the case that a formula $\varphi$
is closed its semantics does not depend on the initial
valuation $\eta$. Thus, for a closed formula we write
$\sem{\varphi}$.
A formula $\varphi$ is satisfiable if there exists a structure
$K=(W,w_0,R,L)$ such that $w_0\in \sem{\varphi}$.

\begin{theorem}[\hspace{-0.2pt}\cite{EmersonJutla99}]
Given a $\mu$-calculus formula $\varphi$, deciding whether
$\varphi$ is satisfiable is \ExpTime-complete.
\end{theorem}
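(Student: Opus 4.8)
The plan is to prove both directions of the \ExpTime-completeness separately. For the upper bound, the strategy is to establish that satisfiability of a $\mu$-calculus formula $\varphi$ reduces to the non-emptiness problem for alternating parity tree automata, and that this latter problem is decidable in exponential time. First I would recall the standard translation (originating with \cite{MullerSS88}) that, given a clean guarded formula $\varphi$, produces an alternating parity tree automaton $A_\varphi$ whose states are (roughly) the formulas in $\FL(\varphi)$, whose transition function is read off from the Boolean and modal structure of subformulas, and whose parity (priority) assignment is derived from the alternation-level $\mathsf{al}$ of the fixpoint subformulas; disjunctions and $\Diamond$ become existential choices, conjunctions and $\Box$ become universal choices, and a fixpoint variable loops back to its defining fixpoint formula. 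The key correctness claim is that $A_\varphi$ accepts (a tree encoding of) a pointed Kripke structure $K$ if and only if $w_0\in\sem{\varphi}$; in particular $A_\varphi$ is non-empty iff $\varphi$ is satisfiable. Since $|\FL(\varphi)|\le|\varphi|$ and the number of priorities is bounded by $\mathsf{ad}(\varphi)+1\le|\varphi|+1$, the automaton $A_\varphi$ has size polynomial in $|\varphi|$. Then I would invoke the known bound that non-emptiness of an alternating parity tree automaton with $n$ states and $k$ priorities can be decided in time exponential in $n$ and $k$ (via determinization of the companion word automaton and parity game solving, as sketched in the introduction, or by the classical route through nondeterministic tree automata \cite{MullerS87,EmersonJutla99,Wilke01}); composing the polynomial translation with the exponential decision procedure yields a deterministic exponential-time algorithm for satisfiability.

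For the lower bound, the plan is to exhibit a polynomial-time reduction from a known \ExpTime-hard problem into $\mu$-calculus satisfiability. The natural choice is the acceptance problem for alternating polynomial-space Turing machines, equivalently the word problem for alternating linear-space Turing machines, which is \ExpTime-complete; alternatively, one can reduce directly from the membership problem for alternating Turing machines running in linear space. I would encode a computation of such a machine $M$ on input $x$ (with $|x|=n$, tape bounded by a polynomial $p(n)$) as an infinite tree: nodes carry tape-cell contents and head position/control-state information, and the branching of the tree mirrors the existential and universal branching of $M$. The formula $\varphi_{M,x}$ would be a conjunction of "local consistency" requirements — that successive configurations respect the transition relation of $M$ — expressed using $\Box$/$\Diamond$ and propositional atoms encoding the roughly $p(n)$ tape symbols and the head, together with a fixpoint assertion (a greatest fixpoint for the safety part and a least fixpoint to ensure that every universal branch eventually reaches an accepting configuration). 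The size of $\varphi_{M,x}$ is polynomial in $n$ because one only needs polynomially many atoms and polynomially many conjuncts to describe a single transition step locally. Correctness amounts to checking that $\varphi_{M,x}$ is satisfiable exactly when $M$ accepts $x$.

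The main obstacle, in my estimation, is getting the \emph{lower-bound} encoding genuinely correct and polynomial: one must lay out the tape contents across the successors of a node (or across a block of depth $\log p(n)$, using a binary address for each cell) so that the transition relation of $M$ — which is local on the tape but needs to compare a cell with its two neighbours — can be checked with a formula of polynomial size, and one must carefully interleave the greatest fixpoint (for ongoing validity of the simulation) with the least fixpoint (forcing acceptance on universal branches) so that the alternation faithfully captures the alternating acceptance condition of $M$. Addressing schemes that avoid an exponential blow-up here are the delicate point; I would lean on the address-counter technique standard in such encodings (cf.\ the \ExpTime-hardness proof in \cite{EmersonJutla99}) rather than redo it from scratch. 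On the upper-bound side, the only real subtlety is verifying the correctness of the translation $\varphi\mapsto A_\varphi$ (the fixpoint-to-priority correspondence and the adequacy of the parity condition), but this is a well-trodden construction and I would cite \cite{MullerSS88,EmersonJutla99} for the detailed invariant rather than reprove it.
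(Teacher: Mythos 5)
This theorem is stated in the paper as a citation to \cite{EmersonJutla99} and is not proved there; however, your upper-bound plan coincides exactly with what the paper goes on to develop: the translation $\varphi\mapsto A(\varphi)$ with states $\FL(\varphi)\cup\{\top,\bot\}$ and priorities given by alternation levels (Theorem~\ref{theorem:mucalculus2apt}), followed by an exponential-time emptiness check via the strategy arena, tracking automaton, and parity game (culminating in Corollary~\ref{cor:resultsmu}, whose general-case bound is singly exponential). So on that side your proposal is correct and is essentially the paper's route. The lower bound is likewise standard and your sketch is sound; the one place where you are over-cautious is the ``addressing scheme'' worry: since the alternating machine is polynomially space-bounded, an entire configuration fits into the label of a single tree node using $\mathcal{O}(p(n))$ propositional atoms, and the local consistency of consecutive configurations is a conjunction of $\mathcal{O}(p(n))$ constant-size constraints, so no logarithmic-depth cell-addressing is needed (that device is only required for \ExpSpace or doubly exponential lower bounds). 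Alternatively, since the paper notes that CTL embeds into the limit-linear fragment, \ExpTime-hardness can simply be inherited from CTL satisfiability, which shortens the lower-bound argument further.
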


\paragraph{Fragments of the $\mu$-calculus.}
We consider the following fragments of the $\mu$-calculus:
\begin{itemize}
	\item
	The \emph{limit-linear} fragment of the $\mu$-calculus consists of
	all formulas $\varphi$ such that for all subformulas $\mu X.\,\psi$ of $\varphi$,
    $X$ has exactly one occurrence in $\psi$, and this occurrence is
    not in the scope of a fixpoint subformula of $\psi$. Computation tree logic (CTL)
    is a fragment of the limit-linear $\mu$-calculus.
	\item
	The \emph{alternation-free} fragment of the $\mu$-calculus consists
	of all formulas $\varphi$ such that $\mathsf{ad}(\varphi)\leq 1$.
	It has been shown that satisfiability checking for a guarded alternation-free formula
	of size $n$ can be done by solving a \buchi game of size
	$3^n$~\cite{FriedmannLatteLange13}.
	\item
	The \emph{aconjunctive} fragment of the $\mu$-calculus consists of
	all formulas $\varphi$ such that for all conjunctions
	$\psi\wedge\chi$ that occur as a
	subformula in $\varphi$, at most one of the conjuncts $\psi$ or
	$\chi$ contains
	a free least fixpoint variable.
	Satisfiability checking for a (weakly) aconjunctive formula
	of size $n$ and with $k$ priorities can be done by solving a parity game of size
	$e\cdot(nk)!$ and with $2nk$ priorities~\cite{HausmannEA18}.
	\item
	The \emph{alternation-free aconjunctive} fragment of the $\mu$-calculus is the
	intersection of the alternation-free fragment and the aconjunctive fragment.
	In particular, every limit-linear formula is alternation-free and aconjunctive.
\end{itemize}

\section{Two-Player Games and Alternating Parity Tree Automata}
\label{section:tree automata}
We give background on two-player games and tree automata.
Our notations are based on those developed by
Wilke \cite{Wilke01}.

\begin{defn}
	A \emph{game} is $G=(V,V_\Diamond,V_\Box,E,\alpha)$, where
	$V$ is a set of nodes, $V_\Diamond$ and $V_\Box$ form a
	partition of $V$ to \pdiam and \pbox
	nodes, $E\subseteq V\times V$ is a set of edges, and
	$\alpha\subseteq V^\omega$
	is a winning condition.
	A \emph{play} is a sequence $\pi=v_0,v_1,\ldots$ such that
	for every $i$ we have $(v_i,v_{i+1})\in E$.
	A play $\pi$ is winning for \pdiam if
	$\pi\in\alpha$.
	A strategy for \pdiam is $\sigma:V^*\cdot
	V_\Diamond\rightarrow V$ such that $(v,\sigma(wv))\in E$
	for $wv\in V^*\cdot	V_\Diamond$.
	A play $\pi$ is \emph{compatible} with $\sigma$ if whenever
	$v_i\in V_\Diamond$ we have $v_{i+1}=\sigma(v_0,\ldots,
	v_i)$.
	A strategy for \pdiam is winning from node $v$ if all
	the plays starting in $v$ that are compatible with $\sigma$
	are winning for her.
	Strategies for \pbox are defined similarly.
\end{defn}

	In a \emph{parity} game, there exists a priority function
	$\Omega:V\to\mathbb{N}$ and a play $\pi$ is winning for
	\pdiam if the maximal priority occurring infinitely
	often in $\pi$ is even. A \emph{\buchi} game is a parity
	game with just the priorities $1$ and $2$.
	Given an infinite sequence $\pi \in V^\omega$ let $\mathsf{inf}(\pi)$
	denote the set of
	nodes occuring in $\pi$ infinitely often and put
	$\mathsf{inf}_\Omega(\pi)=\{\Omega(v) ~|~ v\in
	\mathsf{inf}(\pi)\}$.
	Then the parity winning condition induced by $\Omega$ is
    $\alpha = \{\pi\in V^\omega ~|~ \max(\mathsf{inf}_\Omega(\pi)) \mbox{ is
	even}\}$.
	The complexity of analyzing parity games is a hot area of research
	\cite{CaludeEA17,ColcombetF19,CzerwinskiDFJLP19}.
	Here we denote by \paritytime{n}{k} the time complexity of solving
	parity
	games with $n$ nodes and $k$ priorities.
	We do not refer to space complexity, however, a similar general
	dependency on space can be stated.
	Our results produce parity games of different parameters depending on
	the exact shape of a $\mu$-calculus formula.
	We hence use this parametric form to give complexity results.

	\begin{theorem}[Parity and \buchi games
	\cite{CaludeEA17,ChatterjeeH12}]
	Parity games with $n$ nodes and $k$ priorities can be solved in time
	quasipolynomial
	in $n$ and $k$, more specifically\footnote{This improved bound has been shown
	 in~\cite{JurdzinskiMorvan20}} in time
	$n^{2\log(k/\log n)+\mathcal{O}(1)}$, and in polynomial time if $k<\log n$.
	\buchi~games with $n$ nodes can be solved in time $\mathcal{O}(n^2)$.
	\end{theorem}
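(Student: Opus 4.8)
The plan is to prove the two bounds separately, handling the B\"uchi case first since it relies only on classical fixpoint techniques, and then sketching the quasipolynomial machinery on which the parity bound rests.

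\emph{B\"uchi games.} I would use the attractor-based fixpoint characterisation of the winning region. Writing $\mathsf{CPre}(S)$ for the controllable predecessors of $S$ (the \pdiam-nodes with some successor in $S$ together with the \pbox-nodes all of whose successors lie in $S$), the set of nodes from which \pdiam can force infinitely many visits to the priority-$2$ set $F$ is the nested fixpoint $\nu Y.\,\mu X.\,(F\cap\mathsf{CPre}(Y))\cup\mathsf{CPre}(X)$. Correctness follows by unfolding: membership in the inner least fixpoint means \pdiam can force a visit to $F\cap\mathsf{CPre}(Y)$, and the outer greatest fixpoint then guarantees this can be repeated forever while staying inside the winning region. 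The inner attractor is computable in time linear in the number of edges, and the outer iteration is monotone decreasing, so it stabilises after at most $n$ rounds; a direct implementation gives $\mathcal{O}(nm)$. The $\mathcal{O}(n^2)$ bound is obtained by the incremental bookkeeping of Chatterjee and Henzinger~\cite{ChatterjeeH12}: work spent attracting a node in one outer round is not repeated in later rounds, so the total cost telescopes.

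\emph{Parity games.} Here I would go through the progress-measure / universal-tree picture that underlies all known quasipolynomial algorithms. First, recall positional determinacy: if \pdiam wins, she wins with a positional strategy $\sigma$, and such a $\sigma$ admits a ranking of the vertices into a finite ordered tree of depth $\lceil k/2\rceil$ that certifies that every $\sigma$-consistent cycle has even maximal priority. Second, such a certificate embeds into any fixed $(n,\lceil k/2\rceil)$-\emph{universal} tree, and the combinatorial heart of the argument is that universal trees of quasipolynomial size $n^{\mathcal{O}(\log k)}$ exist, shrinking to polynomial size when $k<\log n$~\cite{CzerwinskiDFJLP19,ColcombetF19}. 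Third, the Jurdzi\'nski-style monotone lifting algorithm computes the least labelling of vertices by nodes of this universal tree as a least fixpoint over a finite lattice: each lift is cheap, and the number of lifts is bounded by the lattice size, giving total running time $n^{2\log(k/\log n)+\mathcal{O}(1)}$ once the tree size is optimised as in Jurdzi\'nski and Morvan~\cite{JurdzinskiMorvan20}. The Calude et al.\ ``play summary'' automaton~\cite{CaludeEA17} can be read as one concrete realisation of such a bounded-size safety gadget, which was the first route to the quasipolynomial bound.

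The main obstacle is the combinatorial estimate on universal trees: proving that $(n,h)$-universal trees of size $n^{\mathcal{O}(\log h)}$ exist (and identifying the precise threshold $k<\log n$ below which the size becomes polynomial) is the nontrivial step, and it is also what pins down the exponent in the stated bound. By contrast, positional determinacy, the monotonicity and termination of the lifting iteration, and the linear cost of attractor computations are all routine. Since in this survey all three results are imported, the actual write-up would cite \cite{CaludeEA17,ChatterjeeH12,JurdzinskiMorvan20} (with the universal-tree perspective drawn from \cite{CzerwinskiDFJLP19,ColcombetF19}) and present the tree picture only as the unifying explanation.
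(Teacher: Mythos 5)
The paper never proves this theorem: it is an imported result in a survey, stated with citations to \cite{CaludeEA17,ChatterjeeH12} (and, for the sharpened exponent, \cite{JurdzinskiMorvan20}) and used downstream only as a black box via the \paritytime{n}{k} notation. Your sketch is therefore not competing with an in-paper argument but reconstructing the cited literature, and it does so faithfully. For \buchi games, the characterisation of the winning region as $\nu Y.\,\mu X.\,((F\cap\mathsf{CPre}(Y))\cup\mathsf{CPre}(X))$ with a linear-time inner attractor and at most $n$ outer rounds gives the naive $\mathcal{O}(nm)$ bound, and the incremental maintenance of attractors across outer rounds is indeed the idea behind the $\mathcal{O}(n^2)$ algorithm of \cite{ChatterjeeH12}. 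For parity games, the progress-measure/universal-tree account is the accepted unifying view of the quasipolynomial algorithms, and you correctly locate the one genuinely nontrivial step: the existence of $(n,h)$-universal trees of quasipolynomial size (polynomial when $k<\log n$), whose optimisation in \cite{JurdzinskiMorvan20} pins down the exponent $2\log(k/\log n)+\mathcal{O}(1)$. What your approach buys over the paper's bare citation is an explanation of \emph{why} the two bounds have the shapes they do and where the difficulty sits; what it does not supply --- and honestly flags as deferred --- are the two substantial pieces (the amortised $\mathcal{O}(n^2)$ analysis and the universal-tree size bounds), each of which is a paper's worth of work. For a survey that imports the theorem wholesale, deferring exactly those steps to \cite{ChatterjeeH12} and \cite{JurdzinskiMorvan20} is the right call, so there is no gap to repair, only a difference in how much of the machinery is exposed.
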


	We sometimes consider games where edges are labeled.
	In such a case, there exists a set of labels $D$ and we have
	$E\subseteq V\times D \times V$.
	Then, $\alpha$ can be a subset of $V\cdot (D\cdot
	V)^\omega$.

	When the winning condition is not important, we call
	$(V,V_\Diamond,V_\Box,E)$ an \emph{arena}.

\begin{defn}
An \emph{alternating parity tree automaton} $A=(\Sigma,
Q,q_0,\delta,\Omega)$
consists of a finite alphabet $\Sigma$, a finite set of states $Q$, an
initial state $q_0\in Q$, a
transition function $\delta: Q \times \Sigma \to
\mathcal{P}(Q)$, and a priority
function $\Omega:Q\to\mathbb{N}$; furthermore, each state
$q\in Q$ is marked as either local-existential,
local-universal, modal-existential or modal-universal
(denoted by $q\in Q_\vee$, $q\in Q_\wedge$, $q\in Q_\Diamond$,
and $q\in Q_\Box$, respectively).
We also denote $Q_l=Q_\vee \cup Q_\wedge$, where $l$ stands
for local, and $Q_m=Q_\Diamond \cup Q_\Box$, where $m$ stands
for modal.
Without loss of generality, we assume that modal-existential
and modal-universal states have exactly
one successor, formally: $|\delta(q)|=1$ if $q\in Q_\Diamond$
or $q\in Q_\Box$.
We overload $\delta(q)$ to denote $q'$ if $\delta(q)=\{q'\}$.
The \emph{rank} of $A$ is the number $|\Omega(Q)|$ of priorities appearing
in its priority function.
We assume that $A$ does not have local loops.
That is, for every letter $\sigma$ and for every sequence of states
$q_1,\ldots, q_l \in Q_l^+$ such
that for every $i\geq 1$ we have $q_{i+1}\in \delta(q_i,\sigma)$ we
have $q_l\neq q_1$.

A tree automaton is \emph{weak} if for all its strongly connected components $C$,
either all states in $C$ have priority $0$ or all states in $C$ have priority $1$.
A weak tree automaton is \emph{limit-linear} if for each $q\in Q$ such that
$\Omega(q)=1$, there is exactly one path from $q$ to $q$.
That is, within rejecting strongly connected components the looping behavior is deterministic.
A tree automaton is \emph{limit-deterministic} if for each odd priority $p$ and
each state $q$ such that $\Omega(q)=p$, we have that for all $a\in \Sigma$ and
all states $q'\in Q_\wedge$
that are reachable from $q$ by visiting nodes with priority at most $p$,
$|\delta(q',a)\cap Q_{\leq p}|\leq 1$, where $Q_{\leq p}=\{q\in Q\mid \Omega(q)\leq p\}$.
In particular, every limit-linear automaton is limit-deterministic.

Alternating tree automata read \emph{pointed Kripke structures}.
An alternating tree automaton $A$ accepts a Kripke structure
$K=(W,w_0,R,L)$ if \pdiam wins the node $(w_0,q_0)$ in the acceptance game $G_{A,K}$.
Formally, $G_{A,K}=(V,V_\Diamond,V_\Box,E,\alpha)$, where
$V=W \times Q$, $V_\Diamond = W \times (Q_\vee \cup
Q_\Diamond)$, $V_\Box = W\times (Q_\wedge \cup Q_\Box)$,
$\alpha$ is induced by the priority function
$\Omega'(w,q)=\Omega(q)$, and $E$ is defined as follows.
$$
\begin{array}{r c l}
	E & = & \{ ((w,q),(w,q')) ~|~ q\in Q_l \mbox{ and } q'\in
	\delta(q,L(w))\} \quad \cup \\
	& & \{((w,q),(w',q')) ~|~ q\in Q_m, q'\in \delta(q,L(w)),
	\mbox { and } w'\in R(w)\}
	\end{array}
$$
An automaton $A$ is \emph{non-empty} if there exists a Kripke
structure that it accepts.
\end{defn}

We now state (the well known result) that given a
$\mu$-calculus formula, we can
construct an alternating tree automaton accepting exactly the
models of the formula.

\begin{defn}[Formula automaton]
Given a closed and clean 
$\mu$-calculus formula $\varphi$
that mentions atoms $\mathsf{A}\subseteq\mathsf{At}$,
we define an alternating parity tree automaton
$A(\varphi)=(\Sigma,Q,q_0,\delta,\Omega)$ by putting
$\Sigma=\pow(\mathsf{A})$, $Q=\FL(\varphi)\cup\{\top,\bot\}$, and
$q_0=\varphi$.
We define a partial priority function
$\Omega':Q\rightharpoonup \{0,\ldots,\mathsf{ad}(\varphi)\}$
by putting $\Omega'(\eta X.\,\psi)=\mathsf{al}(\theta(X))$
(recalling that $\theta(X)$ is \emph{the} subformula of
$\varphi$ that binds $X$),
$\Omega'(\bot) = 1$ and $\Omega'(\top)=0$; then $\Omega'$ assigns
a priority to at least one state on each cycle in $A(\varphi)$.
The total priority function $\Omega:V\to\{0,\ldots,\mathsf{ad}(\varphi)\}$
is obtained by putting, for each state $q\in Q$ such that $\Omega'(q)$ is undefined,
$\Omega(q)=p$ where $p$ is the minimum priority such that all paths from $q$ to $q$ visit
priority at most $p$; states that do not belong to a strongly connected component
obtain priority $0$. Furthermore, we put
\begin{align*}
\delta(q,P)&=
\begin{cases}
\{\psi_0,\psi_1\}
& \text{if } q=\psi_0\wedge\psi_1 \text{ or } q=\psi_0\vee\psi_1\\
\{\psi\}
& \text{if } q=\Diamond\psi\text{ or }q=\Box\psi\\
\{\psi[X\mapsto\eta X.\,\psi]\}
& \text{if } q=\eta X. \,\psi\\
\{\top\}
& \text{if } q=p\text{ and }p\in P\text{ or }q=\neg p\text{ and }p\notin P\\
\{\bot\}
& \text{if } q=p\text{ and }p\notin P\text{ or }q=\neg p\text{ and }p\in P\\
\{q\} & \text{if } q=\top\text{ or }q=\bot
\end{cases}
\end{align*}
for $q\in Q$, $P\in\Sigma$.
Finaly, we put
\begin{align*}
Q_\exists &= \{\psi_0\vee\psi_1, \eta X.\,\psi\in \FL(\varphi)\}\cup\{\bot\}&
Q_\Diamond &= \{\Diamond\psi\in \FL(\varphi)\}\\
Q_\forall &= \{\psi_0\wedge\psi_1, p, \neg p\in \FL(\varphi)\}\cup\{\top\} &
Q_\Box &= \{\Box\psi\in \FL(\varphi)\}.
\end{align*}
\end{defn}

\begin{theorem}[\hspace{-0.2pt}\cite{Wilke01,KupkeMartiVenema22}]
We have $L(A(\varphi))=\{(W,w_0,R,L)\mid
w_0\in\sem{\varphi}\}$.
Furthermore, $|Q|\leq |\FL(\varphi)|+2$ and $A(\varphi)$ has rank
$\mathsf{ad}(\varphi)+1$.
\label{theorem:mucalculus2apt}
\end{theorem}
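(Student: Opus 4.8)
The bounds on size and rank can be read straight off the construction, so the plan is to dispatch these first and then concentrate on the language equality. Since $Q=\FL(\varphi)\cup\{\top,\bot\}$ we get $|Q|\le|\FL(\varphi)|+2$ at once. For the rank, $\Omega'$ takes values in $\{0,\ldots,\mathsf{ad}(\varphi)\}$ because $\mathsf{al}(\theta(X))\le\mathsf{ad}(\varphi)$ for every bound variable $X$, whereas the completion of $\Omega$ on the remaining states only ever reuses a priority already present on a cycle (the least one on some self-reaching path) or the priority $0$; hence $\Omega(Q)\subseteq\{0,\ldots,\mathsf{ad}(\varphi)\}$ and the rank is at most $\mathsf{ad}(\varphi)+1$. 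I would then observe that all of these priorities are realised by tracing a maximal alternating $\succ_\varphi$-chain of fixpoint subformulas, along which $\mathsf{al}$ runs through all of $\{1,\ldots,\mathsf{ad}(\varphi)\}$, with $\top$ contributing $0$.

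For the language equality, fix a pointed Kripke structure $K=(W,w_0,R,L)$. The first step is to recognise that the acceptance game $G_{A(\varphi),K}$ is a presentation of the standard evaluation (model-checking) parity game for $\varphi$ on $K$: positions are pairs of a world and a closure formula; \pdiam resolves the branching at $\vee$- and $\Diamond$-positions, \pbox resolves it at $\wedge$- and $\Box$-positions; every fixpoint position $\eta X.\,\psi$ has the unique forced successor $\psi[X\mapsto\eta X.\,\psi]$; an atomic position $(w,p)$, resp.\ $(w,\neg p)$, leads to the accepting sink $\top$ when $p\in L(w)$, resp.\ $p\notin L(w)$, and to the rejecting sink $\bot$ otherwise; and the priority $\mathsf{al}(\theta(X))$ of a fixpoint position is odd iff $\eta=\mu$ and is strictly larger for $\succeq_\varphi$-outer fixpoints that alternate in type. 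From this one reads off that on any infinite play the dominating priority equals the alternation level of the $\succeq_\varphi$-greatest fixpoint formula unfolded infinitely often, and that the play is won by \pdiam exactly when that formula is a greatest-fixpoint formula. It then remains to prove that $w_0\in\sem{\varphi}$ if and only if \pdiam wins the position $(w_0,\varphi)$ in $G_{A(\varphi),K}$.

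For the left-to-right implication I would use the approximant characterisation $\sem{\mu X.\,\psi}_\eta=\bigcup_\lambda S_\lambda$, with $S_0=\emptyset$ and $S_{\lambda+1}=\sem{\psi}_{\eta[X\leftarrow S_\lambda]}$. Given $w_0\in\sem{\varphi}$, let \pdiam play so as to ``stay in the truth set'': at a $\vee$-position pick a disjunct that still holds at the current world, and at a $\Diamond$-position pick an $R$-successor satisfying the body --- such choices exist by the semantics. Attach to each reachable position a \emph{signature}: the tuple, indexed by the least-fixpoint subformulas currently in scope and ordered by decreasing alternation level and then by $\succeq_\varphi$, whose entries record the least $\lambda$ with the current world in $S_\lambda$ for the corresponding least fixpoint. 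The crux is to check that along a play consistent with this strategy the signature never increases lexicographically, and strictly decreases whenever the $\succeq_\varphi$-greatest infinitely-often-unfolded fixpoint is revisited while being a least-fixpoint formula; since signatures lie in a well-order this forces the dominating formula to be a greatest-fixpoint formula, so \pdiam wins. For the converse I would fix, by positional determinacy of parity games, a positional winning \pdiam strategy from $(w_0,\varphi)$, and then prove by induction on the $\succeq_\varphi$-nesting of fixpoint subformulas (with an inner structural induction) that for every subformula $\psi$ the set of worlds $w$ with $(w,\psi)$ winning for \pdiam equals $\sem{\psi}_\eta$ for the environment $\eta$ sending each free variable $X$ of $\psi$ to the winning-world set of $\theta(X)$. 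The Boolean and modal cases just unfold the game rules; the fixpoint cases reduce to showing that the winning-world set of $\eta X.\,\psi$ is a fixpoint of $T\mapsto\sem{\psi}_{\eta[X\leftarrow T]}$ that is the least such when $\eta=\mu$ (reading ranks off the positional strategy) and the greatest such when $\eta=\nu$ (since from outside any post-fixpoint \pbox could force the play to stay out forever). Instantiating $\psi=\varphi$ gives the equivalence, hence the claimed description of $L(A(\varphi))$.

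The routine part is the Boolean/modal bookkeeping and the handling of closure formulas; the main obstacle is the fixpoint analysis in both directions, and within it the delicate point that unfoldings of inner fixpoints must not ``reset'' the progress already made on outer ones --- which is precisely what the priority choice $\Omega(\eta X.\,\psi)=\mathsf{al}(\theta(X))$ is engineered to record. This is the classical but genuinely subtle heart of the correctness proof for formula automata; as an alternative to carrying out the signature argument in detail one may simply invoke~\cite{Wilke01}, and~\cite{KupkeMartiVenema22} for the variant that dispenses with cleanness.
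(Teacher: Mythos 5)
The paper does not actually prove this theorem: it is stated as a well-known result and delegated entirely to the cited sources~\cite{Wilke01,KupkeMartiVenema22}, so there is no in-paper argument to diverge from. Your sketch is a correct outline of precisely the classical proof found in those references --- the size and rank bounds read off the construction, and the language equality obtained by identifying the acceptance game $G_{A(\varphi),K}$ with the model-checking game and then arguing soundness via signatures/approximants (where, to be fully precise, \pdiam must choose disjuncts and successors so as to \emph{minimize} the signature, not merely to stay in the truth set) and completeness via positional determinacy plus induction on the fixpoint nesting.
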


\begin{corollary}
	Deciding if a Kripke structure with set of worlds $W$ satisfies a $\mu$-calculus formula $\varphi$ is
	in \paritytime{|W|\cdot (\FL(\varphi)+2)}{\mathsf{ad}(\varphi)+1}.
\end{corollary}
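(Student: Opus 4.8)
The plan is to simply unfold the relevant definitions and invoke Theorem~\ref{theorem:mucalculus2apt}. Given $\varphi$ and a pointed Kripke structure $K=(W,w_0,R,L)$, I would first form the formula automaton $A=A(\varphi)=(\Sigma,Q,q_0,\delta,\Omega)$. By Theorem~\ref{theorem:mucalculus2apt} we have $w_0\in\sem{\varphi}$ if and only if $K\in L(A(\varphi))$, and by the definition of acceptance for alternating parity tree automata, $K\in L(A)$ if and only if \pdiam wins the node $(w_0,q_0)$ in the acceptance game $G_{A,K}=(V,V_\Diamond,V_\Box,E,\alpha)$, where $\alpha$ is the parity winning condition induced by $\Omega'(w,q)=\Omega(q)$. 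Thus the model-checking question is, verbatim, an instance of parity-game solving, with no re-indexing or priority compression needed since the game inherits its priorities directly from $\Omega$.

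Second, I would read off the parameters of $G_{A,K}$. Its node set is $V=W\times Q$, so $|V|=|W|\cdot|Q|\le |W|\cdot(|\FL(\varphi)|+2)$ by the size bound of Theorem~\ref{theorem:mucalculus2apt}. The priorities occurring in the game are exactly those in the image $\Omega(Q)$, whose cardinality is the rank of $A$, namely $\mathsf{ad}(\varphi)+1$, again by Theorem~\ref{theorem:mucalculus2apt}. One should also check that $G_{A,K}$ is a well-formed game in the sense required by the definition, i.e.\ that every node has an outgoing edge: for modal states this follows from our standing restriction that $R(w)\neq\emptyset$ for all $w\in W$ together with $|\delta(q,L(w))|=1$; for local states it follows from $\delta(q,L(w))\neq\emptyset$ and the absence of local loops (so plays eventually leave $Q_l$ and remain infinite).

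Finally, since $G_{A,K}$ together with the distinguished node $(w_0,q_0)$ can be constructed in time linear in $|V|+|E|$, which is polynomial in $|W|\cdot(|\FL(\varphi)|+2)$ and hence absorbed into the parity-solving time, deciding $w_0\in\sem{\varphi}$ reduces to solving a parity game with $|W|\cdot(|\FL(\varphi)|+2)$ nodes and $\mathsf{ad}(\varphi)+1$ priorities. By the definition of the class $\paritytime{n}{k}$, this places the problem in $\paritytime{|W|\cdot (\FL(\varphi)+2)}{\mathsf{ad}(\varphi)+1}$, as claimed. There is no genuine obstacle in this argument; the only point requiring care is the bookkeeping that matches the node count and priority count of the acceptance game to the two parameters in the statement, and the observation that the game's parity condition is literally the one pushed forward from $A$'s priority function.
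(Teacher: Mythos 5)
Your proposal is correct and follows exactly the intended derivation: the paper states this corollary without proof precisely because it is the immediate combination of Theorem~\ref{theorem:mucalculus2apt} with the definition of the acceptance game $G_{A,K}$, whose node set $W\times Q$ and inherited priority function give the stated parameters. Your extra checks (well-formedness of the game, construction time being absorbed) are sound bookkeeping and do not change the argument.
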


It follows from Theorem~\ref{theorem:mucalculus2apt} that by
checking whether the language of $A(\varphi)$ is empty we can
decide whether $\varphi$ is satisfiable.
In the next section, we proceed to show how to determine
whether the language of an automaton is empty.

Before proceeding, we show that in case the $\mu$-calculus
formula has a special structure, as defined in
Section~\ref{section:mu-calculus}, the automaton resulting
from the translation above has also a special structure.

\begin{lemma}
\begin{itemize}
	\item If $\varphi$ is alternation-free, then $A(\varphi)$ is
	a weak tree automaton.
	\item If $\varphi$ is limit linear, then $A(\varphi)$ is
	limit linear.
	\item If $\varphi$ is
	aconjunctive, then $A(\varphi)$ is limit deterministic.
\end{itemize}
\end{lemma}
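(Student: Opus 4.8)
The plan is to reduce all three items to one structural analysis of $A(\varphi)$, exploiting the correspondence between its states and the syntax tree of $\varphi$. Since $\varphi$ is clean and guarded, each state $q\in\FL(\varphi)$ is a subformula of $\varphi$ in which every free fixpoint variable $X$ has been recursively replaced by $\theta_\varphi(X)$, so $q$ determines a unique syntax-tree node $\mathsf{tmpl}(q)$ (the sinks $\top,\bot$ are set aside since they only self-loop). A transition of $A(\varphi)$ then either descends from $\mathsf{tmpl}(q)$ to a child, or, when that child is a bound variable $X$, \emph{jumps} from $\mathsf{tmpl}(q)$ to $\theta_\varphi(X)$, a proper ancestor; hence a path from a state back to itself in $A(\varphi)$ induces a closed walk in the syntax tree built from descents and jumps, and every such walk contains at least one jump. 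The key fact I would prove is: if $C$ is such a closed walk and $\chi^*$ is the \emph{shallowest} fixpoint node occurring as a jump target of $C$, then every node visited by $C$ lies in the subtree rooted at $\chi^*$, and $\chi^*\succeq_\varphi\mathsf{orig}(s)$ for the fixpoint subformula $\mathsf{orig}(s)=\mathsf{tmpl}(s)$ of every fixpoint-formula state $s$ on $C$. The first part holds because both descents and jumps keep the walk inside $\mathsf{subtree}(\chi^*)$ once it is there (by minimality of depth, no jump target of $C$ is strictly above $\chi^*$); for the second part one argues by upward induction: to return to $\chi^*$ the walk must eventually leave $\mathsf{subtree}(\mathsf{orig}(s))$, which can only happen by jumping on a variable bound strictly above $\mathsf{orig}(s)$ that occurs free in the body of $\mathsf{orig}(s)$---i.e.\ by a dependency step---and iterating this climbs a $\succeq_\varphi$-chain from $\mathsf{orig}(s)$ up to $\chi^*$.

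For the first item, $\mathsf{ad}(\varphi)\le1$ forces the priority function of $A(\varphi)$ to take only the values $0$ and $1$, with every $\mu$-formula getting alternation level $1$ and every $\nu$-formula level $0$, straight from the definition of $\mathsf{al}$. Let $C$ be a strongly connected component; trivial components are vacuously uniform, so assume $C$ is non-trivial; then every cycle in $C$ visits a state on which $\Omega'$ is defined, and (as $\top,\bot$ only self-loop) that is a fixpoint-formula state. For any two fixpoint-formula states $s,s'$ of $C$, take a cycle through both; its dominating node $\chi^*$ satisfies $\chi^*\succeq_\varphi\mathsf{orig}(s)$ and $\chi^*\succeq_\varphi\mathsf{orig}(s')$, so if $\mathsf{orig}(s)$ were of the opposite fixpoint type to $\chi^*$ we would get an alternating $\succ_\varphi$-path of length two and hence $\mathsf{ad}(\varphi)\ge2$, a contradiction; thus $\mathsf{orig}(s)$, and likewise $\mathsf{orig}(s')$, has the fixpoint type---and therefore the priority---of $\chi^*$. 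So all fixpoint-formula states of $C$ carry one common priority $p\in\{0,1\}$, and then the rule defining $\Omega$ on the other states of $C$ forces each of them to have priority $p$ as well, since all their cycles lie inside $C$ and meet only $\Omega'$-states of priority $p$. Hence every component of $A(\varphi)$ is uniform, i.e.\ $A(\varphi)$ is weak.

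For the second item, limit-linearity of $\varphi$ entails it is alternation-free---the structural restriction on $\mu$-subformulas, together with guardedness, prevents any fixpoint subformula from depending on a fixpoint of the opposite type---so $A(\varphi)$ is weak by the first item, and it remains to show that every state $q$ with $\Omega(q)=1$ lies on exactly one cycle back to itself. Given such a cycle $C$, its dominating node $\chi^*$ is a $\mu$-formula (by weakness) with $\chi^*\succeq_\varphi\mathsf{orig}(q)$; limit-linearity keeps the walk from ever leaving the subtree of the $\mu$-formula it is currently regenerating---leaving would require an outer binder's variable to occur inside that formula, which limit-linearity forbids---so $C$ stays inside $\mathsf{subtree}(\mathsf{orig}(q))$. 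Since the bound variable of $\mathsf{orig}(q)$ has a unique occurrence there, lying under no nested fixpoint, the route back to $q$ is forced at every $\vee$- and $\wedge$-node toward that occurrence, so $C$ is the unique cycle through $q$ and $A(\varphi)$ is limit-linear.

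The third item follows the same template but is more delicate. The limit-determinism condition asks, for each odd priority $p$, each state $q$ with $\Omega(q)=p$, and each conjunction state $q'=\psi_0\wedge\psi_1\in Q_\wedge$ reachable from $q$ along a path of priority at most $p$, that $|\delta(q',a)\cap Q_{\le p}|\le1$, i.e.\ at most one conjunct-successor of $q'$ has priority at most $p$. The plan is to show, using the key fact about dominating regenerated fixpoints, that a conjunct-successor has priority at most $p$ only when the corresponding conjunct carries the ``least-fixpoint content'' needed to sustain a priority-$\le p$ regeneration---a free occurrence of a $\mu$-bound variable, or a $\mu$-subformula, of alternation level at most $p$---and then to invoke aconjunctivity, which allows at most one conjunct of $\psi_0\wedge\psi_1$ to carry such content. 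The main obstacle is twofold: making the correspondence between states and syntax-tree nodes and the dominating-fixpoint fact fully rigorous (the substitutions inside the Fischer--Ladner closure, the $\top,\bot$ sinks, guardedness), and---for the third item---pinning down exactly which states lie in $Q_{\le p}$ and matching this precisely against the aconjunctivity condition; once these are in place the remaining steps are routine.
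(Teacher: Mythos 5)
Your overall route coincides with the paper's: for weakness you argue that every cycle of $A(\varphi)$ is dominated by a single regenerated fixpoint, so each strongly connected component inherits one priority; for limit-linearity you combine weakness with uniqueness of the regeneration path of the single occurrence of $X$; for limit-determinism you invoke aconjunctivity to bound the number of conjuncts that can carry a low priority. Your ``dominating shallowest jump target'' lemma is a more explicit rendering of the fact the paper uses implicitly when it says that every strongly connected component ``belongs to'' one fixpoint, and your treatments of the first two items are complete (indeed more careful than the paper's, which disposes of them in two sentences each, including the same claim you make that limit-linearity implies alternation-freeness).

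The third item, however, is where you stop short, and the step you defer is not routine bookkeeping but the actual content of the claim. You write that the plan is to show that a conjunct-successor $\psi_i$ of $q'=\psi_0\wedge\psi_1$ lies in $Q_{\le p}$ only when $\psi_i$ carries ``least-fixpoint content,'' and then to match this against aconjunctivity; you then list ``pinning down exactly which states lie in $Q_{\le p}$'' as an unresolved obstacle. But that is precisely the implication one must prove: that if $\psi_i$ contains no free least fixpoint variable (from which no odd priority $\le p$ can be regenerated), then $\Omega(\psi_i)>p$ --- equivalently, that membership of $\psi_i$ in $Q_{\le p}$ within the compartment of $q$ forces $\psi_i$ to lie on a cycle dominated by a $\mu$-fixpoint of alternation level at most $p$, which in turn forces a free least fixpoint variable in $\psi_i$. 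This is exactly the content of the paper's concluding line ``Hence $\Omega(\psi_1)>p$ or $\Omega(\psi_2)>p$,'' and without it your argument for limit-determinism does not close. (Note also that the total priority function $\Omega$ assigns priority $0$ to states outside any strongly connected component, so you must additionally argue why such a conjunct cannot spuriously land in $Q_{\le p}$, or restrict attention to successors inside the compartment; your sketch as written does not address this.) Until that implication is established, the third bullet remains unproven.
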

\begin{proof}
\begin{itemize}
\item Let $\varphi$ be alternation-free so that $\mathsf{ad}(\varphi)\leq 1$
and $\mathsf{al}(\psi)\leq 1$ for all $\psi\in\FL(\varphi)$. Hence
$A(\varphi)$ uses just the priorities $\{0,1\}$. Furthermore,
every strongly connected component in $A(\varphi)$ belongs to
either a greatest or a least fixpoint and hence consists
only of states with priority $0$ or only of states with priority $1$, as claimed.
\item Let $\varphi$ be limit linear so that for all subformulas $\mu X.\,\psi$ of $\varphi$,
$X$ has exactly one occurrence in $\psi$. Then $\varphi$ is alternation-free
so that $A(\phi)$ is weak by the previous item. Furthermore, all states in the
strongly connected component of $\theta(X)$ belong to $\mu X.\,\psi$ and hence have
priority $1$. Since $\varphi$ is limit-linear, there is exactly one circular path
in the strongly connected component of $\theta(X)$. Hence $A(\varphi)$ is limit linear.
\item Let $\varphi$ be aconjunctive, let $p$ be an odd number, let $q\in Q$ such that
$\Omega(q)=p$ and let $q'\in Q_\forall$
be a state that is reachable from $q$ by visiting states with
priority at most $p$. It remains to show that for all $a\in \Sigma$, we have
$|\delta(q',\Sigma)\cap Q_{\leq p}|\leq 1$, where $Q_{\leq p}=\{q\in Q\mid \Omega(q)\leq p\}$.
Since $q\in Q_\wedge$, we have $q=\psi_1\wedge\psi_2$ for some $\psi_1,\psi_2\in\FL(\varphi)$.
As $\varphi$ is aconjunctive, there is at most one $i$ such that $\psi_i$ contains
a free least fixpoint variable, and such that an odd
priority is reachable from $\psi_i$ without first passing a priority greater than $p$.
Hence $\Omega(\psi_1)>p$ or $\Omega(\psi_2)>p$, showing that $|\delta(q',\Sigma)\cap Q_{\leq p}|\leq 1$.
\end{itemize}
\qed
\end{proof}

\section{Emptiness of Alternating Tree Automata}
\label{section:emptiness}
We now show how the decision whether the language of an
alternating automaton is non-empty can be reduced to deciding
the winner in a two-player game.
We start by constructing a game with labeled edges and
an acceptance condition that is defined by a nondeterministic word
automaton.
We show that \pdiam wins in this game if the language of the
alternating automaton is not empty.
Then, by manipulating the word automaton, we construct a parity
game with the same quality: \pdiam wins if the language of the
alternating automaton is not empty.
This is interesting because it unifies many results about
fragments of the $\mu$-calculus to results about word automata.

\subsection{Nondeterministic Parity Word Automata}
Before proceeding we introduce nondeterministic and history
deterministic word automata.

\begin{defn}[Nondeterministic Parity Word Automata]
A nondeterministic parity word automaton is
$N=(\Sigma,Q,q_0,\delta,\Omega)$, where $\Sigma$ is a finite  alphabet,
$Q$ a finite set of states, $q_0\in Q$ an initial state, and
$\delta:Q\times\Sigma\to \pow(Q)$ a transition function.
The \emph{priority function}
$\Omega:Q\to\mathbb{N}$ assigns priorities to states.
Given an automaton $N$, the rank of $N$ is its maximal priority, that is
$\max\{\Omega(q)\mid q\in Q\}$.
Given an infinite word $w=a_0a_1\ldots\in \Sigma^\omega$, a \emph{run
of $N$ on $w$} is an infinite sequence $\tau=q_0,q_1,\ldots$ of states
such that $q_{i+1}\in \delta(q_i,a_i)$ for all $i\geq 0$.
A run $\tau=q_0,q_1,\ldots$ is \emph{accepting} if the highest priority
that occurs infinitely
often in $\tau$ is even.
Formally, reusing the notation $\mathsf{inf}_\Omega$ introduced for
parity games, run $\tau$ is accepting if and only if
$\max\{\mathsf{inf}_\Omega(\tau)\}$ is an even number.
The language accepted by $N$ is
\begin{align*}
	L(N)=\{w\in\Sigma^\omega \mid \text{there is an accepting run of $N$
	on $w$}\}.
\end{align*}
\end{defn}

\begin{defn}[History-deterministic Word
Automata~\cite{HenzingerP06}]
	Given a nondeterministic word automaton $N$,
	a \emph{resolver} for $N$ is a function $\sigma:\Sigma^*\to
	Q$ such that $\sigma(\epsilon)=q_0$ and for all sequences
	$wa\in \Sigma^+$, we have $\sigma(wa)\in \delta(\sigma(w),a)$.
	Given a word $w=a_0a_1\cdots\in \Sigma^\omega$ the outcome of
	$\sigma$ on $w$, denoted $\sigma(w)$, is the run $r=q_0,q_1,\ldots$
	such that for all $i\geq 0$ we have $q_i=\sigma(a_0\ldots
	a_{i-1})$.
	We say that $N$ is
	\emph{history-deterministic} if there is a resolver $\sigma$
	such that for every word $w$
	we have that
	\begin{align*}
		w\in L(N)\text{ if and only if } \sigma(w)\text{ is an accepting
		run of } N.
	\end{align*}
\end{defn}

A word automaton is \emph{deterministic} if for every state $q\in Q$ and
every letter $a\in\Sigma$ we have $|\delta(q,a)|\leq 1$.
In particular, every deterministic automaton is history deterministic.

\begin{theorem}[\hspace{-0.2pt}\cite{Safra88,Piterman07,HenzingerP06}]
	Given a nondeterministic parity word automaton $N$, there exist a
	history deterministic parity automaton $H$ and a deterministic parity
	automaton $D$ such that $L(N)=L(H)=L(D)$.
	\label{theorem:history determinization and determinization}
\end{theorem}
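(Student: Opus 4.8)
The plan is to reduce the statement to the construction of a deterministic parity automaton $D$ with $L(D)=L(N)$: since every deterministic automaton is history-deterministic (as noted just above), we may then take $H:=D$, and the separate, asymptotically smaller history-deterministic construction of Henzinger--Piterman need only be invoked when the tighter bounds quoted in the introduction are wanted. So the core task is determinization of a nondeterministic parity word automaton, which I would carry out by the Safra--Piterman construction, generalizing Safra's determinization of B\"uchi automata to the parity acceptance condition.

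First I would fix the state space of $D$ to be the set of \emph{history trees}: finite ordered trees whose nodes carry labels that are nonempty subsets of $Q$, subject to the invariants that a child's label is strictly contained in its parent's label and that the labels of siblings are pairwise disjoint. Intuitively the root tracks the ordinary subset construction, and a node at depth $d$ together with an attached even priority $p$ records the set of states whose run prefixes, \emph{since the creation of that node}, have so far failed to revisit priority $p$ (a nested Miyano--Hayashi breakpoint construction, one layer per even priority that could be the dominating one). Nodes are drawn from a fixed finite pool of names so that the state set is finite. The transition on a letter $a$ is obtained in stages: (i) replace each label $S$ by $\bigcup_{q\in S}\delta(q,a)$; (ii) for each node and each even priority $p$, spawn or update a child collecting the states that have just visited priority $p$; (iii) delete from every label the states that already occur in an older sibling, thereby restoring disjointness and bounding the width; (iv) when a node's label becomes equal to the union of its children's labels, declare that node \emph{green} and remove all of its descendants (a breakpoint has been reached); (v) when a node's label becomes empty, declare it \emph{red} and delete it together with its subtree. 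The output priority of the transition is $2i$ if the highest-indexed affected node $i$ is green, $2i-1$ if it is red, and a fixed odd priority if no node is affected; the initial state is the single-node tree with label $\{q_0\}$.

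Next I would prove $L(D)=L(N)$. For soundness, given an accepting run of $D$ on $w$, let $i$ be the least node index that is green infinitely often and red only finitely often along the induced sequence of trees; the states residing in node $i$ from that point on form an infinite, finitely branching forest of run prefixes of $N$, so by K\"onig's lemma there is an infinite branch, and one checks that it is a run of $N$ whose dominating priority is the even priority attached to node $i$, hence accepting. For completeness, given an accepting run $q_0q_1\cdots$ of $N$ on $w$ with dominating even priority $p$, I would show that this run eventually ``settles'' into a single node of the tree associated with priority $p$: it can be pushed into deeper nodes or relocated to older siblings only finitely often, and it can be evicted by a red event only finitely often (its label never becomes empty once it is stable), while the breakpoint for $p$ is reached infinitely often, so that node is green infinitely often; therefore the maximal priority output infinitely often by $D$ is even. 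Finally I would record the quantitative bounds: a history tree has at most $|Q|$ leaves and depth at most the rank of $N$, which bounds the number of trees and gives rank $O(|Q|\cdot\mathrm{rank}(N))$ for $D$ --- enough to recover the $\mathcal{O}(n!)$-type figures in the introduction, though for the present existence claim only finiteness matters.

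The main obstacle is the completeness argument, specifically proving that an accepting run of $N$ stabilizes in one node and that this node then turns green infinitely often while being deleted (turned red) only finitely often. This is the heart of the Safra/Piterman construction and requires a careful analysis of how the nested breakpoint resets in step (iv) interleave with the priority-driven child creation in step (ii) and the sibling-pruning in step (iii); the ordering of siblings (younger to the right, never overtaking an older sibling that still holds some state) is what makes the relocation of a run monotone and hence eventually constant.
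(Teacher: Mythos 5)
Your overall plan is sound and matches what the paper itself relies on: the theorem is stated here as a citation, backed later by Lemma~\ref{lem:parity2buchi} (parity to B\"uchi), Lemmas~\ref{lem:sp} and~\ref{lem:schewevarghese} (determinization), and Lemma~\ref{lem:hp} (history-determinization); and your shortcut $H:=D$ is legitimate for the bare existence claim, since, as the text notes, every deterministic automaton is history-deterministic. (It does forgo the point of Henzinger--Piterman, namely an HD automaton of a different and sometimes preferable size and structure, but the statement does not require that.)

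The genuine gap is in your direct generalization of Safra trees to the parity condition. Your nodes only record, for an even priority $p$, which states have revisited $p$ since the node was created; nothing in steps (i)--(v) ever removes a state from a priority-$p$ node when that state's run visits a priority \emph{greater} than $p$ (in particular a dominating odd priority). Consequently, in your soundness argument the infinite branch extracted by K\"onig's lemma from a node that is green infinitely often and red only finitely often is only guaranteed to visit priority $p$ infinitely often --- it may also visit some odd priority $p'>p$ infinitely often, so its dominating priority need not be even, and $D$ would accept words outside $L(N)$. This is exactly the extra bookkeeping that separates parity determinization (Lemma~\ref{lem:schewevarghese}) from B\"uchi determinization: a node tracking priority $p$ must be reset, or have its states stolen, upon visits to priorities above $p$. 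The cheapest repair within the paper's own toolkit is to avoid direct parity determinization altogether: first apply the construction of Lemma~\ref{lem:parity2buchi}, whose guessed component $(q,i)$ with transitions restricted to states of priority at most $i$ enforces precisely the missing constraint, obtaining a nondeterministic B\"uchi automaton $N'$ with $L(N')=L(N)$; then your Safra-tree construction (which is correct in the B\"uchi case, where the only priorities are $1$ and $2$) applied to $N'$ yields $D$, and $H:=D$ finishes the proof.
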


In Section~\ref{section:word automata} we mention several
determinization and history determinization constructions that take
nondeterministic word automata and construct equivalent (history) deterministic
automata.

\subsection{The Emptiness Games}
Using these definitions we are ready to proceed with the
construction of the games capturing emptiness of an
alternating parity tree automaton.

\begin{defn}[Strategy Arena]
Given an alternating parity tree automaton
$A=(\Sigma,Q,q_0,\delta,\Omega)$
we define the \emph{strategy arena}
$G_A=(V,V_\Diamond,V_\Box,E)$, where the components of $G_A$
are as follows. We label the edges in $E$ as we explain below.
\begin{itemize}
	\item
	$V = (\mathcal{P}(Q) \times \Sigma)
	\cup \mathcal{P}(Q)$

	\item
	$V_\Diamond = \mathcal{P}(Q) \cup \{(s,\sigma) ~|~ s \cap
	Q_l \neq \emptyset\}$
	\item
	$V_\Box = \{(s,\sigma) ~|~ s\cap Q_l=\emptyset\}$

	That is, nodes correspond to either sets of states of $A$
	with a letter from $\Sigma$ or just a set of
	states of $A$.
	A node is in $V_\Diamond$ if either it is a plain subset of states of
	$A$ or if it contains local states of $A$.
	A node is in $V_\Box$ if it does not contain local
	states of $A$.
	\item
	A \emph{choice function} for $a\in \Sigma$ is $d:Q_\vee
	\rightarrow Q$ such that for every $q\in Q_\vee$ we have $d(q) \in
	\delta(q,a)$.
	We denote by $D_a$ all the choice functions for
	$a$ and by $D$ all the choice functions for all letters
	$a\in \Sigma$.

	Let $D_A=D \cup Q_\Diamond \cup \Sigma$ be the set of
	labels.

	Intuitively, an edge $e$ from a node (set of states) $s$ to
	set of states $s'$ corresponds to one of three cases.
	\begin{itemize}
	  \item
		Either $e$ corresponds to a set of transitions taken by
		local states of $A$, in which case $e$ is labeled by the
		choice function associating each existential state in $s$
		to the successor chosen for it.
		\item
		Edge $e$ corresponds to a set of transitions taken by
		modal states of $A$.
		In this case the edge corresponds to
		the transitions of exactly \emph{one} existential modal state
		and potentially many universal modal states.
		In this case
		$e$ is labeled by the existential modal state whose
		transition was taken.
		\item
		Or $e$ corresponds to a choice of a letter in $\Sigma$.
	\end{itemize}
	\item
	Given a set of states $s\subseteq Q$, $q\in s\cap
	Q_\Diamond$, a letter $\sigma$, and
	a choice $d\in D_\sigma$ we define the update of $s$ as
	follows:
	$$
	\begin{array}{r c l}
	update_l(s,\sigma,d) &=& (s\setminus Q_l) \cup
	\left \{ q' \left |
	\begin{array}{l}
	\exists q\in s\cap Q_\wedge \mbox{ and } q'\in
	\delta(q,\sigma) \mbox { or} \\
	\exists q\in s \cap Q_\vee \mbox{ and } q'=d(q)
	\end{array}
  \right \} \right . \\
  update_m(s,\sigma,q) &=& \{q' ~|~ q'\in
  \delta(q,\sigma)
  \mbox{ or } \exists q''\in s\cap Q_\Box \mbox { and } q'\in
  \delta(q'',\sigma)\}
	\end{array}
	$$
	That is, a local update consists of the set of all the
	successors of all the
	local universal states in $s$ and all the chosen successors
	of all the local existential states in $s$.
  A modal update consists of the successors of the (modal
  existential) state $q\in s$ and all the successors of all
  the modal universal states in $s$.

	The set of edges is:
	$$
	\begin{array}{r c l r @{,}c @{,} l c l l}
	E & = &
	\bigg \{& ((s,\sigma) & d & (s',\sigma)) & \bigg | &
	\begin{array}{l}
	(s,\sigma) \in V_\Diamond, d\in D_\sigma, \mbox{ and}
	\\
	s'=update_l(s,\sigma,d)
	\end{array}
	 \bigg \} & \cup \\[10pt]
	&& \bigg \{ & ((s,\sigma)&q&s') & \bigg |&
	\begin{array}{l}(s,\sigma)\in V_\Box, q\in
	s\cap Q_\Diamond, \\
	s'= update_m(s,\sigma,q)
	\end{array} \bigg \} & \cup \\[10pt]
	&& \{& (s&\sigma&(s,\sigma)) &|& \sigma\in \Sigma \}
	\end{array}
	$$
	That is, a node $(s,\sigma)$, where $s$ contains local
	states of $A$, has successors that correspond to taking a
	transition from all the local states. For existential local
	states only one successor is taken (according to the choice
	labeling the edge) and for universal local states all
	successors are taken.
	A node $(s,\sigma)$, where $s$ contains no local states, has
	successors that correspond to taking a
	transition from one existential modal state in $s$
	(according to the state labeling the edge) and
	taking the transitions of all the universal modal states in
	$s$.
	A node $s\subseteq Q$, has successors that correspond to choosing a
	letter $\sigma\in \Sigma$ (labeling the edge) and moving to
	$(s,\sigma)$.
\end{itemize}
\end{defn}

We now define the winning condition associated with the
strategy arena.
For a labeled arena, the winning condition is a subset of
$V \cdot (D_A \cdot V)^\omega$.
We construct a word automaton to define the winning condition.

\begin{defn}[Tracking Automaton]
Given an alternating parity tree automaton
$A=(\Sigma,Q,q_0,\delta,\Omega)$,
and the arena $G_A$,
we define the \emph{tracking automaton}
$T_A=(\Sigma_A,Q,q_0,\Gamma,\overline{\Omega})$ to be
a nondeterministic parity word automaton.
The alphabet of $T_A$ is $\Sigma_A=(\Sigma \times (D\cup Q_\Diamond))
\cup \Sigma$.
That is $T_A$ reads either a letter in $\Sigma$ and
either a choice function or a modal existential state or simply a
letter in $\Sigma$.
The transition function of $T_A$ is defined by putting
\begin{align*}
\Gamma(q,(\sigma,d))=\begin{cases}
d(q) & \text{if }q\in Q_\vee \mbox{ and } d\in D\\
\emptyset & \text{if }q\in Q_\vee \mbox{ and } d\in Q_\Diamond\\
\delta(q,\sigma) & \text{if }q\in Q_\wedge\\
\delta(q,\sigma) & \text{if }q \in Q_\Diamond \mbox{ and }
d=q\\
\emptyset & \text{if }q\in Q_\Diamond \mbox{ and }
d\in Q \setminus \{q\}\\
q & \text{if }q \in Q_\Diamond \mbox{ and } d\notin Q\\
\delta(q,\sigma) & \text{if }q\in Q_\Box \mbox{ and } d\in
Q \\
q & \text{if } q\in Q_\Box \mbox{ and } d\notin Q\\
\end{cases}
\end{align*}
for $q\in Q$, $\sigma\in\Sigma$, and $d\in D\cup Q_\Diamond$.

\noindent
For $q\in Q$ and $\sigma\in \Sigma$ we put $\Gamma(q,\sigma) = \{q\}$.
\end{defn}
Notice that the only transitions of $T_A$ that lead to sets
with more than one element are
when $q\in Q_\wedge$.
Indeed, when $q\in Q_m$, by assumption, we have
$|\delta(q,\sigma)|=1$.
Finally, $\overline{\Omega}$ is obtained from $\Omega$ by setting
$\overline{\Omega}(q)=\Omega(q)+1$.

We are now ready to define the acceptance condition $\alpha$.
Recall, that a labeled play $\pi$ in $G_A$ is $\pi\in V\cdot(D_A\cdot
V)^\omega$.
Given a pair $(v,d)$, their projection onto $\Sigma_A$, denoted
$\lfloor v,d \rfloor$, is $\lfloor (s,\sigma),d \rfloor =
(\sigma,d)$ and $\lfloor s,\sigma \rfloor = \sigma$.
Given an infinite sequence $\pi=v_0d_0v_1d_1\cdots\in V\cdot (D_A\cdot
V)^\omega$, we denote by $\lfloor \pi \rfloor$ the sequence $\lfloor
v_0,d_0 \rfloor \lfloor v_1,d_1\rfloor \cdots$.
We define $\alpha_A$ as follows.
$$
\alpha_A = \{ \pi ~|~ \lfloor\pi\rfloor \notin L(T_A)\}
$$

Let $G^+_A$ be the combination of the arena $G_A$ with $\alpha_A$.
We sum up the relation between $A$ and $G_A^+$ as follows.

\begin{theorem}[Simulation]
Let $A$ be an alternating tree automaton $A$.
Then $A$ is non-empty if and only if \pdiam
wins $G^+_A$ from $\{q_0\}$.
\label{theorem:simulation}
\end{theorem}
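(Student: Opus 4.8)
The plan is to prove the two directions separately, in each case translating between an accepting run tree of $A$ on a Kripke structure and a winning strategy for \pdiam in $G^+_A$, with the key technical point being the correspondence between branches of the run tree and labeled plays of $G_A$ on the one hand, and between the parity acceptance of $A$ along a branch and membership in $L(T_A)$ on the other. First I would fix notation for run trees: a run of $A$ on $K=(W,w_0,R,L)$ is a tree whose nodes are labeled by pairs $(w,q)\in W\times Q$, branching according to $\delta$ and $R$ exactly as in the acceptance game $G_{A,K}$; $A$ accepts $K$ iff there is such a run in which \pdiam has a winning strategy, equivalently (by positional determinacy of parity games) a run all of whose branches satisfy the parity condition of $A$. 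I will then observe that one may assume the winning strategy in $G_{A,K}$ is positional, so the relevant run is a subgraph of $W\times Q$.

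For the ``only if'' direction, suppose $A$ is non-empty, witnessed by $K$ and a positional winning \pdiam-strategy in $G_{A,K}$. I would build a \pdiam-strategy in $G_A$ by a subset construction: \pdiam maintains, at a node $s\subseteq Q$, the set of $A$-states that label the current ``layer'' of reachable run-nodes over some world of $K$; at a node $s$ she picks the letter $\sigma=L(w)$ of the world she is simulating; at a node $(s,\sigma)$ with local states she plays the choice function $d$ that mirrors, for each $q\in s\cap Q_\vee$, the successor chosen by the acceptance-game strategy; at a node $(s,\sigma)$ with only modal states \pbox picks an existential modal state $q$ (equivalently a successor world $w'\in R(w)$), and \pdiam continues simulating the run over $w'$, so the new set is $update_m(s,\sigma,q)$. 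The invariant to maintain is that $s$ is exactly the projection to $Q$ of the run-nodes reachable over the current world via the fixed strategy. The crucial step is then: for any labeled play $\pi$ consistent with this strategy, $\lfloor\pi\rfloor\notin L(T_A)$. For this I would show that a run of $T_A$ on $\lfloor\pi\rfloor$ that stays accepting (in the shifted priority $\overline\Omega=\Omega+1$, so accepting means the highest infinitely-recurring $\Omega$-value is \emph{odd}) traces out an actual branch of the run tree that violates $A$'s parity condition — contradicting that the strategy was winning. The matching of $T_A$-transitions to run-tree edges is exactly the case split in the definition of $\Gamma$ (existential-local follows the choice function $d$, universal-local branches into all of $\delta(q,\sigma)$, modal states follow their unique successor or ``die'' when the label names a different state), and the $+1$ shift converts ``bad for $A$ (highest recurring priority odd)'' into ``accepting for $T_A$''. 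Hence every play of $G_A$ consistent with the strategy lies in $\alpha_A$, so \pdiam wins $G^+_A$ from $\{q_0\}$.

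For the ``if'' direction, suppose \pdiam wins $G^+_A$ from $\{q_0\}$; fix such a strategy and unravel $G_A$ under it into a tree whose nodes over the ``set'' vertices of $G_A$ carry sets $s\subseteq Q$ and whose branching over a letter choice $\sigma$ and a modal-state choice $q$ produces the children dictated by $update_m$. From this I would read off a Kripke structure $K$: worlds are the reachable ``set'' nodes, the initial world is $\{q_0\}$, $L$ assigns to each world the letter $\sigma$ that \pdiam's strategy picks there, and $R$ connects a world to its modal successors (one per existential modal state in $s$, with all universal modal states forced along). Then I would define a run of $A$ on $K$ whose state-labels over a world $s$ are precisely the members of $s$, branching locally according to the choice functions \pdiam plays (existential) and according to $\delta$ (universal), and modally according to $R$; this is well-defined because the $update_l$/$update_m$ operations were designed to close off local steps before a modal step. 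Finally, acceptance of $K$ by $A$: any branch of this run projects to a labeled play $\pi$ of $G_A$ consistent with \pdiam's strategy, hence $\lfloor\pi\rfloor\notin L(T_A)$; unpacking this (again via the $+1$ shift and the case analysis of $\Gamma$) says the highest $A$-priority recurring along that branch is \emph{even}, i.e. the branch satisfies $A$'s parity condition. Since this holds for all branches, \pdiam wins $G_{A,K}$ from $(w_0,q_0)$, so $A$ accepts $K$ and is non-empty.

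I expect the main obstacle to be the bookkeeping in the branch-to-play correspondence, specifically verifying that a \emph{rejecting} run of $T_A$ on $\lfloor\pi\rfloor$ (one that does not stay accepting) genuinely corresponds to the absence of a bad branch rather than merely to one particular $T_A$-run fizzling out: because $T_A$ is nondeterministic and its only branching is at universal-conjunctive states, one must argue that the set of $T_A$-states reachable while reading $\lfloor\pi\rfloor$ tracks exactly the layer of run-nodes along $\pi$, so that ``some $T_A$-run is accepting'' is equivalent to ``some branch of the run along $\pi$ is bad for $A$.'' Handling the transitions that go to $\emptyset$ (existential-local states not selected by $d$, modal states not named by the label) correctly — these are precisely the run-branches that \pdiam has pruned — is where the definitions of $\Gamma$ and the arena must be used most carefully, and is the heart of the argument.
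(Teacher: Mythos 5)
Your plan is correct and follows essentially the same route as the paper's proof: the forward direction builds a subset-construction strategy in $G^+_A$ maintaining an invariant tying each set $s$ to winning positions of $A$ over a simulated world of $K$, and the backward direction unravels the winning strategy into a Kripke structure and a run, with both directions resting on the same correspondence (which you correctly identify as the crux) between runs of $T_A$ on $\lfloor\pi\rfloor$ and branches of the run tree, mediated by the $+1$ priority shift. The obstacle you flag at the end --- that transitions of $\Gamma$ to $\emptyset$ prune exactly the branches diverging from the chosen world-path --- is handled in the paper just as you describe, so no gap remains.
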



\begin{toappendix}
\end{toappendix}

\begin{appendixproof}
Let $A=(\Sigma,Q,q_0,\delta,\Omega)$ be non-empty so that there is some Kripke structure
$K=(W,w_0,R,L)$ such that \pdiam wins $(w_0,q_0)$ in $G_{A,K}$. Let
$\kappa_{G_{A,K}}$ be a \emph{positional} strategy for \pdiam in $G_{A,K}$ witnessing this.
We inductively construct a history-dependent strategy $\kappa_{G^+_A}$ for \pdiam in
$G^+_A$ as follows.
Let $\tau=v_0 d_0v_1d_1\ldots v_o\in V\cdot (D\cdot V)^*$
be a finite labeled play of $G^+_A$ according
to the strategy that has been constructed so far.
We write $v_j=s_j$ if $v_j\in \pow(Q)$ and $v_j=(s_j,\sigma_j)$
if $v_j\in \pow(Q)\times\Sigma$.
We will use the following invariant in our construction:
\begin{quote}
	There is a fixed state $w_{o}\in W$ such that
	for all $q\in s_o$, \pdiam wins $(w_{o},q)$ in
	$G_{A,K}$.
\end{quote}
Initially, the invariant holds trivially since \pdiam wins $(w_0,q_0)$ in $G_{A,K}$ by assumption.
\begin{itemize}
	\item
	If $v_o\in \pow(Q)$, then we put
	$\kappa_{G^+_A}(\tau)=(s_o,L(w_o))$ and label
	this transition with $L(w_o)$.
	The invariant trivially holds for $\tau\cdot L(w_o)\cdot
	(s_o,L(w_o))$.
	\item
	If $v_o\in \pow(Q)\times\Sigma$ and
	$s_o\cap Q_l\neq \emptyset$ then we put
	$\kappa_{G^+_A}(\tau)=(s',\sigma_o)$ where
	$s'=update_l(s_o,\sigma_o,d)$
	and where $d$ is obtained by putting $d(q)=q'$ for $q\in s_o\cap
	Q_\vee$,
	where $q'$ is the state such that $\kappa_{G_{A,K}}(w,q)=(w,q')$.
	Label this transition with $d$.
	Since $\kappa_{G_{A,K}}$ is a winning strategy and since
	\pdiam wins $(w_{o},q)$ in $G_{A,K}$ for each $q\in s_o$ by the
	inductive
	invariant, \pdiam also wins $(w_{o},q'')$ in $G_{A,K}$ for each
	$q''\in s'$,
	showing that the invariant holds for $\tau\cdot d\cdot (s',\sigma_o)$.
	\item
	If $v_o\in \pow(Q)\times\Sigma$ and $s_o\cap Q_l= \emptyset$, then we
	have to show that the invariant holds for all successors of $v_o$.

	Consider $s'$ and $q\in s_o\cap Q_\Diamond$ such that we have
	$((s_o,\sigma_o),q,s')\in E$.
	Recall that \pdiam wins $(w_o,q'')$ for all $q''\in s_o$ by
	the inductive invariant.
	In particular, \pdiam wins $(q,w_o)$ in $G_{A,K}$ so that there is
	$w_{o+1}\in R(w)$ such that \pdiam wins
	$(\delta(q,\sigma_o),w_{o+1})$ in $G_{A,K}$.
	For all $q''\in s_o\cap
	Q_\Box$ and $q'\in \delta(q'',\sigma_o)$,
	\pdiam wins $(q',w')$ for all $w'\in R(w)$ since
	\pdiam wins $(q'',w_o)$ in $G_{A,K}$; in particular, \pdiam wins
	$(q',w_{o+1})$.
	Hence the invariant holds for $\tau\cdot q\cdot s'$.
\end{itemize}
The function $\kappa_{G^+_A}$ is a strategy for \pdiam in $G^+_A$ by construction.
To see that $\kappa_{G^+_A}$ is a winning strategy, let $\pi=v_0 d_0 v_1\ldots$ be
a labeled play of $G^+_A$ that follows $\kappa_{G^+_A}$.
It remains to show that $\lfloor\pi\rfloor\notin L(T_A)$.
In the case that $\pi$ is finite, clearly $\lfloor\pi\rfloor\notin
L(T_A)$.
Consider the case that $\pi$ is infinite.
In this case, $\pi$ corresponds to an infinite play in $G_{A,K}$ with
potentially some finite stuttering.
Let $\pi=v_0d_0\ldots$ and let $w_0w_1\ldots$ be the worlds of $K$ in
the construction of the invariant above.
Any run of $T_A$ on $\lfloor\pi\rfloor$ is a sequence $q_0\ldots$ maintaining
$q_o\in s_o$ forall $o\geq 0$, where $v_o$ is either of the form
$(s_o,\sigma_o)$ or of the form $s_o$.
Furthermore, transitions where $T_A$ does not change its state due to
our construction stutter a finite number of times.
It follows that $q_0q_1\ldots$ is winning for \pdiam in $G_{A,K}$ and
thus cannot satisfy the parity condition of $T_A$, which is
$\overline{\Omega}$.

For the converse direction, let \pdiam win $G^+_A$ from $\{q_0\}$.
Let $\kappa_{G^+_A}$ be a strategy for \pdiam in $G^+_A$ that witnesses this.
We inductively construct a Kripke structure $K=(W,w_0,R,L)$ and a
strategy $\kappa_{G_{A,K}}$.
To this end, we proceed by induction over the length of prefixes
$(w_0,q_0)\cdots (w_o,q_o)$ of plays in $G_{A,K}$.
Let $o$ be a number such that $K$ and $\kappa_{G_{A,K}}$ have been
constructed for all plays $(w_0,q_0)\cdots (w_o,q_o)$ of $G_{A,K}$.
We use the following inductive invariant.
\begin{quote}
for all plays $\pi=(w_0,q_0)\cdots (w_o,q_o)$ of
$G_{A,K}$, there exists a play $v_0,\ldots v_l$ of $G^+_A$ compatible
with $\kappa_{G^+_A}$ and $v_l=(s_l,\sigma_l,i_l)$ where
$L(q_o)=\sigma_l$ and $q_o\in s_l$.
\end{quote}

Initially, $G^+_A$ starts from $\{q_0\}$ and $\kappa_{G^+_A}$ chooses a
successor $(\{q_0\},\sigma,1)$ for some $\sigma\in \Sigma$.
We add to $K$ the initial world $w_0$ such that $L(w_0)=\sigma$ and
consider the prefix $(w_0,q_0)$ in $G_{A,K}$.
We have $(w_0,q_0)$ associated with the prefix
$\{q_0\},(\{q_0\},\sigma,1)$ compatible with $\kappa_{G^+_A}$.

In the inductive step, we extend $K$
and $\kappa_{G_{A,K}}$ as follows.
Consider a prefix $(w_0,q_0)\cdots (w_o,q_o)$ and the associated play
$v_0,\ldots, v_l$ of $G^+_A$. Let $v_l=(s_l,\sigma_l)$.

We start by extending $K$.
We build temporarily a function $f_{w_o}$ that associates a
newly created successor world $w_q$ of $w_o$ with combinations of
states $q\in Q_\Diamond$ and plays
$v_0,\ldots,v_l,v_{l+1},\ldots, v_{l'}$ that are compatible with $\kappa_{G^+_A}$.
By construction either $v_l\in V_\Box$ or $\kappa_{G^+_A}$ goes
through a finite
sequence $v_{l+1},\ldots,v_{l+p}$ of nodes in $V_\Diamond$ until it
gets to a node $v_{l+p+1}$ in $V_\Box$.
This sequence is finite due to the automaton having no loops of local
states.
For each state $q\in s_{l+p+1}\cap Q_\Diamond$, there exists a
successor $v_q$ such that $(v_{l+p+1},q,v_q)$ is an edge in $G^+_A$.
It follows that $v_q$ is of the form $s_q$ and that
$q\in s_q$.
Furthermore, $\kappa_{G^+_A}$ chooses a successor
$v'_q=(s_q,\sigma_q)$ of $v_q$.
We extend $K$ by adding to it a new world $w_q$ that is a
successor of $w_o$ such that $L(w_{q})=\sigma_q$.
The function $f_{w_o}$ associates $w_q$ with $q$ and the play
$v_0,\ldots, v_l,v_{l+1},\ldots, v_{l+p},v_{l+p+1},v_q$.

In order to extend $\kappa_{G_{A,K}}$
we consider the following cases:
\begin{itemize}
	\item
	$q_o \in Q_\vee$ - by assumption $q_o\in s_l$.
	As $q_0\in s_l$ we know that $v_l$ is in $V_\Diamond$.
	Hence, $\kappa_{G^+_A}$ singles out a successor $v_{l+1}$ of $v_l$.
	Let $d$ be the label of the edge $(v_{l},d,v_{l+1})$.
	Let $q_{o+1}=d(q_o)$.
	By definition $q_{o+1}\in \delta(q_o,\sigma_l)=\delta(q_o,L(w_o))$.
	Furthermore, we maintain that $q_{o+1}\in s_{l+1}$.
	We extend the strategy $\kappa_{G_{A,K}}$ to choose $(q_{o+1},w_o)$
	as the successor of $(q_o,w_o)$.
	We associate this extended play in $G_{A,K}$ with the play
	$v_0,\ldots, v_{l+1}$ in $G^+_A$.
	\item
	$q_o \in Q_\wedge$ - by assumption $q_o\in s_l$.
	We continue as above except that for every $q'\in \delta(q_o,L(w_o))$
	we add a successor $(q',w_o)$ as compatible with $\kappa_{G_{A,K}}$.
	The successor $(q',w_o)$ is associated with the same extended play in
	$G^+_A$ as above.
	\item
	$q_o \in Q_\Diamond$ - by assumption $q_o\in s_l$.
	Let $q_{o+1}$ be the unique state such that $q_{o+1}\in
	\delta(q_o,L(w_o))$.
	\begin{itemize}
		\item
		If $v_l\in V_\Box$ then $v_l$ has been used in the construction of
		$K$ to identify the successor $w_{q_o}$ of $w_o$ that is associated
		with $q_o$ in the construction of $K$.
	  It follows that $q_{o+1}\in s_{q_o}$.
	  We extend the play compatible with $\kappa_{G_{A,K}}$ by adding to
	  it the pair $(w_{q_o},q_{o+1})$ and associate with it the play
	  pre-prepared for $w_{q_o}$ in the construction of $K$.
		\item
		If $v_l\in V_\Diamond$ then $v_l$ has a unique
		descendant compatible with $\kappa_{G^+_A}$ that has been used in
		the construction of $K$ and we proceed as above.
	\end{itemize}
	\item
	$q_o \in Q_\Box$ - as before, let $q_{o+1}$ be the unique state such
	that $q_{o+1}\in \delta(q_o,L(w_o))$.
	We now proceed similar to the case of $q_o \in Q_\Diamond$.

	As before, we find that $q_o$
	is in the node $v'$ used to construct all the successors of $w_o$.
	For every successor $w_q$ of $w_o$ we create a play extended by
	$(w_q,q_{o+1})$ and associate it with the pre-prepared play that ends
	in $w_q$.
	The construction of the arena $G^+_A$ ensures that $q_{o+1}\in s_q$.
\end{itemize}

Let $K^\infty$ and $\kappa_{G_{A,K^\infty}}$ denote the Kripke structure
and strategy, respectively, that are obtained in this way.
It remains to show that $\kappa_{G_{A,K^\infty}}$ is a winning strategy
for \pdiam in $G_{A,K^\infty}$.
Consider an infinite play $\pi=(w_0,q_0),\ldots$ that is an outcome of
$\kappa_{G_{A,K^\infty}}$.
During the construction of $K^\infty$ and $\kappa_{G_{A,K^\infty}}$
each prefix $(w_0,q_0),\ldots, (w_o,q_o)$ of $\pi$ was associated with
a prefix $v_0,\ldots,v_l$ of a play in $G^+_A$ compatible with
$\kappa_{G^+_A}$.
It follows that the limit of all these prefixes is an infinite play
compatible with $\kappa_{G^+_A}$ and hence winning in $G^+_A$.
Let $d_0,\ldots$ be the sequence of labels such that
$(v_i,d_i,v_{i+1})$ are the edges in $G^+_A$ taken in this play.
It follows that the sequence of state $q_0,\ldots$ corresponds to a run
of $T_A$ reading $\lfloor v_0,d_0\rfloor \lfloor v_1,d_1\rfloor \cdots$
with (potentially) additional stuttering resulting from entering a
modal state $q_i$ while the matching $v_{i'}$ is of the form
$(s,L(w_i))$ where $s\cap Q_l\neq\emptyset$.
As the stuttering is finite, it follows that $\inf(q_0,\ldots)$ is the
same as for the run of $T_A$.
As the run of $T_A$ is rejecting, the sequence $q_0,\ldots$ is winning
for \pdiam.
\hfill \qed
\end{appendixproof}

Consider the automaton $T_A$.
By Theorem~\ref{theorem:history determinization and determinization},
there exists an equivalent history deterministic automaton $H_A$.
Let $H_A=(\Sigma_A,T,t_0,\rho,\Omega')$.
By using $H_A$ we can turn the game $G^+_A$ to a parity game $G^*_A$
capturing the non-emptiness of $A$.

\begin{defn}[$G^*_A$]
	Consider the strategy arena $G_A=(V,V_\Diamond,V_\Box,E)$
	and the automaton $H_A$.
	We construct the parity game $G^*_A =
	(V',V'_\Diamond,V'_\Box,E',\Omega'')$, where the components of
	$G^*_A$
	are as follows.
	\begin{itemize}
		\item
		$V'=(V \times T) \cup (\Sigma_A \times V \times T)$
		\item
		$V'_\Diamond = V_\Diamond \times T$
		\item
		$V'_\Box = (V_\Box \times T) \cup (\Sigma_A \times V \times T)$
		\item
		The set of edges is:
		$$
		\begin{array}{r c @{\{} r @{~,~} l @{~|~} l @{\}\quad}l}
			E' & = &
			(((s,\sigma),t) & ((\sigma,d),v',t) & ((s,\sigma),d,v')
			\in E &
			\cup \\
			&& ((s,t) & (\sigma,v',t)) & (s,\sigma,v')\in E & \cup \\
			&& (((\sigma,d),v',t) & (v',t')) & t'\in
			\rho(t,(\sigma,d)) & \cup \\
			&& ((\sigma,v',t) & (v',t')) & t' \in \rho(t,\sigma)
		\end{array}
		$$
		\item
		The priority function $\Omega''$ is obtained from $\Omega'$ by
		setting $\Omega''(v,t) = \Omega'(t)+1$ and $\Omega''(a,v,t) =
		\Omega'(t)+1$.
	\end{itemize}
\end{defn}

\begin{theorem}[Game Translation]
	Player~$\Diamond$ wins in $G^+_A$ from some state
	$(s,\sigma)$ if and only if \pdiam wins in $G^*_A$ from
	$((s,\sigma),t_0)$.
\label{theorem:game translation}
\end{theorem}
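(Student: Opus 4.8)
\noindent\emph{Proof plan.}
The plan is to view $G^*_A$ as the synchronised product of the strategy arena $G_A$ with the history-deterministic automaton $H_A$, in which \pbox is put in charge of resolving the nondeterminism of $H_A$ (note that the new nodes in $\Sigma_A\times V\times T$ all belong to $V'_\Box$). First I would make precise the correspondence between plays of $G^+_A$ from $(s,\sigma)$ and plays of $G^*_A$ from $((s,\sigma),t_0)$: a play of $G^*_A$ projects to a play of $G^+_A$ by dropping the $T$-component and contracting each intermediate label node back into the edge that it labels, and conversely a play $\pi$ of $G^+_A$ together with a choice of $H_A$-transition for each letter of $\lfloor\pi\rfloor$ lifts to a play of $G^*_A$. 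The definition of $E'$ guarantees that the sequence $t_0,t_1,\ldots$ of $H_A$-states visited along a play $\pi^*$ of $G^*_A$ is precisely a run of $H_A$ on the word $\lfloor\pi\rfloor$, where $\pi$ is the projection of $\pi^*$. This projection respects node ownership (a node of $G_A$ lies in $V_\Diamond$ iff its copies in $G^*_A$ lie in $V'_\Diamond$), so strategies transport along it, and since $\Omega''$ is $\Omega'$ shifted by one, $\pi^*$ is won by \pdiam iff the largest priority of $H_A$ seen infinitely often along it is odd, that is, iff the run of $H_A$ carried by $\pi^*$ is rejecting. Recalling that $\pi$ is won by \pdiam in $G^+_A$ iff $\lfloor\pi\rfloor\notin L(T_A)=L(H_A)$, what remains is to bridge ``$H_A$-run rejecting'' and ``word not in the language''.

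For the implication from $G^+_A$ to $G^*_A$, let $\kappa$ be a winning strategy for \pdiam in $G^+_A$ from $(s,\sigma)$, and let \pdiam play in $G^*_A$ by mimicking $\kappa$ on the $G_A$-component while ignoring the $T$-component. Every play $\pi^*$ of $G^*_A$ consistent with this lifted strategy projects to a play $\pi$ of $G^+_A$ consistent with $\kappa$, so $\lfloor\pi\rfloor\notin L(T_A)=L(H_A)$; hence $H_A$ has no accepting run on $\lfloor\pi\rfloor$ at all, and in particular the run \pbox built along $\pi^*$ is rejecting, so \pdiam wins $\pi^*$.

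For the converse, let $\kappa^*$ be a winning strategy for \pdiam in $G^*_A$ from $((s,\sigma),t_0)$ and let $\varrho$ be a resolver witnessing that $H_A$ is history-deterministic. I would define a strategy $\kappa$ for \pdiam in $G^+_A$ by setting $\kappa(\tau)$, for a play prefix $\tau$ of $G^+_A$ ending in a \pdiam-node, to be the projection of the move prescribed by $\kappa^*$ on the play prefix $\tau^*$ of $G^*_A$ obtained from $\tau$ by inserting the intermediate label nodes and setting all $H_A$-components according to $\varrho$ along $\lfloor\tau\rfloor$. Any play $\pi$ of $G^+_A$ consistent with $\kappa$ then lifts, via $\varrho$, to a play $\pi^*$ of $G^*_A$ whose \pdiam-moves all follow $\kappa^*$; being consistent with $\kappa^*$, $\pi^*$ is won by \pdiam, so the run of $H_A$ that it carries, which is exactly the outcome $\varrho(\lfloor\pi\rfloor)$, is rejecting. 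By the defining property of a resolver this forces $\lfloor\pi\rfloor\notin L(H_A)=L(T_A)$, i.e.\ \pdiam wins $\pi$ in $G^+_A$.

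I expect the main obstacle to be this converse direction and, within it, the twofold role of the resolver: \pdiam's winning strategy in $G^*_A$ may branch on the $H_A$-component of the history, which is invisible from the standpoint of $G^+_A$, so turning it into a $G^+_A$-strategy requires supplying those components both legally for \pbox and as a function of the visible history --- exactly what a resolver does --- and the step ``$\varrho(\lfloor\pi\rfloor)$ rejecting, hence $\lfloor\pi\rfloor\notin L(H_A)$'' genuinely needs $H_A$ to be history-deterministic, a single rejecting run of a merely nondeterministic automaton being uninformative. The remaining work --- writing out the play correspondence in full, with the bookkeeping for the intermediate label nodes and for finite plays --- is routine.
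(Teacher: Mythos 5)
Your proposal is correct and takes essentially the same route as the paper, whose proof consists of a one-line appeal to the composition of history-deterministic automata with games due to Henzinger and Piterman: form the product game with \pbox resolving the nondeterminism of $H_A$, project plays back and forth, and invoke the resolver only in the direction from $G^*_A$ to $G^+_A$. You in fact supply considerably more detail than the paper does, and you correctly isolate the one place where history determinism (rather than mere language equivalence) is indispensable.
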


\begin{proof}
	We can show that \pbox wins in $G^*_A$ if and only if she wins in $G^+_A$.
	The proof follows the proof that history deterministic automata can
	be used in combination with games as in \cite{HenzingerP06}.
\end{proof}

\begin{remark}
An alternative way to view the construction of $T_A$ and $H_A$ is to
think about the dual of $T_A$ as a universal automaton recognizing
plays that are winning for \pdiam.
Then, the dual of $H_A$ would be a history-deterministic universal
parity automaton recognizing the same language.
The resolution of the transition function of a history-deterministic
universal automaton is delegated to \pbox just like it is in the
construction of $G^*_A$.
In particular, every history-determinization for nondeterministic
automata is, in fact, also a history-determinization for universal
automata.
This implies that the history-determinization construction of Henzinger
and Piterman \cite{HenzingerP06} can be also used for
under-approximating the losing region in an LTL game, which was left as
an open question in their paper.
\end{remark}

The following is a direct implication of
Theorems~\ref{theorem:simulation} and
\ref{theorem:game translation}.

\begin{corollary}[Emptiness]
Let $A$ be an alternating tree automaton.
Then $A$ is non-empty if and only if \pdiam
wins $G^*_A$ from the node $(\{q_0\},t_0)$.
\end{corollary}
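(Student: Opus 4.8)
The plan is to derive the statement as a direct corollary of the two preceding theorems, with only a small bridging argument to reconcile their exact formulations. First I would invoke Theorem~\ref{theorem:simulation}: the automaton $A$ is non-empty if and only if \pdiam wins $G^+_A$ from $\{q_0\}$. This reduces the claim to showing that \pdiam wins $G^+_A$ from $\{q_0\}$ precisely when she wins $G^*_A$ from $(\{q_0\},t_0)$.

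For this equivalence I would appeal to Theorem~\ref{theorem:game translation}. The one wrinkle is that that theorem is phrased for nodes of the form $(s,\sigma)$, while the emptiness question starts at the plain-set node $\{q_0\}$. I would handle this in one of two ways: either observe that the mechanism behind Theorem~\ref{theorem:game translation}---substituting the history-deterministic automaton $H_A$ for the nondeterministic $T_A$ in the winning condition, with \pbox resolving $H_A$'s nondeterminism along the play via a fixed resolver---is agnostic to node shape and hence applies equally to plain-set nodes; or spell out the extra round explicitly, noting that \pdiam's moves from $\{q_0\}$ in $G^+_A$ are exactly the letter choices $\sigma\mapsto(\{q_0\},\sigma)$, and that in $G^*_A$ these are mirrored by the edges $((\{q_0\},t_0),(\sigma,(\{q_0\},\sigma),t_0))$ and $((\sigma,(\{q_0\},\sigma),t_0),((\{q_0\},\sigma),t'))$ with $t'\in\rho(t_0,\sigma)$, landing at a node $(\{q_0\},\sigma,t')$ of the form covered by Theorem~\ref{theorem:game translation}; the intermediate bookkeeping priorities $\Omega'(t_0)+1$ are visited only finitely often and do not affect the parity of any infinite play.

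Chaining the two steps yields $A$ non-empty $\iff$ \pdiam wins $G^+_A$ from $\{q_0\}$ $\iff$ \pdiam wins $G^*_A$ from $(\{q_0\},t_0)$, which is exactly the assertion. I do not expect a genuine obstacle here, since all the real work---the forward-and-backward simulation between $A$ and $G^+_A$ in Theorem~\ref{theorem:simulation}, and the soundness of history determinization inside games in Theorem~\ref{theorem:game translation}---is already discharged. The only things requiring care are that the initial configurations line up (in particular that the resolver for $H_A$ starts in $t_0$, which matches the initial state of $T_A$ and hence of $A$) and that the node-type mismatch between the two theorem statements is benign, both of which the bridging argument above settles.
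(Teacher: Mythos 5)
Your proposal matches the paper exactly: the corollary is stated there as ``a direct implication of Theorems~\ref{theorem:simulation} and \ref{theorem:game translation}'' with no further argument given. Your extra care in bridging the node-type mismatch (the plain-set node $\{q_0\}$ versus the $(s,\sigma)$ nodes in the statement of Theorem~\ref{theorem:game translation}) is a sound and welcome refinement of the same approach, not a different route.
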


We now consider the complexity of the decision problem.
Let $A$ be an alternating tree automaton reading an alphabet of
size $m$ with $n$ states and rank $k$.
Then $G_A$ has $(m+1)\cdot 2^n$ vertices and $T_A$ has $n$ states and
rank $k$ as well.
Let $S_{HD}(n,k)$ denote the number of states and $R_{HD}(n,k)$ denote
the rank of a history deterministic automaton obtained from a
nondeterministic word automation with $n$ states and rank $k$.

\begin{corollary}[Complexity]\label{cor:treeautempt}
	Let $A$ be an alternating tree automaton reading an alphabet of size
	$m$, with $n$ states and rank $k$.
	The complexity of emptiness of $A$ is
	\paritytime{(m+1)\cdot 2^n \cdot S_{HD}(n,k)}{R_{HD}(n,k)}.
\end{corollary}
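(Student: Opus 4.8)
The plan is to read the number of vertices and the number of priorities of the parity game $G^*_A$ directly off its definition and then feed these two parameters into the complexity of parity game solving. The correctness part is already in place: by the Emptiness corollary --- which combines the Simulation theorem (Theorem~\ref{theorem:simulation}) with the Game Translation theorem (Theorem~\ref{theorem:game translation}) --- the automaton $A$ is non-empty if and only if \pdiam wins $G^*_A$ from $(\{q_0\},t_0)$. Hence it suffices to bound the number of vertices of $G^*_A$ by $\mathcal{O}((m+1)\cdot 2^n\cdot S_{HD}(n,k))$ and its rank by $\mathcal{O}(R_{HD}(n,k))$, and then to invoke the generic parity-game bound underlying the $\paritytime{\cdot}{\cdot}$ notation.

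First I would count the ingredients. The strategy arena $G_A$ has vertex set $V=(\pow(Q)\times\Sigma)\cup\pow(Q)$, and as these two parts are disjoint this is exactly $m\cdot 2^n+2^n=(m+1)\cdot 2^n$ vertices. The tracking automaton $T_A$ reuses the state set $Q$ and merely shifts the priority function by one, so it has $n$ states and rank $k$; applying Theorem~\ref{theorem:history determinization and determinization} to it yields the history-deterministic automaton $H_A$ with $|T|=S_{HD}(n,k)$ states and rank $R_{HD}(n,k)$. The principal positions of $G^*_A$ are the pairs in $V\times T$, numbering precisely $(m+1)\cdot 2^n\cdot S_{HD}(n,k)$. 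For the priorities, $\Omega''$ is the priority function $\Omega'$ of $H_A$ with $1$ added to every value (on both kinds of position), so after the routine normalization $G^*_A$ has rank $R_{HD}(n,k)$; this $+1$ shift is precisely what turns ``\pdiam forces $\lfloor\pi\rfloor\notin L(T_A)$'' into the even-parity winning convention, so nothing further is needed there.

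The step I expect to be the main --- and only genuinely non-routine --- obstacle is the accounting for the auxiliary, label-carrying positions $\Sigma_A\times V\times T$: the label alphabet $\Sigma_A$ contains all choice functions $D$ and is a priori much larger than $2^n$, so these positions cannot be counted naively. I would argue that they do not inflate the stated size parameter. An auxiliary position $((\sigma,d),v',t)$ (or $(\sigma,v',t)$) lies on an edge of $G_A$ paired with a state of $H_A$, and its sole role is to let \pbox choose some $t'\in\rho(t,(\sigma,d))$. Whenever $H_A$ is deterministic the position has a unique successor and can be contracted into \pdiam's preceding move; identifying the resulting parallel edges that carry distinct labels but have the same effect, $G^*_A$ collapses to the game on $V\times T$ --- exactly $(m+1)\cdot 2^n\cdot S_{HD}(n,k)$ vertices, with at most quadratically many edges. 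In the general history-deterministic case I would instead merge auxiliary positions that have the same successor set (hence bisimilar behaviour), reducing the intermediate \pbox layer to something controlled by the structure of $H_A$'s transition function; this is the point I would expect to take the most care. In all cases the priority-by-one shift leaves the rank equal to $R_{HD}(n,k)$.

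Putting the two estimates together, solving $G^*_A$ takes time $\paritytime{(m+1)\cdot 2^n\cdot S_{HD}(n,k)}{R_{HD}(n,k)}$, and by the Emptiness corollary this decides non-emptiness of $A$, which is the claim. Everything beyond the bookkeeping for the auxiliary positions is a direct substitution into the definitions of $G_A$, $T_A$, $H_A$ and $G^*_A$.
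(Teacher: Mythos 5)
Your proposal is correct and follows essentially the same route as the paper, which treats this corollary as immediate bookkeeping: it states the counts $|V|=(m+1)\cdot 2^n$ for $G_A$ and $n$ states, rank $k$ for $T_A$ in the preceding paragraph and then reads the game parameters off the definition of $G^*_A$. Your extra care about the auxiliary label-carrying positions $\Sigma_A\times V\times T$ is a point the paper silently glosses over, and your contraction/merging argument is a reasonable way to justify that they do not inflate the stated bound.
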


\begin{remark}
In case that $H_A$ is a deterministic automaton,
a winning strategy for \pdiam in $G^*_A$ directly induces a Kripke structure $K$ with set
of worlds $W=V_\Box\times T$ such that $A$ accepts $K$. Hence
non-empty alternating parity tree automata accept some structure of
size at most $(m+1)\cdot 2^n\cdot S_{HD}(n,k)$ which is
in $\mathcal{O}((m+1)\cdot 2^n\cdot ((nk)!)^2)$ by~\autoref{lem:parity2buchi}
and~\autoref{lem:sp} below.
\end{remark}

\section{Transformations of Word Automata}
\label{section:word automata}

We specialize parity acceptance conditions to the special cases of
\buchi and \cbuchi conditions.
In a \buchi condition the priority function uses only the priorities
$1,2$.
In a \cbuchi condition the priority function uses only the priorities
$0,1$.
For \buchi automata, we put $F=\{q\in Q\mid \Omega(q)=2\}$;
for \cbuchi automata, we put $F=\{q\in Q\mid \Omega(q)=0\}$;
in both cases, we put $\overline{F}=Q\setminus F$.
The \buchi acceptance requires
accepting runs to contain infinitely many accepting states, while \cbuchi acceptance requires
accepting runs to contain only finitely many non-accepting states.
A \cbuchi automaton is weak if for all its strongly connected components $C$,
we have $C\subseteq F$ or $C\subseteq \overline{F}$.
For deterministic automata, we extend $\delta$ from letters to finite
words in the obvious way.

\begin{lemma}[\hspace{-0.2pt}\cite{KKV01}]
	Let $A=(\Sigma,Q,q_0,\delta,\Omega)$ be a nondeterministic parity
	word automaton
	 of rank $k$.
	Then there is a nondeterministic \buchi word automaton
	$A'=(\Sigma,Q',q_0,\delta',F)$ such that $L(A)=L(A')$ and $|Q'|\leq
	\left ( \left \lceil \frac{k+1}{2} \right \rceil +1 \right)\cdot |Q|$.
	\label{lem:parity2buchi}
\end{lemma}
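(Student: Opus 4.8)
The plan is to use the classical ``guess the dominating even priority'' construction. Without loss of generality the priorities of $A$ all lie in $\{0,1,\dots,k\}$; let $E=\{p\in\{0,\dots,k\}\mid p\text{ even}\}$, and note for later that $|E|=\lfloor k/2\rfloor+1=\lceil\tfrac{k+1}{2}\rceil$ (these two expressions agree for every $k\ge 0$). I would set $Q'=Q\cup\bigcup_{p\in E}\bigl(Q_{\le p}\times\{p\}\bigr)$, where $Q_{\le p}=\{q\in Q\mid\Omega(q)\le p\}$ as in the notation already used in the paper, keep $q_0$ as the initial state (living in the ``first'' copy $Q$), and take the B\"uchi set to be $F=\bigcup_{p\in E}\{(q,p)\mid\Omega(q)=p\}$. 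This already gives $|Q'|\le|Q|+|E|\cdot|Q|=\bigl(\lceil\tfrac{k+1}{2}\rceil+1\bigr)\cdot|Q|$, so the size bound is immediate once the transition function is fixed.

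For $\delta'$ I would let the first copy simulate $A$ faithfully (a transition $q\to q'$ reading $a$ for every $q'\in\delta(q,a)$), and additionally allow a nondeterministic \emph{commitment} move $q\to(q',p)$ reading $a$, for any even $p\in E$ and any $q'\in\delta(q,a)$ with $\Omega(q')\le p$. Inside copy $p$ the automaton still simulates $A$ but is never allowed to leave the priority-$\le p$ region: a transition $(q,p)\to(q',p)$ reading $a$ exists iff $q'\in\delta(q,a)$ and $\Omega(q')\le p$. There are no transitions between distinct copies, nor from any copy back to $Q$, so every run of $A'$ is a (possibly empty) prefix inside $Q$ followed, if it ever commits, by an infinite suffix inside a single copy $p$; acceptance thus means ``eventually commit to some even $p$ and from then on see priority exactly $p$ infinitely often.''

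Correctness is the two inclusions. For $L(A)\subseteq L(A')$: given an accepting run $q_0q_1\cdots$ of $A$ on $w$, let $p$ be its largest infinitely-recurring priority (necessarily even) and let $m$ be least with $\Omega(q_i)\le p$ for all $i\ge m$; then $q_0\cdots q_{m-1}$ in the first copy followed by the commitment $q_{m-1}\to(q_m,p)$ (legal since $\Omega(q_m)\le p$) and then $(q_m,p)(q_{m+1},p)\cdots$ is a legal run of $A'$ visiting $F$ infinitely often because $p$ recurs in the tail; the case $m=0$ is handled by committing on the very first letter, where $\Omega(q_1)\le p$ still holds. For $L(A')\subseteq L(A)$: an accepting run of $A'$ meets $F$ infinitely often, hence is eventually confined to a single copy $p$ and visits $\{(q,p)\mid\Omega(q)=p\}$ infinitely often; projecting onto the first component yields a run of $A$ on $w$ whose tail uses only priorities $\le p$ yet uses $p$ infinitely often, so its dominating priority is the even number $p$ and the run is accepting, i.e.\ $w\in L(A)$.

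This is a textbook argument, so I do not expect a genuine obstacle; the only point requiring care is the commitment mechanism — the separate uncommitted first copy is needed precisely so that the finitely many occurrences of priorities strictly above $p$ in the prefix of an accepting run of $A$ can be simulated before the guess is made — together with checking that the two constraints defining copy $p$ (staying within $Q_{\le p}$ \emph{and} hitting priority exactly $p$ infinitely often, which is what $F$ enforces there) jointly capture precisely ``the greatest priority seen infinitely often equals $p$''. After that the size claim is just the observation that $\{0,\dots,k\}$ contains $\lceil\tfrac{k+1}{2}\rceil$ even numbers.
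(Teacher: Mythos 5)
Your proposal is correct and is essentially the same construction as the paper's (the classical guess-the-dominating-even-priority translation from \cite{KKV01}): an uncommitted copy of $Q$ plus one copy per even priority $p$ restricted to $Q_{\le p}$, with $F$ marking the states of priority exactly $p$ in copy $p$. The only cosmetic difference is that you allow the commitment move into copy $p$ at any target state of priority $\le p$, whereas the paper only permits it at states of priority exactly $p$; both variants are sound and yield the same state bound.
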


\begin{proof}
    We just recall the construction of $A'$ and refer to
    \cite{KKV01,HausmannEA18} for the proof of $L(A)=L(A')$.
    Intuitively, the automaton $A'$ nondeterministically guesses a
    position and an even priority $p$ such that there is a run of $A$
    on the input word such that from the guessed position on,
    no state with priority greater than $p$ is visited and some state
    with priority $p$ is visited infinitely often. Formally,
	we put $Q''=Q\times \{i\in\mathbb{N}\mid 0\leq i\leq  k\text{ and }i
	\text{ is even}\}$ and
	\begin{align*}
		Q'&=Q\cup Q''&
		F&=\{(q,i)\in Q''\mid\Omega(q)=i\}
	\end{align*}
	so that the claimed bound on the size of $A'$ follows immediately.
	The transition function $\delta'$ is defined, for $q\in Q$, even $i$
	such that
	$0\leq i\leq k$ and $a\in \Sigma$,
	by putting
	\begin{align*}
		\delta'(q,a)=\delta(q,a)\cup \{(q',i)\in Q''\mid
		q'\in\delta(q,a)\text{ and }\Omega(q')=i\}
	\end{align*}
	and
	\begin{align*}
		\delta'((q,i),a)=\{(q',i)\in\delta(q,a)\times\{i\}\mid
		\Omega(q')\leq i\}.
	\end{align*}
	\qed
\end{proof}



\begin{defn}[Limit-linear \cbuchi Automata]
A \cbuchi automaton $A=(\Sigma,Q,q_0,\delta,F)$
is \emph{limit-linear} if
for all $q\in F$, there is exactly one $\delta$-path that stays
in $F$ and leads from $q$ to $q$.
\end{defn}

\begin{defn}[Limit-deterministic Word Automata]
Fix a parity word automaton $A=(\Sigma,Q,q_0,\delta,\Omega)$.
Given a state $q\in Q$, the \emph{compartment} $C_q$ of $q$ consists of
all states that are reachable from $q$
by a path that visits states with priority at most $\Omega(q)$.
We say that $A$ is \emph{limit-deterministic (LD)}
if for all states $q$ such that $\Omega(q)$ is even, $C_q$ is \emph{internally deterministic},
that is, $|\delta(q',a)\cap C_q|\leq 1$ for all $q'\in C_q$ and $a\in \Sigma$.
\end{defn}
Thus a \buchi automaton is limit-deterministic if
all its states that are reachable from an accepting state are deterministic.
A \cbuchi automaton is limit-deterministic if all its accepting states are deterministic.
In particular, every limit-linear \cbuchi automaton is limit-deterministic.

\begin{lemma}
	The construction in Lemma~\ref{lem:parity2buchi} preserves limit
	determinism.
\end{lemma}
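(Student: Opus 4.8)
The plan is to trace through the construction of Lemma~\ref{lem:parity2buchi} and check that, when the input automaton $A$ is limit-deterministic, the output \buchi automaton $A'$ is limit-deterministic in the sense of the preceding definition, i.e.\ that every accepting state of $A'$ has a deterministic compartment (for \buchi automata this simplifies to: all states reachable from an accepting state are deterministic). First I would recall that $A' = (\Sigma, Q \cup Q'', q_0, \delta', F)$ with $Q'' = Q \times \{i \text{ even}, 0 \le i \le k\}$, $F = \{(q,i) \in Q'' \mid \Omega(q) = i\}$, and the only transitions out of a state $(q,i) \in Q''$ are $\delta'((q,i),a) = \{(q',i) \mid q' \in \delta(q,a), \Omega(q') \le i\}$. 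The key structural observation is that $Q''$ is a \emph{sink component}: once $A'$ enters $Q''$ (via a nondeterministic guess of a position and an even priority $i$), it never leaves $Q''$, and moreover it never changes the second coordinate $i$. Hence the set of states reachable from any accepting state $(q,i) \in F$ is contained in the single ``layer'' $Q \times \{i\}$, and those reachable states are exactly the pairs $(q',i)$ with $q'$ reachable from $q$ in $A$ along a path that only visits priorities $\le i$ — that is, $q' \in C_q$, the compartment of $q$ in $A$.

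The main step is then to relate determinism of $(q',i)$ in $A'$ to internal determinism of $C_q$ in $A$. For $(q',i) \in Q''$ with $q'$ reachable from an accepting state $(q,i)$, we have $q' \in C_q$ and $\Omega(q') \le i$; furthermore $\Omega(q) = i$ is even (as $(q,i) \in F$), so $\Omega(q)$ is even and $C_q$ is internally deterministic by the hypothesis that $A$ is limit-deterministic. Now $|\delta'((q',i),a)| = |\{(q'',i) \mid q'' \in \delta(q',a),\ \Omega(q'') \le i\}| = |\delta(q',a) \cap Q_{\le i}|$. Since any such $q''$ lies on a path from $q$ staying within priorities $\le i$, it lies in $C_q$, so $\delta(q',a) \cap Q_{\le i} \subseteq \delta(q',a) \cap C_q$, which has at most one element by internal determinism of $C_q$. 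Hence $|\delta'((q',i),a)| \le 1$, so every state reachable from an accepting state of $A'$ is deterministic, which is precisely limit-determinism of the \buchi automaton $A'$.

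I expect the only real subtlety — and the step to state carefully rather than the main obstacle — is the bookkeeping around \emph{which} compartment a reachable state belongs to: a state $(q',i) \in Q''$ may be reachable from several accepting states $(q_1,i), (q_2,i), \dots$, and one must be sure that in each case the relevant compartment $C_{q_j}$ is internally deterministic, i.e.\ that $\Omega(q_j)$ is genuinely even. This is immediate because membership $(q_j,i) \in F$ forces $\Omega(q_j) = i$ with $i$ even by definition of $Q''$. A second point worth a sentence is that states in the ``copy of $Q$'' part of $A'$ are never accepting and need not be deterministic, so limit-determinism of $A'$ does not require anything of the original $Q$-component — which is why the argument goes through even though Lemma~\ref{lem:parity2buchi} adds nondeterministic guessing transitions from $Q$ into $Q''$. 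Putting these observations together yields the claim with no further computation.
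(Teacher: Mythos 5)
Your proposal is correct and takes essentially the same route as the paper's proof: both observe that all states reachable from $F$ in $A'$ stay inside a single layer $Q\times\{p\}$ with $p$ even, identify that layer's reachable part with a compartment of a priority-$p$ state of $A$, note that $\delta'$ cannot leave that compartment, and conclude determinism from the internal determinism guaranteed by limit-determinism of $A$. The extra bookkeeping you flag (tracing each reachable state back to a specific accepting state to pin down an even-priority compartment) is a slightly more explicit version of the paper's ``$(q,p)$ is contained in the compartment of some state with priority $p$'' and changes nothing substantive.
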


\begin{proof}
	Let $A$ be a limit-deterministic parity automaton.
	Then we claim that $A'$ as constructed in
	Lemma~\ref{lem:parity2buchi} is limit-deterministic.
	Since $A'$ is a \buchi automaton and since all states that are
	reachable from
	some state in $F$ are contained in $Q\times\{p\}$ for some even $p$,
	it suffices to show that for all even $p$, all
	states $(q,p)\in Q\times\{p\}$ and all $a\in \Sigma$, we have
	$|\delta'((q,p),a)|\leq 1$.
	So let $(q,p)\in Q\times\{p\}$.
	Then, by construction of $A'$, $(q,p)$ is contained in the
	compartment $C$ of some
	state with priority $p$ in $A$.
	By definition of $\delta'$, we have $\delta'((q,p),a)\subseteq C$
	since $C$ is a compartment. Since $A$ is limit-deterministic, $C$ is
	internally
	deterministic which shows $|\delta'((q,p),a)|\leq 1$, as required.
	\qed
\end{proof}

\subsection{Determinizing Word Automata}

We give specialized determinization constructions for limit-linear
\cbuchi automata, nondeterministic \cbuchi automata,
limit-deterministic \buchi automata, and finally for general \buchi
automata and parity automata.


\begin{lemma}[Circle method]\label{lem:limitlinear}
	Let $A$ be a limit-linear \cbuchi automaton with $n$ states.
	Then there is a deterministic \cbuchi automaton $A'$
	with $n'$ states such that $L(A)=L(A')$ and $n'\leq n^2\cdot 2^{n}$.
\end{lemma}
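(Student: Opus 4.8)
The plan is to determinize a limit-linear \cbuchi automaton $A=(\Sigma,Q,q_0,\delta,F)$ by combining the standard subset construction (to track the reachable states, which is all we need for the non-accepting part of a co-B\"uchi condition) with a \emph{pointer} that follows the unique $F$-internal loop structure guaranteed by limit-linearity. The key structural fact to exploit is that, by the definition of limit-linear \cbuchi automata, within $F$ every state lies on \emph{exactly one} cycle through itself; equivalently, the restriction $\delta{\upharpoonright}F$ induces a function-like ``successor within $F$'' behaviour along that cycle, so once a run decides to stay in $F$ from some point on it is essentially deterministic. This is why a single extra pointer into $Q$ (cost $n$) on top of the subset (cost $2^n$) suffices, giving $n\cdot 2^n$, and after squaring out a ``breakpoint''-style refinement to certify that the pointed path really does stay in $F$ we land at $n^2\cdot 2^n$.

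First I would fix some convenient normal form for the $F$-structure of $A$: by limit-linearity, each $q\in F$ has a well-defined next state $\mathit{next}(q)$ on its unique $F$-loop (if $q$ has several $\delta$-successors in $F$, at most one of them can continue the unique loop — I would make this precise using that two distinct $F$-cycles through a shared state would give two paths from that state to itself). Then the deterministic automaton $A'$ has states of the form $(S, q, r)$ where $S\subseteq Q$ is the ordinary subset-construction component, $q\in Q\cup\{\#\}$ is the current ``tracked'' state that $A'$ bets will stay in $F$ forever, and $r\in Q\cup\{\#\}$ is a second component used as a breakpoint/verifier along the unique $F$-path — this is the source of the extra $n$ factor, yielding $n\cdot n\cdot 2^n = n^2\cdot 2^n$ as claimed (the $\#$'s are absorbed into the $n$'s up to constants, or one notes $|Q\cup\{\#\}|\le n+1$ and checks the bound still holds, possibly after discarding unreachable states). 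Transitions: $S$ updates by the usual powerset rule; when no track is active, $A'$ may nondeterministic--- no, it is deterministic, so instead it eagerly starts or maintains a track following $\mathit{next}$ whenever the current tracked state is in $F$, and resets to $\#$ (a ``bad event'') the moment the track would leave $F$. The accepting set of $A'$ consists of the states recording that such a reset just happened, but complemented appropriately so that $A'$ is co-B\"uchi: a run of $A'$ is accepting iff only finitely many resets occur, i.e. iff eventually the tracked path stays in $F$ forever.

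The correctness argument has two directions. For soundness, if a run of $A'$ on $w$ sees only finitely many resets, then after the last reset the tracked component follows $\mathit{next}$ forever and hence exhibits an actual run of $A$ that stays in $F$ from that point on, so $w\in L(A)$; here the unique-$F$-loop property is exactly what guarantees that ``following $\mathit{next}$'' is a legitimate run segment of $A$ and that it really remains in $F$. For completeness, if $w\in L(A)$, take an accepting run $\tau$ of $A$; it has a suffix entirely inside $F$, and by limit-linearity that suffix must eventually be periodic, cycling along one unique $F$-loop. I would argue that once $A'$'s subset component has stabilized enough to ``see'' a state on that loop, the deterministic policy of $A'$ will latch onto it (or onto some other state that also stays in $F$ — any such choice is fine) and never reset again, so the $A'$-run on $w$ is accepting.

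\textbf{The main obstacle} I anticipate is the completeness direction: making the deterministic ``latching'' policy robust. A purely greedy rule (track the smallest-indexed $F$-state currently present) could in principle be forced to reset infinitely often — at each reset it jumps to some other $F$-state which later also leaves $F$ — even though \emph{some} fixed state stays in $F$ forever. Ruling this out needs the second ($r$) component to act as a breakpoint that cycles through the candidates in a fair way and certifies progress around the unique loop, together with a finiteness/periodicity argument: because each $F$-loop is unique and $F$ is finite, there are only finitely many candidate loops, a bounded-reset scheduler cannot be defeated, and the breakpoint construction forces the run to commit to a genuinely persistent loop after finitely many steps. Pinning down this scheduler and proving it resets only finitely often on every word in $L(A)$ — while keeping the state count at $n^2\cdot 2^n$ rather than, say, $n!$ — is the delicate part; everything else (the subset part, the size bound, soundness) is routine bookkeeping.
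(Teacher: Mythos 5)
Your overall architecture matches the paper's: a powerset component ($2^n$), a single tracked state/token ($n$), and one further $n$-sized component, with the co-B\"uchi condition ``only finitely many resets of the track.'' You also correctly isolate the crux: a deterministic re-guessing policy must be scheduled so that, whenever \emph{some} run of $A$ eventually stays in $F$, the token latches onto it after finitely many resets. But this is exactly the step you leave open --- you describe the third component only as ``a breakpoint/verifier'' and end by saying that pinning down the scheduler ``is the delicate part.'' That is a genuine gap, not routine bookkeeping: without a concrete mechanism there is no proof of the completeness direction, and a breakpoint in the Miyano--Hayashi sense would track a \emph{set} of candidates, costing another factor of $2^n$ rather than $n$.

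The paper fills this gap by making the third component a \emph{counter} $c\in\{0,\ldots,|F|\}$ rather than a second state pointer, together with two fixed functions: $\mathsf{step}$, which advances the token to the next state on its (unique, by limit-linearity) accepting cycle, and $\mathsf{next}$, which rotates in a fair round-robin fashion through all accepting cycles. As long as the token $q$ can follow the input letter inside $F$ and remains supported by the powerset component, it does so and $c$ is unchanged; on each failure with $c>0$, the counter is decremented and the token is advanced within its current cycle by $\mathsf{step}$; only when $c=0$ (the sole non-accepting situation) does the token jump to $\mathsf{next}(q)$ and reset $c$ to $|F|$. Fairness of $\mathsf{next}$ guarantees the true accepting cycle $C$ is revisited infinitely often unless the token settles, and the $|F|$ retries within $C$, each shifting the token along the unique loop, let an induction on the distance from the token to the position of the true run show that the token eventually coincides with it and never fails again. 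Two smaller points: your state space $2^Q\times(Q\cup\{\#\})^2$ has $(n{+}1)^2\cdot 2^n$ states, exceeding the stated bound $n^2\cdot 2^n$ (the paper gets $n\cdot(|F|{+}1)\cdot 2^n\le n^2\cdot 2^n$ using $|F|<n$ after dispatching the trivial case $F=Q$); and your acceptance set is inverted relative to the paper's --- the paper marks the reset states (those with $c=0$) as the non-accepting ones directly, which is what a \cbuchi condition needs.
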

\begin{proof}
    Let $A=(\Sigma,Q,q_0,\delta,F)$. If
    $F=Q$, then $A$ is deterministic and we put $A'=A$.
    Otherwise, we have $|F|<n$ and proceed with the following construction,
    which is similar to the powerset construction, but additionally
    annotates macro-states with a single state and a counter.
    The states of accepting components are arranged in a cycle since
	$A$ is limit-linear.
    Intuitively, the single state component of macro-states identifies exactly one state in
    exactly one accepting cycle that has a token. The determinized automaton then checks whether
    it is possible to stay within this cycle forever, moving
    the token according to the letters that are read. If this is not possible, the automaton
    reduces the counter by one and moves the token to \emph{the} next state in the current
    cycle and again checks whether is possible to stay in the cycle forever
    when moving the token according to the read word.
    When this fails so often that the counter reaches $0$, the automaton picks a state
    from another accepting cycle, moves the token to this state
    and resets the counter. It is crucial that the moving of tokens between
    accepting cycles is done in a fair way, so that
    if the token changes cycles infinitely often, the
    token visits every accepting cycle infinitely often.
    Then the token eventually stays forever within one accepting cycle
    if and only if there is an accepting run.

	Formally, we proceed as follows. For moving the token between accepting
	cycles, we assume a function $\mathsf{next}:Q\to Q$ such that
	for $q\in Q$, $\mathsf{next}(q)$ is some abitrary but fixed state
	from an	accepting cycle of $A$ such that
	iterative application of $\mathsf{next}$ cycles through all
	accepting cycles of $A$ in a fair manner.
    We also assume a function $\mathsf{step}:F\to F$ that cycles
	through the states of a single accepting cycle of $A$ in a fair
	manner;
	formally, we put $\mathsf{step}(q)=q'$ where $q'$ is \emph{the} state
	such that
	there is some $a\in\Sigma$ such that $\delta(q,a)\cap F = \{q'\}$.
	We define the deterministic \cbuchi automaton $A'=(\Sigma,Q',u_0,\delta',F')$
	by putting
	\begin{align*}
		Q'&=2^Q\times Q\times \{0,\ldots, |F|\} & F'&=\{(U,q,c)\in Q'\mid c\neq 0\}
	\end{align*}
	and $u_0=(\{q_0\},q_0,0)$. The claimed bound on the size of $A'$ follows
	immediately since $|F|<n$ so that $|\{0,\ldots,|F|\}|\leq n$.
	Finally, the transition relation $\delta'$ is defined by putting,
    for $(U,q,c)\in Q'$ and $a\in\Sigma$,
	\begin{align*}
		\delta'((U,q,c),a)=(\delta(U,a),q',c)
	\end{align*}
	if $c\neq 0$, $\delta(q,a)\cap F=\{q'\}$ and $q\in U$; this
	moves the token within the current accepting cycle according to the
	input letter, if possible. Otherwise, the run represented by the token
	does not stay in the current accepting cycle and we move the token to another
	state. This is achieved by putting
	\begin{align*}
		\delta'((U,q,c),a)=\begin{cases}
     		(\delta(U,a),\mathsf{next}(q),|F|) & \text{ if } c=0\\
			(\delta(U,a),\mathsf{step}(\mathsf{step}(q)),c-1) & \text{ if } c>0
		\end{cases}
	\end{align*}
	If $c=0$, then the token is moved to the next accepting cycle and the
	counter is reset to $|F|$; if
	$c>0$, then the token is moved to the next state in the current accepting cycle
	(to also incorporate the $a$-transition that takes place, we apply
	$\mathsf{step}$ twice) and the counter is reduced by $1$.

	\begin{toappendix}

		To see $L(A)\subseteq L(A')$, let $\tau=q_0,q_1,\ldots$ be an
		accepting
		run of $A$ on some word $w$. Then $\tau$ stays within one accepting
		component $C$
		of $A$ from some point on; let $i$ be a position from which on
		this the case. Since $A$ is limit-linear, $C$ forms a cycle.
		Let $\pi$ be the run of $A'$ on $w=a_0,a_1,\ldots$. We have to show
		that there is a position $j$
		such that all states in $\pi$ are of the shape
		$(U_{j'},q'_{j'},c_{j'})$ from position $j$ on.
		If there is some position $j$ and some accepting component $C'\neq
		C$ such that
		the states in $\pi$ are of the shape $(U_{j'},q'_{j'},c_{j'})$ such
		that
		$q'_{j'}\in C'$ for all $j'\geq j$, then we are done. Otherwise, it
		suffices
		to show that there is a position $j\geq i$ such that the $j-th$
		state in $\pi$ is of the shape $(U_j,q_j,c_j)$, since
		we then have $\delta(q_{j'},a_{j'})\cap F=\{q_{j'+1}\}$ and
		$q_{j'}\in U_{j'}$
		for all $j'\geq j$, since $A$ is limit linear.
		Since $\mathsf{next}$ cycles through the strongly connected
		components
		of $A$ in a fair manner and since there is no position $j$ such that
		there is an accepting component $C'\neq C$ such that
		the states in $\pi$ are of the shape $(U_{j'},q'_{j'},c_{j'})$ such
		that
		$q'_{j'}\in C'$ for all $j'\geq j$, there is some $j''$ such that
		the $j''$-th state in $\pi$ is of the shape
		$(U_{j''},q'_{j''},c_{j''})$, where
		$q'_{j''}\in C$. We proceed by induction of the length
		$m(q'_{j''},q_{j''})$ of
		\emph{the} path from $q'_{j''}$ to $q_{j''}$. If
		$m(q'_{j''},q_{j''})=0$,
		then $q'_{j''}=q_{j''}$ so that we are done. If
		$m(q'_{j''},q_{j''})>1$ then
		we distinguish cases. If $\delta(q'_{j''},a_{j''})\cap
		F\neq\emptyset$ and
		$q'_{j''}\in U_{j''}$ for all $j''\geq j'$, then we are done.
		Otherwise, we eventually reach a state $(U_o,q_o,c_o)$ such that
		$\delta(q'_{o},a_{o})\cap F=\emptyset$ or $q'_{o}\in U_{o}$.
		For the next state $(U_{o+1},q_{o+1},c_{o+1})$ we then have
		$c_{o+1}=c_o-1$ and
		$q'_{o+1}=\delta(\mathsf{step}(q'_o),a)$. We have
		$m(q'_{o+1},q_{o+1})<m(q'_{o},q_{o})$
		so that the inductive hypothesis finishes the case.

		For the converse direction, we have to show $L(A')\subseteq L(A)$.
		Let $w=a_0,a_1,\ldots$ be a word such that $w\in L(A')$ and let
		$\pi$ be the accepting run of $A'$ on $w$. Since $\pi$ is
		accepting, there is a position
		$i$ such that the
		states in $\pi$ are, for all $i'\geq i$, of the shape
		$(U_{i'},q_{i'},c)$ for some $c$
		and we have $\delta(q_{i'},a_{i'})\cap F=\{q_{i'+1}\}$ and
		$q_{i'}\in U_{i'}$;
		in particular, $q_{i'}\in F$ for all $i'>i$. Let $\tau_1$ be a run
		of $A$ on
		$a_0,\ldots,a_i$ that ends in $q_i$ and put
		$\tau=\tau_1;q_{i+1},q_{i+2},\ldots$
		so that $\tau$ is an accepting run of $A$ on $w$.
	\end{toappendix}
	\qed
\end{proof}

\begin{example}
Consider the limit-linear \cbuchi automaton $A$
depicted below,
and the equivalent deterministic \cbuchi automaton $A'$ obtained by
using the construction from Lemma~\ref{lem:limitlinear};
to be able show a complete example, $A$ is picked to be a very simple automaton
(accepting just the word $(ab)^\omega$).
For brevity, we depict only the reachable part of $A'$ and collapse
all macro-states of the shape $(\emptyset,q,c)$ to a single non-accepting
sink state $\bot$. Any macro-state
in $A'$ that has a nonzero counter value is accepting.
We have $\mathsf{step}(y)=u$ and $\mathsf{step}(u)=y$.
Since there is just one accepting strongly connected component in $A$,
we assume that $\mathsf{next}(x)=\mathsf{next}(y)=\mathsf{next}(z)=y$,
and $\mathsf{next}(u)=u$.
\medskip

\hspace{10pt}
\begin{minipage}{.3\linewidth}
\vspace{-10pt}
		 $A$:\\
\tikzset{every state/.style={minimum size=15pt}}
\begin{tiny}
  \begin{tikzpicture}[
		auto,
    node distance=1.3cm,
    semithick
    ]
     \node[state,initial above] (0) {$x$};
     \node[state,accepting] (1) [below of=0] {$y$};
     \node[state,accepting] (2) [right of=1] {$u$};
     \node[state] (3) [right of=0] {$z$};
     \path[->] (0) edge node [pos=0.5,left] {$a$} (1);
     \path[->] (0) edge node [pos=0.4,above] {$a$} (3);
     \path[->] (1) edge [bend right=15] node [pos=0.5,below] {$b$} (2);
     \path[->] (2) edge [bend right=15] node [pos=0.5,above] {$a$} (1);
     \path[->] (3) edge [loop right] node [right] {$b$} (3);
       \end{tikzpicture}
\end{tiny}

    \end{minipage}%
    \begin{minipage}{.7\linewidth}
$A'$:
\tikzset{every state/.style={minimum size=25pt}}
\begin{tiny}

  \begin{tikzpicture}[
		auto,
    node distance=0.8cm,
    semithick
    ]
     \node[rounded corners, draw,initial above] (1) {$\{x\},x,0$};
     \node[rounded corners, draw,accepting] (2) [below of=1] {$\{y,z\},y,2$};
     \node[rounded corners, draw,accepting] (3) [below of=2] {$\{u,z\},u,2$};
     \node (yo) [right of=3] {};
     \node[rounded corners, draw,accepting] (4) [right of=yo] {$\{y\},y,2$};
     \node (yo1) [right of=4] {};
     \node[rounded corners, draw,accepting] (5) [right of=yo1] {$\{u\},u,2$};
     \node (yo2) [right of=5] {};
     \node[rounded corners, draw,accepting] (6) [right of=yo2] {$\{z\},u,1$};
     \node[rounded corners, draw] (7) [above of=6] {$\{z\},u,0$};
     \node[rounded corners, draw,accepting] (8) [above of=7] {$\{z\},u,2$};
     \node[rounded corners, draw] (9) [above of=yo1] {$\bot$};
     \path[->] (1) edge node [above] {$b$} (9);
     \path[->] (1) edge node [right] {$a$} (2);
     \path[->] (2) edge node [above] {$a$} (9);
     \path[->] (2) edge node [right] {$b$} (3);
     \path[->] (3) edge node [above] {$a$} (4);
     \path[->] (3) edge [bend right=15] node [pos=0.5,below] {$b$} (6);
     \path[->] (4) edge node [left] {$a$} (9);
     \path[->] (4) edge [bend right=15] node [above] {$b$} (5);
     \path[->] (5) edge node [right] {$b$} (9);
     \path[->] (5) edge [bend right=15] node [above] {$a$} (4);
     \path[->] (6) edge node [above] {$a$} (9);
     \path[->] (6) edge node [left] {$b$} (7);
     \path[->] (7) edge node [above] {$a$} (9);
     \path[->] (7) edge node [left] {$b$} (8);
     \path[->] (8) edge node [above] {$a$} (9);
     \path[->] (8) edge [bend left=75] node [pos=0.5,right] {$b$} (6);
     \path[->] (9) edge [loop above] node [above] {$a,b$} (9);

     \end{tikzpicture}
\end{tiny}
\end{minipage}
The automaton $A'$ starts with the token at $x$ and with counter value $0$.
When reading $a$, the token is moved to $\mathsf{next}(x)=y$ and the counter is reset
to $2$. Afterwards, there are two cases: If the automaton
reads $bb$, it is not possible in $A$ to move the token accordingly from
$y$ and stay in the accepting cycle between $y$ and $u$. Thus $A'$ transitions
to $(\{z\},u,1)$, intuitively moving the token to the next accepting cycle,
which in this example moves the token to $u$. This state
however is not contained in the powerset component $\{z\}$ so that the automaton rejects
the word, which is reflected by the fact that $(\{z\},u,1)$ accepts the empty language.
The other option to proceed from $(\{y,z\},y,2)$ is by reading sequences $(ba)^*$,
which results in repeatedly moving the token from $y$ to $u$ and back to $y$;
if this continues forever, the word is accepted by $A'$.
Otherwise, a sequence $aa$ or $bb$ is read eventually and the automaton transitions
to the sink state and rejects the word.

\end{example}

\begin{lemma}[Miyano-Hayashi~\cite{MiyanoHayashi1984}]\label{lem:miyanohayashi}
Given a nondeterministic \cbuchi automaton $A=(\Sigma,Q,q_0,\delta,F)$,
there is a deterministic \cbuchi automaton $A'=(\Sigma,Q',u_0,\delta',F')$ such
that $L(A)=L(A')$ and $|Q'|\leq 3^{|Q|}$.
\end{lemma}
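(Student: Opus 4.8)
The plan is to use a breakpoint variant of the subset construction. I would take $A'=(\Sigma,Q',u_0,\delta',F')$ with $Q'=\{(S,B)\mid B\subseteq S\subseteq Q\}$, so that $|Q'|\le 3^{|Q|}$ since each state of $Q$ is either outside $S$, in $S\setminus B$, or in $B$; the initial state is $u_0=(\{q_0\},\{q_0\}\cap F)$. The first component $S$ is the usual powerset track, $\delta(S,a)=\bigcup_{q\in S}\delta(q,a)$, recording all states reachable on the prefix read so far. The second component $B$ tracks one ``candidate safe run set'': while it is nonempty it is updated by $B\mapsto\delta(B,a)\cap F$, and once it becomes empty it is reinitialised at the next step to the set of all currently reachable states lying in $F$. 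Formally, $\delta'((S,B),a)=(\delta(S,a),B')$ where $B'=\delta(B,a)\cap F$ if $B\neq\emptyset$ and $B'=\delta(S,a)\cap F$ if $B=\emptyset$; a short induction shows $B\subseteq S$ and $B\subseteq F$ are maintained. The accepting set is $F'=\{(S,B)\in Q'\mid B\neq\emptyset\}$, so that by the co-B\"uchi convention a run of $A'$ is accepting exactly when its $B$-component is empty only finitely often, i.e.\ when the candidate set is ``reset'' only finitely often.

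I would then prove the two language inclusions separately. For $L(A)\subseteq L(A')$: given $w=a_0a_1\cdots\in L(A)$, fix an accepting run $\rho=q_0,q_1,\ldots$ of $A$ and a position $i_0$ with $q_j\in F$ for all $j\ge i_0$, and let $(S_0,B_0),(S_1,B_1),\ldots$ be the unique run of $A'$ on $w$. If some $B_j=\emptyset$ with $j\ge i_0$, then $B_{j+1}=\delta(S_j,a_j)\cap F$ contains $q_{j+1}$ (as $q_{j+1}\in\delta(q_j,a_j)\subseteq S_{j+1}$ and $q_{j+1}\in F$), and inductively $q_k\in B_k$ for all $k\ge j+1$, so $B_k\neq\emptyset$ from then on; hence $B_j=\emptyset$ for only finitely many $j$ and $A'$ accepts $w$. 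For $L(A')\subseteq L(A)$: if the run $(S_j,B_j)_j$ of $A'$ on $w$ is accepting, pick $N$ with $B_j\neq\emptyset$ for all $j\ge N$; then no reset occurs from step $N$ on, so $B_{j+1}=\delta(B_j,a_j)\cap F$ for $j\ge N$, meaning $B_j$ is precisely the set of states reachable from $B_N$ on $a_N\cdots a_{j-1}$ while staying inside $F$. Since each such $B_j$ is nonempty and the associated $F$-restricted subset DAG is finitely branching, K\"onig's lemma gives an infinite path $q'_N,q'_{N+1},\ldots$ with $q'_j\in B_j\subseteq F$ and $q'_{j+1}\in\delta(q'_j,a_j)$; prefixing it with any run of $A$ on $a_0\cdots a_{N-1}$ ending in $q'_N\in S_N$ yields a run of $A$ on $w$ that visits $Q\setminus F$ only finitely often, so $w\in L(A)$.

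The only place that needs real care is the design of the $B$-track so that the acceptance condition remains a genuine co-B\"uchi condition on the states of $A'$. Refilling $B$ immediately when a candidate set dies would make the event ``a candidate set died'' invisible in the state, requiring an extra flag bit and doubling the size; instead I let $B$ pass through $\emptyset$ for one step before refilling, so that ``the candidate set has just been reset'' is faithfully recorded by $B=\emptyset$, which is exactly the non-accepting condition. With that in place, both inclusions are routine inductions together with one application of K\"onig's lemma, and the $3^{|Q|}$ bound is immediate from the shape of $Q'$; the case $q_0\notin F$, where $u_0$ is already non-accepting, is harmless.
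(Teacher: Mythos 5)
Your construction is exactly the Miyano--Hayashi breakpoint construction given in the paper: the same state space (pairs $(S,B)$ with the second component tracking a candidate set of $F$-runs, giving the $3^{|Q|}$ bound), the same transition function with the one-step pass through $B=\emptyset$ before refilling, and the same co-B\"uchi acceptance set $F'=\{(S,B)\mid B\neq\emptyset\}$. The paper merely states the construction and cites \cite{MiyanoHayashi1984} for correctness, whereas you also supply the two language inclusions (induction plus K\"onig's lemma), which are correct.
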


\begin{proof}
We just show the construction of $A'$; for the proof of $L(A)=L(A')$
we refer to~\cite{MiyanoHayashi1984}. The construction
is similar to the powerset construction but additionaly tracks subsets $V$ of the
accepting states $U\cap F$ of macro-states $U\subseteq Q$. Intuitively, there is,
for each state in $V$, a run of $A$ that has not left $F$ recently. Whenever this set is
the empty set, it is reset to all accepting states of the current macro-state.
A run of $A'$ then is accepting if such resetting steps happen only finitely often,
ensuring the existence of a run of $A$ that from some point on stays within $F$ forever.
Formally, we put
\begin{align*}
Q'&=\{(U,V)\mid U\subseteq Q, V\subseteq U\cap F\}&
F'&=\{(U,V)\in Q'\mid V\neq \emptyset\}
\end{align*}
and $u_0=(\{q_0\},\emptyset)$.
The claimed bound on the size of $A'$ follows since macro-states $(U,V)\in Q'$
can be coded by functions $f:Q\to\{0,1,2\}$ where $f(q)=0$ if $q\notin U$,
$f(q)=1$ if $q\in U$ but $q\notin V$ and $f(q)=2$ if $q\in V$; the number of
such functions is bounded by $3^{|Q|}$.
We define $\delta'$ by putting
\begin{align*}
\delta'((U,V),a)=\begin{cases}
(\delta(U,a),\delta(V,a)\cap F) & \text{if }V\neq\emptyset\\
(\delta(U,a),\delta(U,a)\cap F) & \text{if }V=\emptyset
\end{cases}
\end{align*}
for $(U,V)\in Q'$ and $a\in \Sigma$.
%
\qed
\end{proof}

\begin{example}
Consider the nondeterministic \cbuchi automaton $A$
depicted below,
and the equivalent deterministic automaton $A'$ obtained by
using the construction from Lemma~\ref{lem:miyanohayashi};
both automata accept exactly the infinite words over $\Sigma=\{a,b\}$
that contain $a$ finitely often.
For brevity, we depict only the reachable part of $A'$ and
label macro-states $(U,V)$ with $U\setminus V,V$.
\medskip

\begin{minipage}{.3\linewidth}
$\quad$
		 $A$:\\
\tikzset{every state/.style={minimum size=15pt}}
\begin{tiny}
  \begin{tikzpicture}[
		auto,
    node distance=1.0cm,
    semithick
    ]
     \node[state,initial above] (0) {$x$};
     \node (yo) [below of=0] {};
     \node[state] (1) [left of=yo] {$y$};
     \node[state,accepting] (2) [right of=yo] {$z$};
     \path[->] (0) edge node [pos=0.3,left] {$a,b$} (1);
     \path[->] (0) edge node [pos=0.3,right] {$a$} (2);
     \path[->] (1) edge [loop left] node [left] {$a$} (1);
     \path[->] (1) edge [bend right=15] node [pos=0.5,below] {$a,b$} (2);
     \path[->] (2) edge [bend right=15] node [pos=0.5,above] {$a$} (1);
     \path[->] (2) edge [loop right] node [right] {$b$} (2);
       \end{tikzpicture}
\end{tiny}

    \end{minipage}%
		$\quad$
		$\quad$
		$\quad$
    \begin{minipage}{.7\linewidth}
$A'$:
\tikzset{every state/.style={minimum size=15pt}}
\begin{tiny}

  \begin{tikzpicture}[
		auto,
    node distance=1.0cm,
    semithick
    ]
     \node[rounded corners, draw,initial above] (1) {$\{x\},\emptyset$};
		 \node (yo) [right of=1] {};
     \node[rounded corners, draw] (2) [below of=1] {$\{y\},\emptyset$};
     \node[rounded corners, draw,accepting] (3) [right of=yo] {$\{y\},\{z\}$};
     \node[rounded corners, draw,accepting] (4) [below of=3] {$\emptyset,\{z\}$};
		 \node (yo2) [right of=3] {};
     \node[rounded corners, draw] (5) [right of=yo2] {$\{y,z\},\emptyset$};
     \path[->] (1) edge node [left] {$b$} (2);
     \path[->] (1) edge node [above] {$a$} (3);
     \path[->] (2) edge node [above] {$a$} (3);
     \path[->] (2) edge [bend right=15] node [pos=0.5,below] {$b$} (4);
     \path[->] (3) edge [bend right=15] node [pos=0.5,below] {$a$} (5);
     \path[->] (3) edge node {$b$} (4);
     \path[->] (4) edge [loop below] node [below] {$b$} (4);
     \path[->] (4) edge [bend right=15] node [pos=0.5,below] {$a$} (2);
     \path[->] (5) edge [bend right=15] node [pos=0.5,above] {$a$} (3);
     \path[->] (5) edge [bend left=15] node [pos=0.5,below] {$b$} (4);
     \end{tikzpicture}
\end{tiny}
\vspace{10pt}
\end{minipage}
The $a$-transition from the accepting macro-state $(\{y\},\{z\})$ in $A'$ leads
to the non-accepting macro-state $(\{y,z\},\emptyset)$ and not to $(\{y\},\{z\})$;
the tracked set of accepting states is then reset to $\{z\}$ after a further
$a$- or $b$-transition.
This reflects the fact that no run of $A$ can stay in the accepting state
$z$ by reading the letter $a$ so that all words that contain $a$ infinitely
often are rejected by both $A$ and $A'$.
\end{example}


\begin{lemma}[Permutation method~\cite{EsparzaKRS17,HausmannEA18}]\label{lem:perm}
Let $A$ be a limit-deterministic \buchi automaton with $n$ states.
Then there is a deterministic parity automaton $A'$
with $n'$ states and $2n$ priorities such that $L(A)=L(A')$ and  $n'\leq e(n+1)!$.
\end{lemma}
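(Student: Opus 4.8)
The plan is to determinize $A$ by an appearance-record construction running on the deterministic part of $A$. First I would isolate that deterministic part: by the characterization of limit-determinism for \buchi automata recalled above, the set $D$ of states reachable from an accepting state forms a deterministic sub-automaton of $A$; moreover $D$ is closed under $\delta$ (reachability from $F$ is preserved by transitions) and contains $F$. Hence a run of $A$ is accepting if and only if it eventually enters $D$ and from then on visits $F$ infinitely often, the continuation inside $D$ being forced once $D$ has been entered.

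The automaton $A'$ then runs the usual subset construction on $A$ and, in addition, maintains for the currently reachable states that lie in $D$ an ordered list (a ``permutation''), recording the order in which these states first committed to $D$, i.e.\ first entered $D$: newly committed states are appended at the end, committed tokens move deterministically along $\delta$, and when two committed tokens collide they are merged, retaining the older (smaller-ranked) one and shifting the ranks behind the deleted token down by one. The parity index of a transition is read off this bookkeeping in appearance-record style: deleting a token of rank $i$ yields the odd priority $2i-1$, a token of rank $i$ moving into $F$ yields the even priority $2i$, and the smallest rank carrying a relevant event governs the emitted priority (with a default rejecting value when nothing relevant happens). The precise rule --- in particular which short-lived tokens count as relevant, so that transient ``junk'' tokens that keep re-appearing and dying at some fixed rank do not create a spurious dominating odd priority --- is exactly the one of~\cite{EsparzaKRS17,HausmannEA18}, which I would reuse and, if needed, merely transcribe into the present notation.

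The size bound is the easy part. A macro-state is determined by the reachable subset of $Q$ together with the age-ordering of its intersection with $D$ (and, in the variant that additionally stores a breakpoint pointer, one further index $\le n+1$); coding this crudely as a linearly ordered subset of $Q$ (plus the optional index), the count is at most $\sum_{k=0}^{n}\binom{n}{k}k! = n!\sum_{j=0}^{n}\frac{1}{j!} < e\cdot n!$, respectively $\le (n+1)\cdot e\cdot n! = e(n+1)!$, as claimed. Ranks never exceed $|D|\le n$, so only the $2n$ priorities $\{1,\dots,2n\}$ occur.

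The remaining, and main, obstacle is correctness, $L(A')=L(A)$ --- which in the write-up I would largely discharge by citing~\cite{EsparzaKRS17,HausmannEA18}. The inclusion $L(A)\subseteq L(A')$ is routine: an accepting run of $A$ eventually lies in $D$ and, through the merging, is carried by a token whose rank stabilizes (only finitely many older tokens, each either permanent or eventually deleted) and that visits $F$ infinitely often, so the highest priority occurring infinitely often in the run of $A'$ is even. For $L(A')\subseteq L(A)$ one argues the converse: if the highest infinitely-recurring priority is even, say $2m$, then rank $m$ is deleted only finitely often and sees $F$ infinitely often, and a genuine accepting run of $A$ can be extracted from the corresponding token; the delicate points --- ranks shifting under deletions, tokens being merged along the way, and above all the calibration ensuring that deletions of junk tokens at ranks above $m$ do not dominate --- are precisely what the cited constructions are designed to handle, and verifying them in this notation is where the real work lies.
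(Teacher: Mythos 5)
Your proposal follows essentially the same route as the paper: a subset construction on the nondeterministic states combined with an age-ordered partial permutation on the deterministic part (new states appended at the end, merged tokens keeping the older position), priorities read off the oldest position that is active or ending, the $e(n+1)!$ bound obtained by counting ordered subsets, and the full correctness argument deferred to \cite{EsparzaKRS17,HausmannEA18} --- exactly as the paper does. The one concrete discrepancy is the orientation of the priorities: you emit $2i-1$ for a deletion at rank $i$ and $2i$ for a visit to $F$ at rank $i$, with the smallest relevant rank governing, but under the max-parity convention used throughout this paper this numbering falls into precisely the ``junk at high ranks'' trap you mention, since a rank $j>m$ may emit its (numerically larger) priority on every transition on which rank $m$ is silent, and would then dominate the limsup. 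The paper therefore inverts the numbering, assigning $2(|Q_D|-i)+3$ to an ending position and $2(|Q_D|-i)+2$ to an active one, so that older positions carry larger priorities; with that flip (or, equivalently, a min-parity reading) your sketch coincides with the paper's proof.
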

\begin{proof}
We sketch just the construction of $A'$ and refer to~\cite{EsparzaKRS17,HausmannEA18}
for the proof of equivalence of $A$ and $A'$.
Intuively, $A'$ is similar to the powerset automaton of $A$, but additionally
keeps a permutation on the deterministic states in macro-states, indicating
the order in which runs leading to the respective states have last seen an accepting state.
Additionaly, states in $A'$ contain a third component which indicates the
leftmost position in the permutation that is \emph{active} or \emph{ending}
by the transitions leading to the current state in $A'$. Here, a position is
active in an $a$-transition, if the state at this position in the current
permutation is accepting; a position is said to be ending if all runs of $A$
that are represented by the state at this position end
when reading the letter $a$ or lead to a state at an older position.
A parity condition then uses this information
to detect a position in the permutation components that is active
infinitely often but, from some point
on, never ends. This ensures the existence of a continuous run of $A$ that
visits some accepting state infinitely often.\medskip

Formally, we proceed as follows. Given a limit-deterministic
\buchi automaton $A=(\Sigma,Q,q_0,\delta,F)$
with sets $Q_D,Q_N\subseteq Q$ of deterministic and nondeterministic states,
respectively, we have that every state reachable from $F$ is contained in $Q_D$.
We assume without loss of generality that $q_0\in Q_N$.
We let $\mathsf{perm}(Q_D)$ denote the set of partial permutations over $Q_D$,
that is, $\mathsf{perm}(Q_D)$ consists of all partial functions
$f:Q_D\rightharpoonup |Q_D|$ such
that $f(q)\neq f(q')$ for all $q,q'\in\mathsf{dom}(f)$ such that $q\neq q'$.
We denote the empty permutation by $[]$ ($\mathsf{dom}([])=\emptyset$).
Then we define the deterministic parity automaton
$A'=(\Sigma,Q',u_0,\delta',\Omega)$ by putting
\begin{align*}
Q'&= 2^{Q_N}\times \mathsf{perm}(Q_D)\times\{1,\ldots, 2|Q_D|+1\}&
\Omega(U,f,p)&=p
\end{align*}
and $u_0=(\{q_0\},[],1)$.
The claimed bounds on the size and number of priorities of $A'$ follows.
The transition function $\delta'$ is defined by putting, for $(U,f,p)\in Q'$ and
$a\in \Sigma$,
\begin{align*}
\delta'((U,f,p),a)=(\delta(U,a)\cap Q_N,f',p'),
\end{align*}
where $f'$ denotes the partial permutation that
is obtained by applying $a$-transitions from $\delta$ to the partial
permutation $f$, keeping the ordering
intact but removing elements that do not have an outgoing $a$-transition;
here it is crucial that all states in $f$ are deterministic so that
it is never the case that additional elements are inserted between any two elements
of the permutation.
Furthermore, we add all states from $\delta(U,a)\cap Q_D$ that do not already occur in
this new permutation to the end of it (the order of these elements is irrelevant).
Let $i\geq 1$ be the leftmost position
in $f'$ such that
$f(i)$ is defined and $\delta(f(i),a)\neq f'(i)$ (including the case that
$f'(i)$ is undefined), or we have $f'(i)\in F$.
Thus $i$ identifies the leftmost position in the partial permutation
that is active or ending (possibly both).
If no such $i$ exists, put $p'=1$.
Otherwise, if $\delta(f(i),a)\neq f'(i)$, then put $p'=2(|Q_D|-i)+3$;
if $\delta(f(i),a)=f(i)\in F$, then put $p'=2(|Q_D|-i)+2$.
\qed
\end{proof}

\begin{example}
For the limit-deterministic \buchi automaton $A$ with
$Q_N=\{x,z\}$ and $Q_D=\{y,u\}$ depicted below,
we obtain the equivalent deterministic parity automaton $A'$ using
the construction from Lemma~\ref{lem:perm}.
For brevity, we depict $A'$ with \emph{edge priorities}, thus moving
the priority component $p$ of macro-states $(U,f,p)$ to the edges.
\medskip

\begin{minipage}{.3\linewidth}
$\quad$
		 $A$:\\
\tikzset{every state/.style={minimum size=15pt}}
\begin{tiny}
  \begin{tikzpicture}[
		auto,
    node distance=0.8cm,
    semithick
    ]
     \node[state,initial above] (0) {$x$};
     \node (yo) [below of=0] {};
     \node[state] (1) [left of=yo] {$y$};
     \node[state] (2) [right of=yo] {$z$};
     \node[state,accepting] (3) [below of=yo] {$u$};
     \path[->] (0) edge [loop right] node [right] {$a$} (0);
     \path[->] (0) edge node [pos=0.3,left] {$a$} (1);
     \path[->] (0) edge node [pos=0.3,right] {$a$} (2);
     \path[->] (1) edge [loop left] node [left] {$a$} (1);
     \path[->] (1) edge [bend right=30] node [pos=0.3,below] {$b$} (3);
     \path[->] (2) edge [loop right] node [right] {$a,b$} (2);
     \path[->] (2) edge node [pos=0.6,right] {$a,b$} (3);
     \path[->] (3) edge [bend right=30] node [pos=0.6,right] {$a$} (1);

  \end{tikzpicture}
\end{tiny}

    \end{minipage}%
		$\quad$
		$\quad$
		$\quad$
    \begin{minipage}{.7\linewidth}
$A'$:
\tikzset{every state/.style={minimum size=15pt}}
\begin{tiny}

  \begin{tikzpicture}[
		auto,
    node distance=1.0cm,
    semithick
    ]
     \node[rounded corners, draw,initial above] (1) {$\{x\},[\,]$};
		 \node (yo) [right of=1] {};
     \node[rounded corners, draw] (2) [below of=1] {$\emptyset,[\,]$};
     \node[rounded corners, draw] (3) [right of=yo] {$\{x,z\},[y]$};
     \node[rounded corners, draw] (4) [below of=3] {$\{x,z\},[y,u]$};
		 \node (yo2) [right of=3] {};
     \node[rounded corners, draw] (5) [right of=yo2] {$\{z\},[u]$};
     \node[rounded corners, draw] (6) [below of=5] {$\{z\},[y,u]$};
		 \path[->] (1) edge node [left] {$b,1$} (2);
     \path[->] (1) edge node [above] {$a,1$} (3);
     \path[->] (2) edge [loop below] node [below] {$a,b,1$} (2);
     \path[->] (3) edge node [left] {$a,1$} (4);
     \path[->] (3) edge node {$b,4$} (5);
     \path[->] (4) edge [loop below] node [below] {$a,3$} (4);
     \path[->] (4) edge node [pos=0.4,above] {$b,4\,$} (5);
     \path[->] (5) edge [loop above] node [above] {$b,5$} (5);
     \path[->] (5) edge [bend right=20] node [pos=0.6,left] {$a,1$} (6);
     \path[->] (6) edge [loop below] node [below] {$a,3$} (6);
     \path[->] (6) edge [bend right=20] node [pos=0.5, right] {$b,4$} (5);
     \end{tikzpicture}
\end{tiny}
\vspace{10pt}
\end{minipage}
Let $\delta$ be the transition relation of $A$.
In $A'$ there is an $a$-transition with priority $1$ from the initial state
$(\{x\},[])$ to $(\{x,z\},[y])$. This is the case since $\delta(\{x\},a)=\{x,y,z\}$
so that $\{x,y,z\}\cap Q_N=\{x,z\}$. Since $y\in \delta(\{x\},a)\cap Q_D$, we add
it to the end of the permutation component which thereby changes from $[]$ to $[y]$.
We have $y\notin F$ so that there is no position in the permutation
that ends or is active. Thus the priority of this transition is $1$.
There is an $a$-loop with priority $3$ at $(\{x,z\},[y,u])$.
This is the case since $\delta(\{x,z\},a)=\{x,y,z,u\}$ so that
 $\delta(\{x,z\},a)\cap Q_N=\{x,z\}$. Also we update the permutation component
 $[y,u]$ according to reading the letter $a$: We have
 $\delta(y,a)=y$ and $\delta(u,a)=y$ and hence obtain a temporary permutation
 $[y]$. Now $\delta(\{x,z\},a)\cap Q_D$ contains the state $u$ that is appended
 to the permutation, resulting in $[y,u]$ as new permutation component.
 As $y\notin F$ and $\delta(y,a)=y$, the leftmost position in the permutation
 component is neither active nor ending. We also have $u\in F$ and $\delta(u,a)\neq u$
 so that position $2$ is both ending and active. Hence the priority of this
 transition is $2(|Q_D|-i)+3=2(2-2)+3=3$. This reflects the fact that even though
 an accepting state can be reached from $\{x,y,z,u\}$ by an $a$-transition
 in $A$ (as $u\in \delta(z,a)$), all runs that have visited an
 accepting state at least once before are residing in the state $y$ after reading $a$.
 Thus it is not possible to construct a continuous run
 that only reads the letter $a$ and still visits $u$ more than once.
 Intuitively, reading the letter $a$ merges all runs leading to $y$ or $u$, so that
 both positions in the permutation component $[y,u]$ are merged into the new first
 position containing just $y$; the second position thus is ending.
 The deterministic state $u$ to which there is
 no $a$-transition from $y$ or $u$ then is appended as new (accepting) position to the
 permutation.
\end{example}


\begin{lemma}[Safra-Piterman~\cite{Safra88,Piterman07}]\label{lem:sp}
Let $A=(\Sigma,Q,q_0,\delta,F)$ be a \buchi automaton.
Then there is a deterministic parity automaton $A'=(\Sigma,Q',u_0,\delta',\Omega)$
such that $L(A)=L(A')$, $|Q'|\in
{\mathcal{O}}(|Q|!^2)$ and $A'$ has
at most $2|Q|$ priorities.\footnote{The tight complexity analysis of
the construction in \cite{Piterman07} is in \cite{Schewe09}.}
\end{lemma}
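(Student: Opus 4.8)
The plan is to recall Piterman's parity-automaton variant of Safra's determinization \cite{Safra88,Piterman07}. The states of $A'$ are \emph{history trees}: finite ordered trees whose nodes each carry a name from $\{1,\dots,2|Q|\}$ and a nonempty label $S\subseteq Q$, subject to the invariants that the label of a node strictly contains the union of its children's labels, that the labels of siblings are pairwise disjoint, and that along a left-to-right, top-down scan each state of $Q$ occurs in at most one label. Intuitively, a node collects those states currently reachable by a run of $A$ that has visited $F$ strictly more recently than the runs tracked by its parent; the root is labelled by the current macro-state $\delta(\{q_0\},w)$. These invariants force every tree to have at most $|Q|$ nodes, which is ultimately what bounds $|Q'|$.

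Next I would define $\delta'(t,a)$ as the composition of five operations on a history tree $t$. (1) \emph{Powerset step}: replace every node label $S$ by $\delta(S,a)$. (2) \emph{Spawn}: append to every node a new youngest child whose label is the node's new label intersected with $F$, giving it the least name not currently in use. (3) \emph{Horizontal merge}: whenever a state occurs in the labels of two siblings, or in a node and an older ``uncle'', delete it (recursively, from the whole subtree) from the younger side — the older-sibling-wins policy. (4) \emph{Prune}: delete every node whose label has become empty, recording the set $R$ of names so freed. (5) \emph{Green detection}: record the set $G$ of names of nodes whose label after step~(1) already equals the union of its children's labels \emph{before} spawning — signalling that every run tracked there has just seen $F$. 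Names freed in step~(4) are reused, so $2|Q|$ names always suffice. The priority of the transition $t\to t'$ on letter $a$ is then read off a fixed assignment $\{1,\dots,2|Q|\}\to\{0,\dots,2|Q|-1\}$ under which smaller names receive higher priorities, a green event at a name yields an even priority and a red (prune) event at a name yields an odd one, with a default value chosen when $G$ and $R$ are empty so that a run that eventually produces only empty $G$ is losing; this uses at most $2|Q|$ priorities, as claimed.

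For the size bound I would count history trees up to the above invariants: forgetting names, such a tree is an ordered tree whose nodes are labelled by a nested refinement structure over subsets of $Q$, and the number of these is $\mathcal{O}(|Q|!^2)$ — the clean upper bound made tight by Schewe \cite{Schewe09} — while the name component only contributes a bounded reindexing absorbed into the constant, so $|Q'|\in\mathcal{O}(|Q|!^2)$.

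It remains to prove $L(A)=L(A')$. For $L(A)\subseteq L(A')$: given an accepting run $\rho$ of $A$ on $w$, track after each step of $A'$ the unique node whose label contains $\rho$'s current state; since labels only shrink down the tree and under merges (and merges move states towards older siblings), this node is eventually constant along a fixed branch, so $\rho$ stabilises through some node $v$; as $\rho$ visits $F$ infinitely often, the least such stabilising node turns green infinitely often and is never pruned thereafter, whence the dominant name seen infinitely often has even priority and $A'$ accepts. The converse, $L(A')\subseteq L(A)$, is the main obstacle: from an accepting run of $A'$ take the minimal name $m$ that is green infinitely often and, from some point $N$ on, never red; the node named $m$ then persists with a nonempty label forever after $N$, and each green event at $m$ certifies that all $A$-runs it tracks have just visited $F$, so a König's-lemma argument on the finitely branching tree of finite $A$-runs that remain inside that node's label yields an infinite run of $A$ visiting $F$ infinitely often. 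The delicate point — exactly as in the original proofs — is to verify that the horizontal-merge step never discards \emph{every} accepting run tracked at the surviving node, so that this tree is genuinely infinite; here I would follow Piterman's bookkeeping \cite{Piterman07} and only re-derive the size and priority bounds in our notation.
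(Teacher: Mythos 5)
The paper does not prove this lemma at all: it is imported verbatim from the literature, with the construction credited to Safra and Piterman and the tight state count to Schewe. Your proposal is a faithful sketch of exactly that cited construction (history trees, spawn/merge/prune, green--red parity assignment, and the standard two-direction correctness argument, deferring the delicate merge bookkeeping to Piterman), so it matches the paper's --- purely referential --- treatment and raises no new issues.
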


\begin{lemma}[Parity
Determinization~\cite{ScheweVarghese14}]\label{lem:schewevarghese}
	Let $A=(\Sigma,Q,q_0,\delta,\Omega)$ be a parity automaton of rank
	$k$.
	Then there is a deterministic parity automaton
	$A'=(\Sigma,Q',u_0,\delta',\Omega')$
	such that $L(A)=L(A')$, $|Q'|\in
	{\mathcal{O}}(|Q|!^{2k})$ and $A'$ has
	at most $2|Q|k$ priorities.
\end{lemma}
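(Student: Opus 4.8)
The plan is to follow the nested determinisation of Schewe and Varghese~\cite{ScheweVarghese14}. The naive route---convert $A$ into an equivalent \buchi automaton by \autoref{lem:parity2buchi} and then apply Safra--Piterman (\autoref{lem:sp})---is not good enough: it produces a \buchi automaton with $\Theta(|Q|\cdot k)$ states and hence a deterministic automaton of size $\mathcal{O}((|Q|k)!^{2})$, which in general beats the claimed bound the wrong way, since $(|Q|k)!\geq (|Q|!)^{k}$ by the multinomial inequality. The sharper bound is obtained by running, in a single pass, a separate Safra-style determinisation for each priority level, with the instances nested inside one another in order of decreasing priority.

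Concretely, I would first normalise the priorities of $A$ to an initial segment of $\nat$ and read the parity condition as an alternating chain (``the largest priority seen infinitely often wins''). A macro-state of $A'$ is then a \emph{nested appearance record}: the outermost layer is a Safra/history tree of subsets of $Q$ that resolves the top pair of priorities exactly as in the \buchi determinisation, but every node $v$ of that tree additionally carries, as a sub-label, a Safra/history tree over the set of states currently located at $v$, which resolves the next lower pair of priorities, and so on down to nesting depth $\Theta(k)$. A transition of $A'$ updates all layers simultaneously: states are moved by $\delta$, new children are spawned at the layer at which a run has most recently been ``rewarded'', and the standard breakpoint (emptiness) and horizontal-merging operations are applied layer by layer. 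The priority of a transition (or state) of $A'$ encodes the outermost layer at which a breakpoint occurs while that layer has not just been reset by activity at a strictly higher priority; since a Safra tree over $|Q|$ states has at most $|Q|$ nodes and there are $\Theta(k)$ layers, this needs only priorities in $\{1,\ldots,2|Q|k\}$.

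For the size bound I would count layer by layer: a single Safra/history tree over a set of $m$ states is an ordered labelled forest, of which there are $\mathcal{O}(m!)$ (the tight count underlying \autoref{lem:sp}), and because the state sets attached to the children of a node partition that node's state set, the number of depth-$\Theta(k)$ nestings multiplies out to $\mathcal{O}((|Q|!)^{2k})$---the factor $2$ in the exponent being exactly the one already present for \buchi, incurred once per layer.

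The main obstacle is correctness, $L(A)=L(A')$. In one direction, given an accepting run of $A$ with largest infinitely-recurring priority $p$ (necessarily even), one argues by a K\"onig-style construction---threaded through the nesting---that the breakpoint of the layer responsible for $p$ is hit infinitely often and is eventually never disturbed from above, so $A'$ accepts. In the other direction, from a run of $A'$ that is accepting because of layer $i$ one must reconstruct an actual run of $A$ whose largest infinitely-recurring priority is the even priority handled by layer $i$; this means following a branch through the infinitely many nested trees and verifying that it induces a genuine, non-stuttering run of $A$. I would not carry these arguments out in full but would cite~\cite{ScheweVarghese14}, where the nesting and the matching lower bound are developed in detail; the construction sketched above is all that is needed for the parametric emptiness bounds of \autoref{cor:treeautempt}.
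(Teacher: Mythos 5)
The paper states this lemma without proof, relying entirely on the citation to \cite{ScheweVarghese14}, so your proposal---a sketch of the nested history-tree construction with correctness explicitly deferred to that reference---takes essentially the same approach while adding useful expository detail. Your opening observation, that the naive route via Lemma~\ref{lem:parity2buchi} followed by Lemma~\ref{lem:sp} only yields $\mathcal{O}((|Q|k)!^{2})$, which is weaker than the claimed $\mathcal{O}(|Q|!^{2k})$ since $(|Q|k)!\geq(|Q|!)^{k}$, correctly identifies why the bespoke parity determinization is needed.
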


\subsection{History-determinizing Word Automata}

Next we give specialized history determinization constructions for limit-deterministic
\cbuchi automata and for general \buchi automata.

\begin{lemma}[History-determinizing by focusing]\label{lem:focus}
Let $A=(\Sigma,Q,q_0,\delta,F)$ be a limit-deterministic \cbuchi word automaton. Then
there is a history-deterministic \cbuchi word automaton $A'=(\Sigma,Q',u_0,\delta',F')$ such that
$L(A)=L(A')$ and $|Q'|\leq (|F|+1)\cdot 2^{|Q|}$.
\end{lemma}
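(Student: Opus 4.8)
The plan is to build $A'$ as the subset automaton of $A$ enriched with a single \emph{focus}: a state in $F\cup\{\bot\}$ meant to track one candidate run of $A$ that eventually stays in $F$. Concretely I would set $Q'=2^Q\times(F\cup\{\bot\})$, take $u_0=(\{q_0\},q_0)$ if $q_0\in F$ and $u_0=(\{q_0\},\bot)$ otherwise, and $F'=\{(U,f)\in Q'\mid f\in F\}$; the bound $|Q'|\le(|F|+1)\cdot 2^{|Q|}$ is then immediate. The powerset component always moves by $\delta$. For the focus: from $(U,f)$ with $f\in F$, limit-determinism gives $|\delta(f,a)|\le1$, so on letter $a$ we move to $(\delta(U,a),f')$ if $\delta(f,a)=\{f'\}$ with $f'\in F$, and to $(\delta(U,a),\bot)$ otherwise (the focused run has left $F$ or died); from $(U,\bot)$ on $a$ we may move to $(\delta(U,a),g)$ for any $g\in\delta(U,a)\cap F$, or to $(\delta(U,a),\bot)$ when that set is empty. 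A routine induction gives the invariant $f\in U$ whenever $f\neq\bot$, and one sees that $A'$ is deterministic except for the choice of a new focus when leaving a $\bot$-state; this is the only place the resolver acts.

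For $L(A')\subseteq L(A)$ I would argue directly: if $(U_0,f_0)(U_1,f_1)\cdots$ is an accepting run of $A'$ on $w$, then $f_i\in F$ for all $i\ge k$, for some $k$; by construction $f_{i+1}\in\delta(f_i,a_i)$ for all $i\ge k$, so $f_kf_{k+1}\cdots$ is a run of $A$ on the corresponding suffix that stays in $F$, and since $f_k\in U_k$ there is a run of $A$ on $a_0\cdots a_{k-1}$ ending in $f_k$; concatenating yields an accepting run of $A$, so $w\in L(A)$.

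The interesting direction is $L(A)\subseteq L(A')$, proved by exhibiting a resolver $\sigma$. The idea is to let the resolver run, alongside, the deterministic Miyano--Hayashi automaton for $A$ from Lemma~\ref{lem:miyanohayashi}, whose macro-states carry a breakpoint set $V\subseteq U\cap F$, and to always re-pick the focus, when it is at a $\bot$-state, from the current set $V$ (or from $\delta(U,a)\cap F$, i.e.\ the value of $V$ right after a reset, if $V$ is momentarily empty). Assume $w\in L(A)$; then $A$ has an accepting run, so by Lemma~\ref{lem:miyanohayashi} the Miyano--Hayashi run on $w$ is accepting, meaning there is a position $N$ after which $V$ is never empty, i.e.\ never reset. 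From $N$ on, $V$ evolves deterministically from $V_N$ by applying $\delta$ and intersecting with $F$, so it exactly tracks the at most $|F|$ runs of $A$ that start inside $V_N$ and have not yet left $F$; this family is decreasing in time and never empty, so by pigeonhole at least one of these runs, say $\widehat\rho$, stays in $F$ forever. By construction the resolver's focus is, from $N$ on, always the current state of one member of this family; it advances deterministically along that member while it remains in $F$, and it is re-picked only when that member leaves $F$ --- necessarily onto a state of a \emph{different} member (a member that has left $F$ is no longer in $V$, and one that is still in $V$ would not have forced a re-pick). Hence after $N$ only finitely many re-picks occur, and from the last one onward the focus follows a single run that never leaves $F$, so the induced run of $A'$ stays in $F'$ eventually and is accepting. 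Finally, $\sigma$ is a legal resolver for $A'$ (each re-pick lands in $\delta(U,a)\cap F$) and depends only on the input read so far (the Miyano--Hayashi run and the bookkeeping are deterministic), so $A'$ is history-deterministic with $L(A')=L(A)$.

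The main obstacle I expect is exactly this last argument: a state of $F$ can be a good focus at one position and a bad one at another (its deterministic continuation depends on the word), so the resolver cannot simply pick "some live accepting state"; routing every re-pick through the Miyano--Hayashi breakpoint set is what guarantees that all re-picked candidates come from one fixed finite pool of runs, which is what makes the pigeonhole step go through and bounds the number of re-picks after the final reset. Secondary points to handle carefully are the interaction between Miyano--Hayashi resets and the focus leaving $F$, the transient case $V=\emptyset$, and the base case $q_0\notin F$.
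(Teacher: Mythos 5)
Your construction of $A'$ is, up to presentation, the same as the paper's: macro-states are subsets of $Q$ optionally annotated with a focus from $F$ (the paper uses $2^Q\cup\{(U,q)\in 2^Q\times F\mid q\in U\}$, where your $\bot$-states correspond to the unannotated subsets), the focus advances deterministically inside $F$, is dropped when the focused run leaves $F$, the accepting states are exactly the focused ones, and the size bound and both language inclusions are argued as in the paper. Where you genuinely diverge is the resolver witnessing history-determinism. The paper refocuses onto a candidate of \emph{maximal age} (the state whose trace can be followed back furthest) and asserts, rather tersely, that this fairness eventually locks onto a focus that is never finished. You instead run the Miyano--Hayashi determinization of Lemma~\ref{lem:miyanohayashi} in parallel and always refocus into its breakpoint set $V$: once $w\in L(A)$ forces a last reset, $V$ carries a fixed, finite, never-empty family of runs that only shrinks, each failed focus kills at least one member of that family, so at most $|F|$ refocusings occur before the focus permanently follows a run that stays in $F$. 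This is a clean, fully explicit alternative to the age argument --- arguably easier to verify --- at the mild cost of carrying the Miyano--Hayashi state as auxiliary resolver memory (which is legitimate, since a resolver may depend arbitrarily on the input prefix). One small point to correct: limit-determinism of a co-B\"uchi automaton formally yields $|\delta(f,a)\cap F|\le 1$ for $f\in F$, not $|\delta(f,a)|\le 1$ (an accepting state may have additional successors outside $F$), so the focus transition should test $\delta(f,a)\cap F=\{f'\}$ rather than $\delta(f,a)=\{f'\}$, exactly as the paper does; nothing else in your argument is affected.
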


\begin{proof}
Inituively, the determinization procedure is similar to the powerset construction
but uses the limited nondeterminism that is allowed in history-deterministic
automata to guess a run that eventually stays in $F$ forever.
Information about the guessed runs is kept by annotating macro-states with
a \emph{focus}, that is, the state in which the run currently resides.
If the guess turns out to be wrong and the run leaves $F$, a new guess is taken
(a refocusing step takes place). The resulting automaton then can be shown
to be history-deterministic by using a resolver function that refocuses in a fair
manner, guaranteeing that no run is overlooked.

Formally, we put $Q''=\{(U,q)\in 2^Q\times F\mid q\in U\}$ and
\begin{align*}
Q'&=2^Q \cup Q'' &
u_0&=\{q_0\} &
F'&=Q'',
\end{align*}
from which the claimed bound on the size of $A'$ follows since
\begin{align*}
|Q'|=|2^Q \cup Q''|=
2^{|Q|}+(2^{|Q|}\cdot|F|)=(|F|+1)\cdot 2^{|Q|}.
\end{align*}
The transition relation $\delta'$ is defined by putting, for $a\in\Sigma$ and $U\subseteq Q$,
\begin{align*}
\delta'(U,a)=\{\delta(U,a)\}\cup \{(\delta(U,a),q')\mid q'\in\delta(U,a)\cap F\}
\end{align*}
and, for $a\in\Sigma$ and $(U,q)\in Q''$,
\begin{align*}
\delta'((U,q),a)=
\begin{cases}
\{(\delta(U,a),q')\} & \text{if }\delta(q,a)\cap F=\{q'\}\\
\{\delta(U,a)\} & \text{if }\delta(q,a)\cap F=\emptyset,
\end{cases}
\end{align*}
noting that since $A$ is limit-deterministic, we have $|\delta(q,a)\cap F|\leq 1$ if $q\in F$,
so that the case distinction above is exhaustive.
Given $(U,q)\in Q''$, we refer to $q$ as the \emph{focus} and for $U\in Q'$
we say that the focus is \emph{finished} at $U$. Outgoing transitions from $U\subseteq Q$
to $(U',q)\in Q''$ are \emph{refocusing transitions}.
\begin{toappendix}

	It remains
to show that $L(A)=L(A')$ and that $A'$ is history-deterministic. For the first item,
let $w=a_0a_1\ldots\in L(A)$ and let $\tau=q_0,q_1,\ldots$ be an accepting run of $A$ on $w$.
Let $i$ be a position such that $q_j\in F$ for all $j\geq i$; such $i$ exists since $\tau$ is accepting.
Construct a run $\pi$ of $A'$ as follows: Let $\pi_1=\{q_0\},T_1,\ldots,T_i$
(where $T_{i'}\in Q'$ for $1\leq i'\leq i$)
be a run of $A'$ on the first $i$ letters of $w$. Continue the run $\pi_1$ deterministically,
and let $j$ be the first position such that $T_j\subseteq Q$.
If no such $j$ exists, then the run $\pi$ that
is obtained by deterministically continuing $\pi_1$ along $w$ is an accepting run ($\pi_1$
can be continued deterministically since all states $T_j\in Q''$ are deterministic states and
no states $T_j\notin Q''$ is reached by assumption).
Otherwise, let $\pi_2=\{q_0\},T_1,\ldots,T_j$ be the deterministic continuation of $\pi_1$
up to the position $j$. We have $q_{j+1}\in T_{j+1}=\delta(T_j,a_j)$.
We extend $\pi_2$ with the transition
\begin{align*}
T_j\stackrel{a_j} {\to}((\delta(T_j),a_j),q_{j+1})
\end{align*}
which is a transition in $\delta'$.
Since the focus $q_{j+1}$ is never finished by assumption,
the nondeterministic continuation of $\pi_2$ is an accepting run.

For the converse direction, let $w=a_0a_1\ldots\in L(A')$ and let
\begin{align*}
\tau=\{q_0\},T_1,T_2\ldots
\end{align*}
be an accepting run of $A'$ on $w$.
We construct a an accepting run $\pi$ of $A$ as follows.
Let $i$ be the first position such that $T_j=(U_{j},q_{j})\in Q''$ for all $j\geq i$;
such $j$ exists since $\tau$ is accepting.
Then there is a run $\pi_1$ of $A$ on the first $i$ letters of $w$ that ends
in $q_i$. Continue $\pi_1$ with the sequence $q_{j+1},q_{j+2},\ldots$.
This results in a run of $A$ on $w$ since $q_{j+1}\in\delta(q_j,a_j)$ for all $j\geq i$.
Furthermore, the resulting run $\pi$ is accepting since $q_j\in F$
and for all $j\geq i$.

To see that $A'$ is history-deterministic, we have to define a suitable resolver.
Let $w=a_0,a_1,\ldots\in L(A')$. For any partial run
$\{q_0\},T_1,\ldots,T_m$ of $L(A')$ such that $T_m\subseteq Q$,
define $\sigma(\{q_0\},T_1,\ldots,T_m,a_m)=((\delta(T_m,a_m),q_{m+1})$, where
$q_{m+1}\in \delta(T_m,a_m)\cap F$ is a state from $\delta(T_m,a_m)\cap F$ with maximal age.
Here, the age of a state $q\in\delta(T_m,a_m)$ is the least position $j$, such that
$T_j=U_j$ or $T_j=(U_j,q_j)$ and
$U_j$
contains some state $q'$ such that
$q\in\delta(q',a_{j+1},\ldots, a_m)$.
If $\delta(T_m,a_m)\cap F=\emptyset$, then put $\sigma(\{q_0\},T_1,\ldots,T_m,a_m)=\delta(T_m,a_m)$.
Hence the resulting function $\sigma$ picks a run $\sigma(w)$ for each word $w$.
If $w\in L(A')$, then $\sigma(w)$ is an accepting run since then there
is a run in $A$ that eventually stays in $F$ forever and since
$\sigma(w)$ refocuses in a fair manner (prefering older traces over younger
traces), it is guaranteed to eventually pick a focus that is never
finished.
\hfill\qed

\end{toappendix}

\end{proof}
We note that the automaton $A'$
in the above construction is not a weak automaton, even if $A$ is a weak automaton.

\begin{example}
Consider the limit-deterministic \cbuchi automaton $A$
depicted below,
and the equivalent history-deterministic \cbuchi automaton $A'$ obtained by
using the construction from Lemma~\ref{lem:focus};
both automata accept exactly the infinite words over $\Sigma=\{a,b\}$
that contain either $a$ or $b$ finitely often.
For brevity, we depict only the reachable part of $A'$ and label macro-states
$(U,q)$ with the elements of $U$ with focus $q$ underlined. Every macro-state
in $A'$ that has a focus is accepting.\medskip

\begin{minipage}{.3\linewidth}
$\quad$
		 $A$:\\
\tikzset{every state/.style={minimum size=15pt}}
\begin{tiny}
  \begin{tikzpicture}[
		auto,
    node distance=1.0cm,
    semithick
    ]
     \node[state,initial above] (0) {$x$};
     \node (yo) [below of=0] {};
     \node[state,accepting] (1) [left of=yo] {$y$};
     \node[state,accepting] (2) [right of=yo] {$z$};
     \path[->] (0) edge [loop right] node [right] {$a,b$} (0);
     \path[->] (0) edge [bend right=15] node [pos=0.5,left] {$a$} (1);
     \path[->] (1) edge [bend right=15] node [pos=0.5,right] {$a$} (0);
     \path[->] (0) edge node [pos=0.3,right] {$a$} (2);
     \path[->] (1) edge [loop left] node [left] {$b$} (1);
     \path[->] (2) edge node [pos=0.5,above] {$b$} (1);
     \path[->] (2) edge [loop right] node [right] {$a$} (2);
       \end{tikzpicture}
\end{tiny}

    \end{minipage}%
		$\quad$
		$\quad$
		$\quad$
    \begin{minipage}{.7\linewidth}
$A'$:
\tikzset{every state/.style={minimum size=25pt}}
\begin{tiny}

  \begin{tikzpicture}[
		auto,
    node distance=1.0cm,
    semithick
    ]
     \node[rounded corners, draw,initial above] (1) {\phantom{\{}$x$\phantom{\}}};
     \node[rounded corners, draw] (3) [below of=1] {\phantom{\{}$x,y,z$\phantom{\}}};
		 \node (yo) [left of=3] {};
     \node[rounded corners, draw,accepting] (2) [left of=yo] {\phantom{\{}$x,y,\underline{z}$\phantom{\}}};
     \node[rounded corners, draw] (4) [below of=yo] {\phantom{\{}$x,y$\phantom{\}}};
		 \node (yo2) [right of=3] {};
     \node[rounded corners, draw,accepting] (5) [right of=yo2] {\phantom{\{}$x,\underline{y},z$\phantom{\}}};
     \node[rounded corners, draw,accepting] (6) [below of=yo2] {\phantom{\{}$x,\underline{y}$\phantom{\}}};
     \path[->] (1) edge [loop right] node [right] {$b$} (1);
     \path[->] (1) edge node [left] {$a$} (2);
     \path[->] (1) edge node [left] {$a$} (3);
     \path[->] (1) edge node [right] {$a$} (5);
     \path[->] (2) edge [loop left] node [left] {$a$} (2);
     \path[->] (2) edge [in=120,out=60,bend right=120,distance=2cm] node [pos=0.2,below] {$b$} (6);
     \path[->] (3) edge [loop above,out=145,in=115,distance=0.5cm] node [above] {$a$} (3);
     \path[->] (3) edge node [above] {$a$} (2);
     \path[->] (3) edge [bend right=15] node [pos=0.5,below] {$a$} (5);
     \path[->] (3) edge [bend right=15] node [pos=0.5,left] {$b$} (4);
     \path[->] (4) edge [loop left] node [below] {$b$} (4);
     \path[->] (4) edge node [right] {$a$} (2);
     \path[->] (4) edge [bend right=15] node [pos=0.5,right] {$a$} (3);
     \path[->] (4) edge [in=120,out=60,bend right=120,distance=1.8cm] node [pos=0.2,above] {$a$} (5);
     \path[->] (4) edge node {$b$} (6);
     \path[->] (5) edge [bend right=15] node [pos=0.5,above] {$a$} (3);
     \path[->] (5) edge node {$b$} (6);
     \path[->] (6) edge [loop right] node [right] {$b$} (6);
     \path[->] (3) edge [bend right=15] node [pos=0.5,left] {$b$} (6);
     \path[->] (6) edge [bend right=15] node [pos=0.5,right] {$a$} (3);
     \end{tikzpicture}
\end{tiny}
\vspace{-10pt}
\end{minipage}
Let $\delta$ be the transition relation of $A$.
Then we have $a$-transitions from $x$ to $x,y,\underline{z}$, to
$x,y,z$ and to $x,\underline{y},z$. This is the case since
$\delta(x,a)=\{x,y,z\}$. Since both $y$ and $z$ are accepting states in $A$,
$A'$ has, when reading $a$ at the state $x$,
the history-deterministic choice to focus either $y$ or $z$ (or none
of the two). On the other hand, we have e.g. an $a$-transition from $x,\underline{y}$ to
the non-accepting macro-state $x,y,z$ since $\delta(y,a)=x$
is not an accepting state. Hence the $a$-transition from $x,\underline{y}$ finishes the
focus and another refocusing step is necessary in order for a run to be accepting.
Thus $A'$ accepts for instance the word $(aba)b^\omega$ (also accepted by $A$)
by staying unfocused
when reading $abab$, leading to the partial run $x\stackrel{a}{\to} x,y,z
\stackrel{b}{\to} x,y \stackrel{a}{\to} x,y,z \stackrel{b}{\to} x,y$; then
the automaton can focus on
$y$, continuing the run with $x,y \stackrel{b}{\to} x,\underline{y}
\stackrel{b}{\to}  x,\underline{y}\ldots$, resulting in an overall accepting run.
For the word $a^\omega$ however,
there is a non-accepting run
$x\stackrel{a}{\to} x,y,z
\stackrel{a}{\to} x,\underline{y},z \stackrel{a}{\to} x,y,z \stackrel{a}{\to} x,\underline{y},z$ in which $y$ is focused infinitely often, but also finished infinitely often.
This shows that fair order of focusing is crucial in resolving the history-determinism:
Every run in which the automaton eventually focuses the state $z$ is accepting.
\end{example}

\begin{lemma}[Henzinger-Piterman~\cite{HenzingerP06}]\label{lem:hp}
Let $A$ be a nondeterministic B\"uchi word automaton with $n$ states. Then
there is a history-deterministic parity word automaton $A'$ with $n'$ states
such that $L(A)=L(A')$ and
$n'\in\mathcal{O}(3^{n^2})$.
\end{lemma}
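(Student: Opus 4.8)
The plan is to recall the Henzinger--Piterman construction from \cite{HenzingerP06} and then sketch why the resulting automaton $A'$ is equivalent to $A$, is history-deterministic, and has at most $\mathcal{O}(3^{n^2})$ states, where $A=(\Sigma,Q,q_0,\delta,F)$ and $n=|Q|$. The construction can be seen as the common generalization of the powerset-based constructions already used above: it refines Miyano--Hayashi (Lemma~\ref{lem:miyanohayashi}) by replacing the single tracked set of ``recently accepting'' states with an ordered family of such sets, and it generalizes the focus method (Lemma~\ref{lem:focus}) from limit-deterministic co-B\"uchi automata to arbitrary \buchi automata by letting the ``focus'' point into that ordered family rather than at a single state. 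Concretely, a macro-state carries (i) the current set $S\subseteq Q$ of reachable states together with a total preorder on $S$ that ranks the runs reaching the states of $S$ by the recency of their last visit to $F$ (older $F$-visits ranked higher), plus, for each state, which part of $Q$ sits weakly/strictly above it, and (ii) an optional \emph{focus}, i.e.\ an index into this ordered family naming the class the automaton is currently committed to. The preorder-with-aboveness part of (i) can be coded by assigning to each ordered pair $(p,q)\in Q\times Q$ one of three values (``$p$ strictly above $q$'', ``$p$ equivalent to $q$ and both present'', ``neither''), so there are at most $3^{n^2}$ such structures; multiplying by the $\mathcal{O}(n)$ choices for the focus gives the claimed bound $n'\in\mathcal{O}(3^{n^2})$. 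Transitions update (i) deterministically (apply $\delta$ to $S$, re-sort the induced preorder, append freshly reachable states below all existing classes, delete emptied classes) and resolve (ii) with limited nondeterminism: keep the focus, advance it when its class stays intact, or \emph{refocus} to a (not younger) class when it is destroyed. The acceptance condition is a parity condition with $\mathcal{O}(n)$ priorities reading off, per transition, the least class index at which either the focused class receives a fresh $F$-visit (``good'' event) or is destroyed/refocused away from (``bad'' event), accepting iff the least index touched infinitely often is, eventually, touched only by good events.

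For $L(A)=L(A')$ I would proceed in the style of the proof of Lemma~\ref{lem:focus}. For $L(A')\subseteq L(A)$: an accepting run of $A'$ eventually fixes the least touched index; from then on the class at that index is never destroyed and receives good events infinitely often, and since each member of a class always has a $\delta$-predecessor in the previous class, a K\"onig-lemma argument inside this class extracts a single infinite run of $A$ that visits $F$ infinitely often. For $L(A)\subseteq L(A')$ together with history-determinism: I would exhibit a resolver $\sigma$ that performs the deterministic update of (i) and resolves (ii) by always focusing the oldest available class and refocusing in a fair, ageing manner (strictly preferring older profiles, exactly as the focus resolver of Lemma~\ref{lem:focus} prefers older traces). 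Given $w\in L(A)$, fix an accepting run $\rho$ of $A$ on $w$; the profile of $\rho$ becomes monotonically ``older'' after each visit of $\rho$ to $F$, so the class containing $\rho$ is, from some point on, never destroyed, and because the resolver ages fairly, $\sigma$ eventually focuses a class at least as old as $\rho$'s; that class's index is then touched infinitely often and only by good events, so $\sigma(w)$ is accepting.

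The main obstacle is precisely this last point, the genuine history-determinism. Unlike the limit-deterministic co-B\"uchi setting of Lemma~\ref{lem:focus}, here $A$ has unrestricted nondeterminism, so the resolver cannot commit to one guessed run; it must maintain the whole preorder of run-classes and prove that the class carrying a fixed accepting run of $A$ cannot be starved --- i.e.\ cannot be overtaken or prematurely abandoned infinitely often --- and that this non-starvation is exactly what the parity index detects. Getting the ageing/refocusing discipline and the index bookkeeping to line up (and bounding the number of priorities by $\mathcal{O}(n)$) is the technically delicate core; I would carry it out via the profile-ageing argument of Henzinger and Piterman and, for the full correctness proof, defer to \cite{HenzingerP06}.
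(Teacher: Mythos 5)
The paper gives no proof of this lemma at all: it is a black-box citation of \cite{HenzingerP06}, and your closing deferral to that reference for the full correctness argument is exactly what the paper does. So the only substantive question is whether the construction you sketch in the meantime is a faithful (or at least sound) rendering of Henzinger--Piterman. It is not quite either. Their automaton maintains a vector of $n{+}1$ parallel subset/breakpoint constructions, one per level, with the limited nondeterminism used to decide when a level is reset; that is where a bound of the form $2^{\mathcal{O}(n^2)}$ (here stated as $\mathcal{O}(3^{n^2})$) actually comes from. Your single reachable set equipped with a recency-of-last-$F$-visit preorder coded as a map $Q\times Q\to\{0,1,2\}$, plus one focus pointer, is a different gadget (closer in spirit to profile-based constructions), and your own accounting slips at the end: $3^{n^2}\cdot\mathcal{O}(n)$ is not $\mathcal{O}(3^{n^2})$.

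The genuine gap is in soundness of the acceptance condition you propose. Declaring a ``good event'' whenever the focused class receives a fresh $F$-visit, and then extracting a run by K\"onig's lemma from the predecessor relation inside the class, does not yield a run of $A$ with infinitely many $F$-visits: the class can be refreshed through $F$ at every step while every individual run entering $F$ immediately dies or leaves the class, so the K\"onig branch you extract need not meet $F$ infinitely often. This is precisely the failure mode of the plain subset construction for B\"uchi automata, and precisely what the breakpoint mechanism of Lemma~\ref{lem:miyanohayashi} is designed to exclude (a good event must certify that \emph{all} currently tracked runs have visited $F$ since the last good event). Without building that breakpoint discipline into each class, the sketched automaton would accept words outside $L(A)$. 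Your surrounding intuitions --- that the construction interpolates between Lemma~\ref{lem:miyanohayashi} and Lemma~\ref{lem:focus}, that fair ageing/refocusing is what makes the automaton genuinely history-deterministic, and that non-starvation of the class carrying a fixed accepting run is the delicate point --- are all on target, but the core data structure and acceptance condition need to be replaced by the actual level-by-level breakpoint bookkeeping of \cite{HenzingerP06} for the argument to go through.
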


Notice that the size of a history-deterministic automaton, in the
general case, is larger than the size of the deterministic automaton.
The potential advantage of using history determinization would be to
have a simpler structure of the resulting automaton.

\subsection{Application to Emptiness Checking and $\mu$-Calculus
Satisfiability}
To conclude this section, we state the connection between the structure
of the alternating parity tree automaton $A$ and the structure of the
tracking automaton $T_A$.

\begin{lemma}
	\begin{itemize}
		\item If $A$ is an alternating weak tree automaton, then the
		tracking
		automaton $T_A$ is a weak word automaton.
		\item If $A$ is limit deterministic, then the tracking
		automaton $T_A$ is limit deterministic.
	\end{itemize}
\end{lemma}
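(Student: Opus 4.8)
The plan is to isolate one structural fact about $T_A$ and then read off both items from it. The fact is: \emph{every transition $q\to q'$ of $T_A$ with $q'\neq q$ is already a transition of $A$}, that is, $q'\in\delta(q,\sigma)$ for the letter $\sigma$ occurring in the corresponding input symbol. I would verify this by inspecting the definition of $\Gamma$ case by case: for $q\in Q_\vee$ the only non-empty values are $\{d(q)\}$ with $d(q)\in\delta(q,\sigma)$; for $q\in Q_\wedge$ the value is $\delta(q,\sigma)$; for $q\in Q_\Diamond\cup Q_\Box$ the value is $\delta(q,\sigma)$ (a singleton, since modal states have $|\delta(q,\sigma)|=1$), or $\emptyset$, or the self-loop $\{q\}$; and $\Gamma(q,\sigma)=\{q\}$ is a self-loop. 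Hence the transition graph of $T_A$ is the transition graph of $A$ restricted to a subset of its edges, with some self-loops possibly added. Two consequences are then immediate: (a) every strongly connected component of $T_A$ is either a singleton or is contained in a single SCC of $A$ (a closed walk of $T_A$ through two distinct states, with its self-loop steps contracted away, is a closed walk of $A$ through those states); and (b) if $q'$ is reachable from $q$ in $T_A$ along a path all of whose states have $A$-priority at most $p$, then $q'$ is reachable from $q$ in $A$ along such a path as well.

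For the first item, assume $A$ is weak, so $\overline{\Omega}=\Omega+1$ takes only the values $1$ and $2$. Let $C$ be an SCC of $T_A$. If $|C|=1$ then $\overline{\Omega}$ is trivially constant on $C$. If $|C|\geq 2$ then by consequence (a) $C$ is contained in an SCC $C'$ of $A$, on which $\Omega$ is constant, so $\overline{\Omega}$ is constant on $C$. Hence every SCC of $T_A$ carries a single priority, which is exactly the statement that $T_A$ is weak (and it may then be read as a weak co-B\"uchi automaton with accepting set $\{q\mid\overline{\Omega}(q)=2\}$, matching the type NCW claimed in the overview table).

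For the second item, assume $A$ is limit deterministic. First note that the only transitions of $T_A$ of size larger than one are $\Gamma(q,(\sigma,d))=\delta(q,\sigma)$ with $q\in Q_\wedge$; in all other cases ($q\in Q_\vee$, $q\in Q_\Diamond\cup Q_\Box$, or the input symbol being a bare letter of $\Sigma$) the value of $\Gamma$ has at most one element. Now fix a state $q$ with $\overline{\Omega}(q)$ even and set $p=\Omega(q)$, which is odd; let $C_q$ be the compartment of $q$ in $T_A$. Every state of $C_q$ has $\overline{\Omega}$-value at most $p+1$, hence $\Omega$-value at most $p$, so $C_q\subseteq Q_{\leq p}$; and since the defining paths of $C_q$ use only states of $\overline{\Omega}$-value (equivalently $\Omega$-value at most $p$), consequence (b) shows that each $q'\in C_q$ is reachable from $q$ in $A$ along a path of states with $\Omega$-value at most $p$. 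It remains to bound $|\Gamma(q',x)\cap C_q|$ for $q'\in C_q$ and $x\in\Sigma_A$, and by the note above the only case to treat is $q'\in Q_\wedge$, $x=(\sigma,d)$, where $\Gamma(q',x)=\delta(q',\sigma)$. Since $q'\in Q_\wedge$ is reachable from $q$ through priority-$\leq p$ states, limit determinism of $A$ gives $|\delta(q',\sigma)\cap Q_{\leq p}|\leq 1$, and since $C_q\subseteq Q_{\leq p}$ this yields $|\Gamma(q',x)\cap C_q|\leq 1$. Thus every compartment of $T_A$ based at an even-priority state is internally deterministic, i.e.\ $T_A$ is limit deterministic.

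The whole argument is bookkeeping; the only step that needs real care is the case analysis underpinning the structural fact — keeping straight which $\Gamma$-transitions are honest $\delta$-transitions, which are empty, and which are self-loops — together with correctly matching the priority shift $\overline{\Omega}=\Omega+1$ against the two differently phrased notions of \emph{weak} and \emph{limit-deterministic} used for tree automata versus word automata. I do not expect any genuinely hard step beyond that.
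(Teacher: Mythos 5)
Your proof is correct and follows essentially the same route as the paper's: the paper also derives weakness of $T_A$ from the fact that its strongly connected components inherit constant priority from those of $A$, and proves the second item by reducing $|\Gamma(q',a)\cap C_q|\leq 1$ to the tree-automaton limit-determinism condition via $\Gamma(q',a)\subseteq\delta(q',a)$ and $C_q\subseteq Q_{\leq p}$. The only difference is that you make explicit the structural fact (every non-self-loop transition of $T_A$ is a transition of $A$) that the paper leaves implicit, which is a welcome clarification rather than a deviation.
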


\begin{proof}
\begin{itemize}
\item
	Let $A$ be a weak automaton. Then all strongly connected components in
	$T_A$ either contain only states with priority $2$ or only states
	with priority $1$.
	Regarding states with priority $1$ as non-accepting and states with
	priority $2$ as
	accepting, $T_A$ can be seen as a weak automaton (and as a \cbuchi
	automaton).
\item
    Recall that the transition relation of $T_A$ is $\Gamma$ and the priority function
    of $T_A$ is $\overline{\Omega}$.
    Let $A$ be limit deterministic, let $q\in Q$ such that $\overline{\Omega}(q)=p$ is even,
    let $q'\in C_q$ and let $a\in\Sigma$.
    We have to show that $|\Gamma(q',a)\cap C_q|\leq 1$. Since $q'\in C_q$,
    $q'$ is reachable from $q$ by a path that visits states with priority at most
    $p$.
    The only case where we have $|\Gamma(q',a)|>1$ is when $q'\in Q_\wedge$.
    Also $\Omega(q)=\overline{\Omega}(q)-1=p-1$
    is odd and $q'$ is reachable in $A$ by a path that visits nodes with
    priority at most $p-1$. Since $A$ is limit-deterministic,
    we have $|\delta(q',a)\cap Q_{\leq p-1}|\leq 1$
    which implies $|\Gamma(q',a)\cap C_q|\leq 1$ since $C_q\subseteq Q_{\leq p-1}$
    by definition of compartments and since $\Gamma(q',a)\subseteq
    \delta(q',a)$ by definition of $\Gamma$.
\end{itemize}
	\qed
\end{proof}

\begin{remark}
    If $A$ is limit linear, then $A$ is weak. By the above lemma, $T_A$
	is a weak automaton with $F$ being the states with priority $2$.
	Given $q\in F$ so that $\Omega(q)=\overline{\Omega}(q)-1=1$, there is
	exactly one path from $q$ to $q$ in $A$, since $A$ is limit linear. Except for the self-loops introduced by
	non-manipulating transitions in $T_A$, there is exactly one path from $q$ to $q$
	in $T_A$ that stays in $F$. We note that the concept of limit-linear word automata can
	be slightly extended to accommodate self-loops,
	using a notion of \emph{synchronizing}
	transitions in such a way that the method from~\autoref{lem:limitlinear}
	can be employed, obtaining the same complexity result.
	For brevity, we omit the technical details here and
	refer to~\cite{Hausmann18} instead.
\end{remark}

By using the bespoke determinization and history-determinization
constructions stated above we achieve below better complexity bounds.

\begin{corollary}\label{cor:treeemptiness}
Let $A$ be an alternating parity tree automaton reading an alphabet of size $m$,
with $n$ states and rank $k$.
Depending on the structure of $A$, the complexity of emptiness checking for $A$ is as follows (where $s=(nm+1)\cdot 2^n$).

\begin{tabbing}
\hspace*{5pt}\= \hspace*{19em} \= \kill 
    \> -- If $A$ is limit-linear: \> \paritytime{s \cdot n^2\cdot 2^n}{2} \\
    \> -- If $A$ is limit-deterministic and weak:  \> \paritytime{s\cdot n\cdot 2^n}{2} \\
    \> -- If $A$ is weak: \> \paritytime{s \cdot 3^n}{2} \\
    \> -- If $A$ is limit-deterministic: \> \paritytime{s\cdot e\cdot(nk)!}{2nk} \\
    \> -- In any case: \> \paritytime{s \cdot
    \mathcal{O}((n)!^{2k})}{2nk}
\end{tabbing}

\end{corollary}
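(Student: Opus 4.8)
The plan is to prove all five bounds by a single recipe that assembles three ingredients which are already in place: the reduction of non-emptiness of $A$ to solving the parity game $G^*_A$, the structural dictionary relating $A$ to its tracking automaton $T_A$, and the tailor-made (history-)determinization constructions of Section~\ref{section:word automata}. In each case one reads off the structure of $T_A$, applies the matching construction to obtain a (history-)deterministic word automaton $H_A$ with $L(H_A)=L(T_A)$ of controlled size and rank, and substitutes the parameters of $H_A$ into the node and priority counts of $G^*_A$.

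First I would fix the game. By Theorems~\ref{theorem:simulation} and~\ref{theorem:game translation} (and the resulting Emptiness corollary), $A$ is non-empty if and only if \pdiam wins $G^*_A$ from $(\{q_0\},t_0)$, so it suffices to bound the size and number of priorities of $G^*_A$. Writing $H_A=(\Sigma_A,T,t_0,\rho,\Omega')$, the definition of $G^*_A$ gives node set $(V\times T)\cup(\Sigma_A\times V\times T)$ and priority function $\Omega'(\cdot)+1$; a mild refinement of the count in Corollary~\ref{cor:treeautempt} that also charges the letter-carrying intermediate nodes then shows that $G^*_A$ has at most $s\cdot|T|$ nodes, where $s=(nm+1)\cdot 2^n$, and exactly as many priorities as $H_A$ --- in particular only the two priorities $1,2$, i.e., a \buchi game, whenever $H_A$ is a \cbuchi automaton. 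Thus everything reduces to bounding $|T|$ and the rank of $H_A$ in each case.

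Next I would invoke the structural dictionary: by the preceding lemma, if $A$ is weak then $T_A$ is a weak (hence \cbuchi) word automaton and if $A$ is limit-deterministic then $T_A$ is limit-deterministic; and by the preceding remark, if $A$ is limit-linear then $T_A$ is limit-linear (up to harmless self-loops, which the circle method tolerates). In all cases $T_A$ has $n$ states and rank $k$. The five cases then follow by choosing the right construction for $H_A$. If $A$ is limit-linear, the circle method (Lemma~\ref{lem:limitlinear}) turns $T_A$ into a deterministic (hence history-deterministic) \cbuchi automaton with at most $n^2\cdot 2^n$ states, yielding \paritytime{s\cdot n^2\cdot 2^n}{2}. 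If $A$ is limit-deterministic and weak, $T_A$ is a limit-deterministic \cbuchi automaton and the focus method (Lemma~\ref{lem:focus}) gives a history-deterministic \cbuchi automaton with $(|F|+1)\cdot 2^{|Q|}\le n\cdot 2^n$ states, yielding \paritytime{s\cdot n\cdot 2^n}{2}. If $A$ is only weak, Miyano--Hayashi (Lemma~\ref{lem:miyanohayashi}) gives a deterministic \cbuchi automaton with $3^n$ states, yielding \paritytime{s\cdot 3^n}{2}. If $A$ is limit-deterministic, $T_A$ is a limit-deterministic parity automaton of rank $k$, so I would first apply Lemma~\ref{lem:parity2buchi} --- which, by the lemma immediately following it, preserves limit determinism --- to obtain a limit-deterministic \buchi automaton with $\mathcal{O}(nk)$ states, and then the permutation method (Lemma~\ref{lem:perm}) to obtain a deterministic parity automaton with $e\cdot(nk)!$ states and $2nk$ priorities, yielding \paritytime{s\cdot e\cdot(nk)!}{2nk}. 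Finally, with no assumption on $A$, $T_A$ is a nondeterministic parity automaton with $n$ states and rank $k$, and parity determinization (Lemma~\ref{lem:schewevarghese}) gives a deterministic parity automaton with $\mathcal{O}((n)!^{2k})$ states and at most $2nk$ priorities, yielding \paritytime{s\cdot\mathcal{O}((n)!^{2k})}{2nk}. In every case $H_A$ is (history-)deterministic and equivalent to $T_A$, so Theorem~\ref{theorem:game translation} applies and the parity games constructed above indeed decide emptiness of $A$.

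The hard part is not conceptual but lies in two pieces of bookkeeping. The first is the limit-deterministic parity case: the permutation method needs a \buchi automaton as input, so $T_A$ must be routed through the parity-to-\buchi translation first, and one must verify that limit determinism is not destroyed along the way (precisely the content of the lemma following Lemma~\ref{lem:parity2buchi}) before composing the two size estimates. The second is pinning down the multiplicative factor $s=(nm+1)\cdot 2^n$: beyond the $(m+1)\cdot 2^n$ configuration nodes of the strategy arena one must count the $\Sigma_A$-labelled intermediate nodes of $G^*_A$, and it is this reachable-node count --- rather than any subtlety about the word automata --- that contributes the extra factor $n$. Everything else is a routine substitution into the node and priority counts of $G^*_A$.
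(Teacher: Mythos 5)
Your proposal is correct and follows exactly the route the paper intends: the paper states this corollary without proof as an immediate consequence of the emptiness/game-translation theorems, the lemma relating the structure of $A$ to that of $T_A$, and the five bespoke (history-)determinization constructions, and your case analysis assembles precisely these ingredients with the right size and rank substitutions (including the necessary detour through Lemma~\ref{lem:parity2buchi} and its limit-determinism preservation in the limit-deterministic parity case). No gaps.
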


\begin{remark}
Given a formula $\varphi$, the automaton $A(\varphi)$ makes very limited use of the alphabet $\Sigma$
(in fact, it is only used to check for satisfaction of propositional atoms).
For satisfiability checking, the guessing and memorizing
of letters in the emptiness game $G^*_A$ can hence be avoided by letting
\pbox immediately win all nodes whose state component $s\subseteq Q$ in the
strategy arena contains  $\{p,\neg p\}$ for some atom $p$. Furthermore,
the state component $s\subseteq Q$ of nodes in the strategy arena is always
contained in the label of states of the (history) deterministic variant
of the tracking automaton $H_A$.
Hence $G^*_A$ can be slightly
adapted to obtain the following complexity bounds, matching previously
known results for guarded formulas.
\end{remark}

\begin{corollary}\label{cor:resultsmu}
Let $\varphi$ be a $\mu$-calculus formula and let $n=|\FL(\varphi)|$,
$k=\mathsf{ad}(\varphi)$. Then the time complexity of deciding satisfiability of $\varphi$
is as follows.

\begin{tabbing}
\hspace*{5pt}\= \hspace*{19em} \= \kill 
    \> -- If $\varphi$ is limit-linear: \> \paritytime{n^2\cdot 2^n}{2} \\
    \> -- If $\varphi$ is aconjunctive and alternation-free:  \> \paritytime{n\cdot 2^n}{2} \\
    \> -- If $\varphi$ is alternation-free: \> \paritytime{3^n}{2} \\
    \> -- If $\varphi$ is aconjunctive: \> \paritytime{e\cdot(nk)!}{2nk} \\
    \> -- In any case: \> \paritytime{\mathcal{O}((n)!^{2k})}{2nk}
\end{tabbing}

Let $\varphi$ be satisfiable. Then $\varphi$ has a model of size $2^{O(nk)\log n}$,
and of size $3^n$ if $\varphi$ is alternation-free.
\end{corollary}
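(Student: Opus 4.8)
The plan is to combine the formula-to-automaton translation of Theorem~\ref{theorem:mucalculus2apt} with the emptiness-game machinery, instantiating the generic bound of Corollary~\ref{cor:treeemptiness} according to the fragment that $\varphi$ lies in. First I would apply Theorem~\ref{theorem:mucalculus2apt} to obtain from $\varphi$ (clean, guarded, with $|\FL(\varphi)|=n$ and $\mathsf{ad}(\varphi)=k$) an alternating parity tree automaton $A(\varphi)$ with at most $n+2$ states and rank $k+1$ recognising exactly the models of $\varphi$, so that satisfiability of $\varphi$ is equivalent to non-emptiness of $A(\varphi)$. The structural lemma of Section~\ref{section:tree automata} then tells us that $A(\varphi)$ is limit linear, weak, limit deterministic, or limit deterministic and weak, respectively, whenever $\varphi$ is limit linear, alternation-free, aconjunctive, or both alternation-free and aconjunctive; and the lemma of Section~\ref{section:word automata} (together with the remark following it for the limit-linear case) transfers each of these structural properties to the tracking automaton $T_{A(\varphi)}$.

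Second, I would replace the generic bound of Corollary~\ref{cor:treeemptiness} by its specialisation to formula automata described in the remark preceding the statement: because $A(\varphi)$ consults its alphabet only to test atoms, \pbox may be made to win immediately at every strategy-arena node whose state set contains a clash $\{p,\neg p\}$ (eliminating the alphabet-guessing component), and the powerset state component of a strategy-arena node is always a function of the label of the corresponding state of the (history-)determinised tracking automaton $H_{A(\varphi)}$ (so the two need not be stored separately). After these two adaptations the emptiness game $G^*_{A(\varphi)}$ has size essentially $|H_{A(\varphi)}|$ and as many priorities as $H_{A(\varphi)}$, and one checks, by inspecting the proof of Theorem~\ref{theorem:simulation}, that it still exactly captures non-emptiness of $A(\varphi)$.

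Third, I would plug in the appropriate construction for $H_{A(\varphi)}$ in each case. If $\varphi$ is limit linear, $T_{A(\varphi)}$ is (essentially) limit-linear co-\buchi, so the circle method of Lemma~\ref{lem:limitlinear} gives a deterministic co-\buchi automaton of size $n^2\cdot 2^n$, whence a \buchi game of size $n^2\cdot 2^n$. If $\varphi$ is alternation-free and aconjunctive, $T_{A(\varphi)}$ is limit-deterministic co-\buchi, so the focus method of Lemma~\ref{lem:focus} gives a history-deterministic co-\buchi automaton of size $n\cdot 2^n$, whence a \buchi game of size $n\cdot 2^n$. If $\varphi$ is alternation-free, $T_{A(\varphi)}$ is nondeterministic co-\buchi, so Miyano--Hayashi (Lemma~\ref{lem:miyanohayashi}) gives a deterministic co-\buchi automaton of size $3^n$, whence a \buchi game of size $3^n$. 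If $\varphi$ is aconjunctive, $T_{A(\varphi)}$ is limit-deterministic parity of rank $\mathcal{O}(k)$; Lemma~\ref{lem:parity2buchi} (which preserves limit determinism) turns it into a limit-deterministic \buchi automaton of size $\mathcal{O}(nk)$ and the permutation method of Lemma~\ref{lem:perm} then yields a deterministic parity automaton of size $e\cdot(nk)!$ with $2nk$ priorities, whence a parity game with these parameters. In the unrestricted case one either reduces $T_{A(\varphi)}$ to \buchi and applies Safra--Piterman (Lemma~\ref{lem:sp}) or applies parity determinisation (Lemma~\ref{lem:schewevarghese}) directly, in either case obtaining a deterministic parity automaton of size $\mathcal{O}(n!^{2k})$ with $2nk$ priorities and hence a parity game with these parameters. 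For the model-size statement I would invoke the remark following Corollary~\ref{cor:treeautempt}: whenever $H_{A(\varphi)}$ is a deterministic automaton, a winning \pdiam-strategy in $G^*_{A(\varphi)}$ induces a Kripke model whose worlds are the $V_\Box$-nodes paired with states of $H_{A(\varphi)}$; this gives a model of size $3^n$ when $\varphi$ is alternation-free (using Miyano--Hayashi) and, using the deterministic parity automaton of size $\mathcal{O}(n!^{2k})$ in the general case, a model of size $2^n\cdot\mathcal{O}(n!^{2k})=2^{O(nk)\log n}$.

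The main obstacle I expect is bookkeeping rather than a deep argument: one has to match the fragment of $\varphi$, the structure of $A(\varphi)$, the structure of $T_{A(\varphi)}$, and the chosen (history-)determinisation construction precisely enough that the sizes line up with the stated bounds, and one has to track the several priority shifts ($A(\varphi)$ has rank $k+1$, $\overline{\Omega}$ adds one, and the emptiness game adds one more) so that the co-\buchi cases really collapse to two-priority \buchi games and the parity cases stay within $2nk$ priorities. One must also justify carefully that the optimised emptiness game (with the alphabet component removed and the powerset state folded into the label of $H_{A(\varphi)}$) is still correct, which follows by observing that the alphabet is only ever used through the atom tests and that $H_{A(\varphi)}$'s state label already records the relevant set of automaton states.
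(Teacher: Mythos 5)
Your proposal is correct and follows essentially the same route as the paper, which derives this corollary (without an explicit proof) from the chain Theorem~\ref{theorem:mucalculus2apt} $\to$ structural lemmas for $A(\varphi)$ and $T_{A(\varphi)}$ $\to$ the fragment-specific (history-)determinizations $\to$ Corollary~\ref{cor:treeemptiness}, together with the preceding remark that removes the alphabet component and folds the powerset state into the label of $H_{A(\varphi)}$, and the remark after Corollary~\ref{cor:treeautempt} for the model-size bounds. Your accounting of the priority shifts and of the two routes to the unrestricted bound matches the paper's intent.
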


\begin{remark}

In~\cite{FriedmannLange13a}, the authors present
a tableaux-based satisfiability algorithm for unguarded formulas;
unguardedness is handled by an auxilliary tableau rule and by extending the tracking automaton with an additional priority to detect \emph{inactive} traces. Using this approach however,
the tracking automaton for unguarded alternation-free formulas is
(in contrast to our framework) \emph{not} a \cbuchi automaton
and \cbuchi methods for (history)-determinization can not be used to obtain \buchi games
that characterize satisfiability.

Our treatment of aconjunctive and alternation-free formulas employs a focusing method
(\autoref{lem:focus}) to history-determinize limit-deterministic \cbuchi automata.
This generalizes the focus games for CTL~\cite{LangeStirling01} to
the aconjunctive alternation-free $\mu$-calculus and sheds light on the automata theoretic
background of focus games.
\end{remark}

\section{Conclusions}
\label{section:conclusions}

We surveyed the approach to deciding the satisfiability of the modal
$\mu$-calculus through a reduction to alternating parity tree automata
emptiness.
We present the solution to the emptiness of alternating parity tree
automata as a combination of a structural game construction with word
automata for defining the winning condition.
Interestingly the structural game construction remains fixed regardless
of the exact structure of the automaton.
The exact structure, however, greatly affects the properties of the
word automata for the winning condition.
This, in turn, can be exploited to give improved complexity results for
various fragments of the $\mu$-calculus by concentrating on bespoke
word automata conversion constructions.

\bibliographystyle{splncs04}
\bibliography{lib}

\end{document}